\pdfoutput=1
\documentclass{article}

\usepackage{PRIMEarxiv}

\usepackage{subcaption}
\usepackage{natbib}
\usepackage[utf8]{inputenc} 
\usepackage[T1]{fontenc}    
\usepackage{hyperref}       
\usepackage{url}            
\usepackage{booktabs}       
\usepackage{amsfonts}       
\usepackage{nicefrac}       
\usepackage{microtype}      
\usepackage{lipsum}
\usepackage{fancyhdr}       
\usepackage{graphicx}       

\usepackage{hyperref}
\usepackage{longtable}
\usepackage{threeparttable}
\usepackage{multirow}
\usepackage{float}
\usepackage{wrapfig}
\usepackage{url}
\usepackage{booktabs,tabularx}
\usepackage{graphicx}
\usepackage{xspace}
\usepackage{enumitem}
\usepackage{flafter}

\usepackage{amsmath}
\usepackage{amsthm}

\newtheorem{proposition}{Proposition}
\newtheorem{corollary}{Corollary}
\newtheorem{lemma}{Lemma}[section]
\newtheorem{remark}{Remark}

\usepackage{algorithm}
\usepackage{algpseudocode}
\usepackage{comment}
\usepackage{subcaption}
\usepackage{nicefrac}
\usepackage{tikz}
\usetikzlibrary{arrows.meta,positioning}
\usepackage{bbm}

\begin{document}

\newcommand{\modelname}{ALF}
\newcommand{\modelfullnameunderlined}{Spectrally \underline{A}nchored \underline{L}atent \underline{F}low Matching}
\newcommand{\modelfullname}{Anchored Latent Flow}
\title{\modelname: \modelfullnameunderlined{} for Wireless Signal Generation}
\newcommand{\datasetname}{WaveScene\xspace}

\pagestyle{fancy}
\thispagestyle{empty}
\rhead{ \textit{}} 

\fancyhead[LO]{\modelfullnameunderlined}

\title{\modelname: \modelfullnameunderlined{} for Wireless Signal Generation}

\author{
\begin{tabular}{ccc}
Yiyang Li\thanks{Equal contribution (co-first authors)} &
Sai Shankar Narasimhan\footnotemark[1] &
Priam Alataris\thanks{Equal contribution (co-second authors)} \\[1em]
Avi Bagchi\footnotemark[2] &
Kaushik Chowdhury &
Sanjay Shakkottai
\end{tabular}
}

\maketitle
{\renewcommand{\thefootnote}{}\footnotetext{The authors are with The University of Texas at Austin. Email: 
\tt{\{liyiyang, nsaishankar, alataris, aviba, kaushik, sanjay.shakkottai\}@utexas.edu}}}

\begin{abstract}
Wireless time-series generation is useful for emerging applications that will drive the adoption and integration of machine learning tasks within next-generation networks, such as waveform classification in shared spectrum bands and interpretation of the physical world through integrated sensing and communications. Unlike a generic time series, wireless signals have time-frequency duality, where a frequency domain bandwidth constraint implicitly imposes latent structural constraints on the evolution of the time-domain sequence. In this paper, we develop \modelfullname{} (\modelname), a latent space flow model for label-conditioned signal generation that exploits this duality. Our flow model ``reasons'' through a spectrogram, which is a 2-D representation of the signal constructed by applying a windowed Fast Fourier Transform to successive, overlapping signal segments and mapping their frequency-domain power components across time. Specifically, we shape the latent space of the flow model through a training-time auxiliary objective that is supervised by the (latent) spectrogram (i.e. anchoring through the spectrogram). By guiding the flow field to reason through a spectrogram, the generated sequence maintains long-range coherence across time and greatly improves the signal quality at inference. Empirically, \modelname{} exhibits state-of-the-art generation quality and reconstruction error across various signal lengths, waveform classes, signal-to-noise-ratio (SNR) and channel conditions, reaching a train on synthetic, test on real (TSTR) accuracy of 84.5\%, a $1.9\times$ average memory reduction, and $100\times$ average compute speedup over existing models. Our design further enables downstream tasks including style transfer and data augmentation, and has broad utility for the wireless community.

\end{abstract}

\keywords{Flow Matching \and Latent Generative Models \and Spectral Anchoring \and Time-Series Generation \and Radio-Frequency Signals}

\section{Introduction}
Generating a received radio-frequency (RF) transmission under certain propagation environments and signal-to-noise ratios without conducting an over-the-air experiment is challenging. Such generative capabilities enable applications ranging from targeted synthetic data augmentation to creating digital twins of specific environments. Recent progress in diffusion and flow modeling has helped solve similar problems in the image and video domains \citep{ho2020ddpm,ho2022video,lipman2023flow,liu2023rectified}. RF signals, however, possess highly engineered structures that demand precise modeling for trustworthy generation. Longer samples, spectral and phase coherence, and correlations across multiple receive antennas all introduce additional challenges.
\begin{figure}[t]
    \centering
    \includegraphics[width=\linewidth]{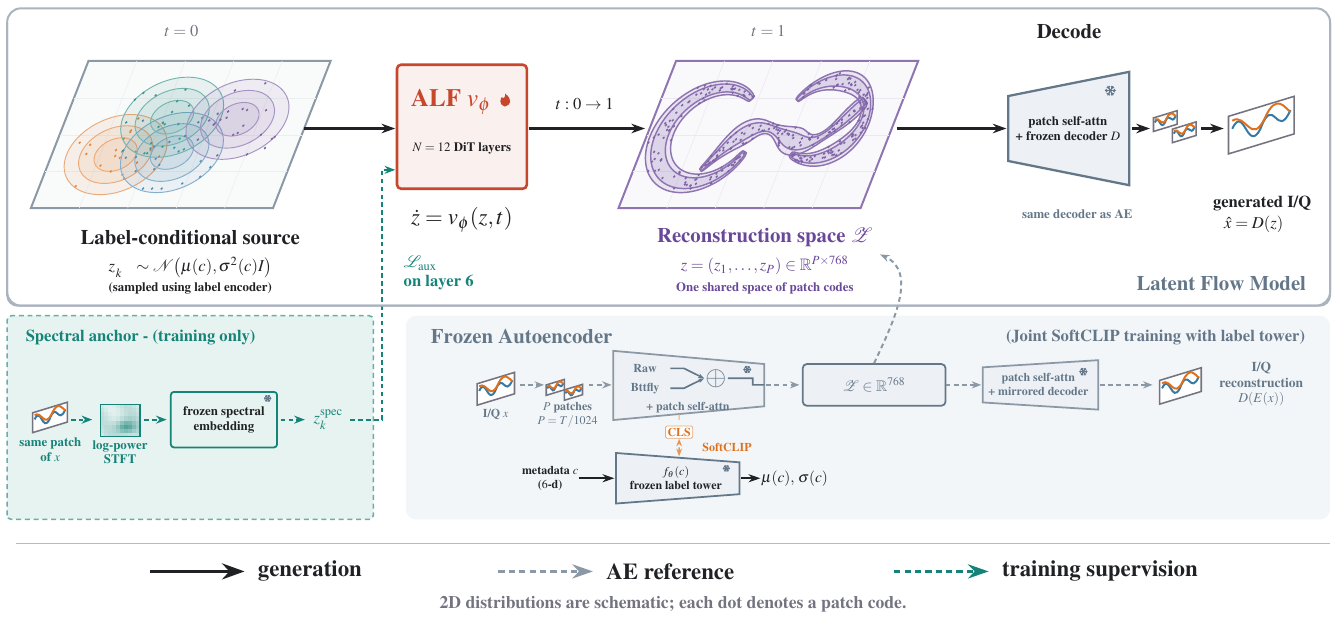}
    \caption{\modelfullname\ (\modelname) pipeline. A contrastively trained autoencoder learns to reconstruct signals while maintaining alignment with feature labels ($\S$~\ref{subsec:stage1}). \modelname{} then transports a multimodal feature space to the reconstruction space by reasoning through the spectrogram ($\S$~\ref{subsec:anchoredflow}).}
     \label{fig:intro-figure}
\end{figure}
RF waveforms are naturally constrained in both the time and frequency domain. For example, regulatory and task-specific bandwidth limitations restrict RF signals to a prescribed range of frequencies \citep{fcc_frequency_allocations_2026}. Due to time-frequency duality, these spectral constraints induce strong structure in the time domain, restricting plausible waveforms to a small subset of the waveform subspace. Additionally, ignoring the spectral structure can produce time series generations that violate spectral constraints or corrupt waveform morphology. Prior work has exploited spectral information through Fourier-based architectures and training objectives \citep{wu2023iclr-timesnet,yuan2024diffusion} and joint time-frequency diffusion for RF signals \citep{chi2024rfdiffusion}. However, prior work does not leverage the frequency domain as an intermediate reasoning space. We hypothesize that supervising the frequency-domain latent can improve coherence while enabling substantially longer RF generation.

Resolving how exactly to incorporate spectral information into the generative learning procedure is therefore a critical design choice. We take inspiration from \textit{anchoring} introduced by \citet{rout2025adlm} for diffusion language modeling. The central idea is to reveal the most informative tokens (anchors) first, as they guide the remaining generation. Accordingly, the space of anchors resides in the same space as the generated output. In the RF domain, however, individual signal samples carry far less global context to exploit than an important text token would. To this end, our key observation is that ``anchors'' need not occupy the same subspace as the final generated time-domain output. Instead, we anchor from the frequency-domain representation to shape the latent space of a flow. Specifically, we posit that learning to predict a compact spectrogram can effectively help guide the generation of time-domain data. \textbf{Our core contributions are:}

\begin{enumerate}[leftmargin=18pt, itemsep=1pt]
    \item We propose \textbf{\modelfullname{}} (\modelname, see Figure~\ref{fig:intro-figure}), a label-conditioned latent flow-matching model for single-input-multiple-output (SIMO) in-phase and quadrature (IQ) generation. \modelname{} transports an aligned, condition-aware signal latent space to a reconstruction-oriented latent space using a learned velocity field. For long-range coherence, we anchor the flow in the frequency domain by supervising the intermediate representation of the flow model to predict spectrogram latents mixed with global information (Appendix~\ref{app:theory} for a formal treatment).
    \item We introduce \textbf{WaveScene}, a large-scale SIMO dataset with 7.82 million IQ recording-label pairs across various waveform classes, propagation environments, and sequence lengths. Experiments on unseen channel realizations demonstrate competitive representation learning, long-sequence generation, and controlled signal editing.
    \item We show that spectrogram anchoring reshapes the flow's intermediate representations and improves generation over matched unanchored ablations. Overall, \modelname{} achieves a $38.02\%$ relative improvement in Train on Synthetic, Test on Real (TSTR) waveform classification accuracy compared to the closest baseline. We further demonstrate computational speedups by $100\times$, memory reductions by $1.9\times$, native style transfer, and autoregressive capabilities.
\end{enumerate}

\section{Related Work}
\label{sec:rel_work}

We describe the related work below, deferring a more detailed exposition to Appendix~\ref{ap:extended-work}.

\textbf{Diffusion Models} have emerged as a leading approach to generative modeling for images~\citep{sohl2015deep, ho2020ddpm,song2021ddim}, audio~\citep{kong2021diffwave}, and video~\citep{ho2022video}. More recent work has applied diffusion models to time-series tasks, including forecasting \citep{rasul2021timegrad,yan2021scoregrad,bilovs2023modeling,kollovieh2023predict,li2022generative} and imputation \citep{tashiro2021csdi,alcaraz2022diffusion,yuan2024diffusion}, using both conditional formulations and guidance-based approaches. Conditional generation of time-series has been developed in \citet{alcaraz2023diffusion}, \citet{narasimhan2024time}, and \citet{narasimhan2025constrained}. However, our experiments (see Table \ref{tab:tstr-headline}) show that these state-of-the-art (SOTA) approaches struggle to generate long-range coherent IQ waveforms, which are typically required in the RF domain. \textbf{Our key hypothesis} is that these approaches \textit{lack supervision of intermediate representations pertaining to the global frequency domain}, and therefore generate sub-optimal results. Alongside conditioning and guidance, recent work in the language domain introduces \emph{anchoring}, in which the model predicts important intermediate information that greatly reduces the entropy of the remaining data to be generated~\citep{rout2025adlm}. Similarly, representations have been shaped in the image domain using a clean image-target in \citep{yu2025repa}. In light of these, we introduce cross-domain spectral anchoring, in which intermediate model representations are supervised using a separately learned frequency-domain representation so that RF-domain-specific characteristics are used as first-class inductive biases.

\textbf{RF-Specific Generation Methods} must learn strict amplitude, frequency, and phase information that determine a wireless signal's spectral and temporal identity, as well as the characteristics of the propagation channel through which it travels. This specific structure motivates physics-informed approaches for constructing and evaluating generative models for the wireless domain \citep{phys-informed-generation, baur2025evaluation}. A branch of prior work focuses on modeling the wireless channel, either by estimating it from received pilot measurements and known transmitted pilots using Wasserstein GANs \citep{balevi2020highdimensionalchannelestimation}, score-based models \citep{arvinte2022score,arvinte2023mimo}, or diffusion priors \citep{zhou2026generativediffusionmodelshigh}, or by directly synthesizing channel realizations using location-conditioned diffusion models \citep{lee2025generatinghighdimensionaluserspecific} or a VAE coupled to a parametric physics-based geometric channel model \citep{wagle2025physicsinformedgenerativeapproacheswireless}. Generating I/Q sequences, however, is comparatively understudied and requires extra modeling for the joint variability of the transmitted waveform, propagation channel, and receiver noise over long sequences. We adapt RF-Diffusion, Time Weaver, and WaveStitch as baselines for this task \citep{chi2024rfdiffusion,narasimhan2024time,shankar2025wavestitch}.

\vspace{-8pt}
\section{Background}
\label{sec:background}

Consider a complex-valued baseband signal on the receiver side,
$x(\tau)=I(\tau)+jQ(\tau)$, with discrete samples $x[n]=I[n]+jQ[n]$ for
$n=0,\ldots,N_{\mathrm{samples}}-1$. For $N_{\mathrm{ant}}$ receive antennas,
stacking their synchronized sample sequences gives
$\mathbf{X}_{\mathrm{IQ}}\in
\mathbb{C}^{N_{\mathrm{ant}}\times N_{\mathrm{samples}}}$, which we refer to as an IQ waveform. Each $\mathbf{X}_{\mathrm{IQ}}$ is paired with heterogeneous conditioning information
$\mathbf{Y}$, which may contain continuous or categorical quantities. After mapping each component to a fixed-dimensional real
representation, such as a one-hot encoding for categorical variables, we write
$\mathbf{Y}\in\mathbb{R}^{N_{\mathrm{labels}}}$. A dataset is then
$\mathcal{D}=\{(\mathbf{X}_{\mathrm{IQ}}^i,\mathbf{Y}^i)\}_{i=1}^{N_{\mathrm{pairs}}}$,
where $N_{\mathrm{pairs}}$ is its size. For learning, we treat these pairs as independent and identically distributed (i.i.d.) samples from the empirical joint distribution $p_{\mathrm{data}}(\mathbf{X}_{\mathrm{IQ}},\mathbf{Y})$.

\textbf{Our primary objective is to} learn a conditional generation model such that the IQ waveforms generated by the model with input label $\mathbf{Y}$ distributionally match the conditional distribution $p(\mathbf{X}_{\mathrm{IQ}} | \mathbf{Y})$ for all valid $\mathbf{Y} \in \mathbb{R}^{N_{\mathrm{labels}}}$ (Figure~\ref{fig:intro-figure}). To this end, we build on top of \textbf{Rectified Flow} \citep{liu2023rectified}, which learns a velocity field that transports samples from a tractable source distribution to the target data distribution. Let $\mathbf{X}_0\sim p_{\mathrm{src}}$ denote a sample from a source distribution, such as $\mathcal{N}(\mathbf{0},\mathbf{I})$, and let $\mathbf{X}_1\sim p_{\mathrm{data}}$ denote a sample from the target data
distribution. Rectified flow defines the straight-line interpolation, $\mathbf{X}_t = (1-t)\mathbf{X}_0 + t\mathbf{X}_1, t\in[0,1]$, whose velocity, conditioned on the source-target pair $(\mathbf{X}_0, \mathbf{X}_1)$, is constant along the path $u_t = \frac{d\mathbf{X}_t}{dt} = \mathbf{X}_1-\mathbf{X}_0.$ We train a conditional
time-dependent velocity field $v_\theta(\mathbf{X}_t,t,\mathbf{Y})$ using
\vspace{-5pt}
\begin{equation}
\mathcal{L}_{\mathrm{RF}}
=
\mathbb{E}_{\substack{
(\mathbf{X}_1,\mathbf{Y})\sim p_{\mathrm{data}},\,
\mathbf{X}_0\sim p_{\mathrm{src}},\\
t\sim\mathcal{U}[0,1]}}
\left[
\left\|
v_\theta(\mathbf{X}_t,t,\mathbf{Y})
-
(\mathbf{X}_1-\mathbf{X}_0)
\right\|_2^2
\right].
\end{equation}
\vspace{-5pt}

Here, we uniformly sample the time $t$, \emph{i.e.,} $t \sim \mathcal{U}[0,1]$. Although each sampled pair defines a constant straight-line velocity, the regression objective learns the conditional mean transport induced by all such paths. At inference, we draw $\mathbf{X}_0\sim p_{\mathrm{src}}$ and integrate $d\mathbf{X}_t/dt=v_\theta(\mathbf{X}_t,t,\mathbf{Y})$ from $t=0$ to $t=1$ to
obtain a sample from the learned conditional distribution.

\section{\datasetname Dataset}
\label{sec:dataset}

\textbf{\datasetname} is a large-scale synthetic RF corpus of receiver-side baseband
IQ waveforms generated through geometry-based stochastic channels and paired
with respective labels. It covers 32 waveform classes and 1,066 parameterized variants spanning generic modulations, standardized communication protocols, and radar emissions. Each recording is a 4-antenna SIMO post-channel waveform captured by a horizontal uniform linear array with half-wavelength spacing ($N_{\mathrm{ant}}=4$), with $N_{\mathrm{samples}}\in\{2048,4096,8192,16384,32768\}$. The corpus covers four propagation environments: urban microcell (UMi), urban macrocell (UMa), rural macrocell (RMa), and indoor hotspot (InH). Channel generation follows the geometry-based stochastic models specified in 3GPP TR~38.901 \citep{3gpp_tr38901}. The propagation states are line-of-sight (LOS), non-line-of-sight (NLOS), and
outdoor-to-indoor (O2I), with frequencies determined by the scenario-specific
channel models.
\vspace{-10pt}
\newcolumntype{Y}{>{\raggedright\arraybackslash}X}
\begin{table}[H]
\centering
\scriptsize
\setlength{\tabcolsep}{2pt}
\caption{Comparison with representative public IQ datasets.}
\label{tab:dataset-comparison}
\begin{tabularx}{\textwidth}{@{}lXcccccc@{}}
\toprule
Dataset &
Coverage &
Size &
Length &
Sampling rate &
Multi-Rx &
Varying $f_c$ &
Channel geometry\\
\midrule

RadioML 2018 \citep{radioml2018} &
24 classes$^\dagger$ &
2.56M &
1,024 &
N/A &
$\times$ &
$\times$ &
$\times$ \\

Sig53 \citep{sig53} &
53 classes$^\dagger$ &
6.51M &
4,096 &
N/A &
$\times$ &
$\times$ &
$\times$ \\

IQFM \citep{mashaal2026iqfm} &
7 classes$^\dagger$ &
1.12M &
256 &
1, 10 MS/s &
$\checkmark$ &
$\times$ &
$\times$ \\

EM-134K \citep{merlin} &
10 subsets$^\ddagger$  &
134K &
1,024 &
20 MS/s &
$\times$ &
$\times$ &
$\times$ \\

\textbf{WaveScene} &
\textbf{32 classes} &
\textbf{7.82M} &
\textbf{2,048--32,768} &
\textbf{96 kS/s--40 MS/s} &
$\boldsymbol{\checkmark}$ &
$\boldsymbol{\checkmark}$ &
$\boldsymbol{\checkmark}$ \\

\bottomrule
\end{tabularx}

\vspace{2pt}
\parbox{\textwidth}{\scriptsize
$^\dagger$ Modulation or simple waveform classes only; no protocol or radar waveform families.
$^\ddagger$ EM-134K is an instruction-tuning corpus spanning ten heterogeneous signal/task subsets, rather than mutually exclusive waveform classes.
}
\end{table}
\vspace{-10pt}

The corpus contains approximately 7.82M recording-label pairs, including a 7.52M shared-channel set and a 300K channel-held-out set. Of this shared-channel set, 5.6M pairs form the development set, divided deterministically into 95\% training and 5\% validation data; the remaining 1.9M disjoint recordings form the seen-channel test set. We additionally construct the 300K channel-held-out test set whose channel realizations are unseen. Training uses 6 complementary labels. Waveform class (32), scene type (4), and propagation state (3) are categorical, while spectral occupancy, mean
per-antenna pre-combining SNR, and RMS delay spread are continuous ($N_{\mathrm{labels}} = 42$). Appendix~\ref{app:dataset} details the generation models and methods used.

\section{\modelfullname}
\label{sec:methods}

In this section, we introduce \textbf{\modelfullname} (\modelname), a latent-space flow-matching model for efficient and coherent long-range receiver-side baseband IQ waveform generation conditioned on environment/propagation labels. As discussed in Section~\ref{sec:rel_work}, we design \modelname{} by treating the time-frequency inter-relationship in the RF-domain as a first-class inductive bias; see Appendix~\ref{app:theory} for a formal justification. In particular, extending existing temporal generative models (Time Weaver, RF-Diffusion and WaveStitch) for long RF IQ sequences exposes two key challenges:

\begin{itemize}[leftmargin=18pt, itemsep=1pt]

    \item \textbf{Poor Inference Efficiency.} Existing generators typically operate in the IQ space alone. For long-range IQ sequence generation, this results in high memory and inference costs. Particularly, this becomes evident for IQ waveforms of length $32{,}768$ samples (Table~\ref{tab:inf-speed}). Moreover, 32k samples are still short; for example, a sequence of this length is roughly 10\% of a single 20MHz LTE frame with a 30MHz sample rate.

    \item \textbf{Weak Long-Range Spectral Coherence.}
    The received IQ waveform evolution over time is globally constrained by spectral occupancy and the time-frequency structure. However, existing denoising or generative
    models do not explicitly constrain the model's representations to preserve this global structure. Consequently, the errors in IQ space can accumulate over time and produce globally inconsistent spectral behavior.
\end{itemize}

\modelname{} addresses these two challenges through a two-stage training setup. 

\begin{figure}[t]
\centering
\includegraphics[width=\linewidth]{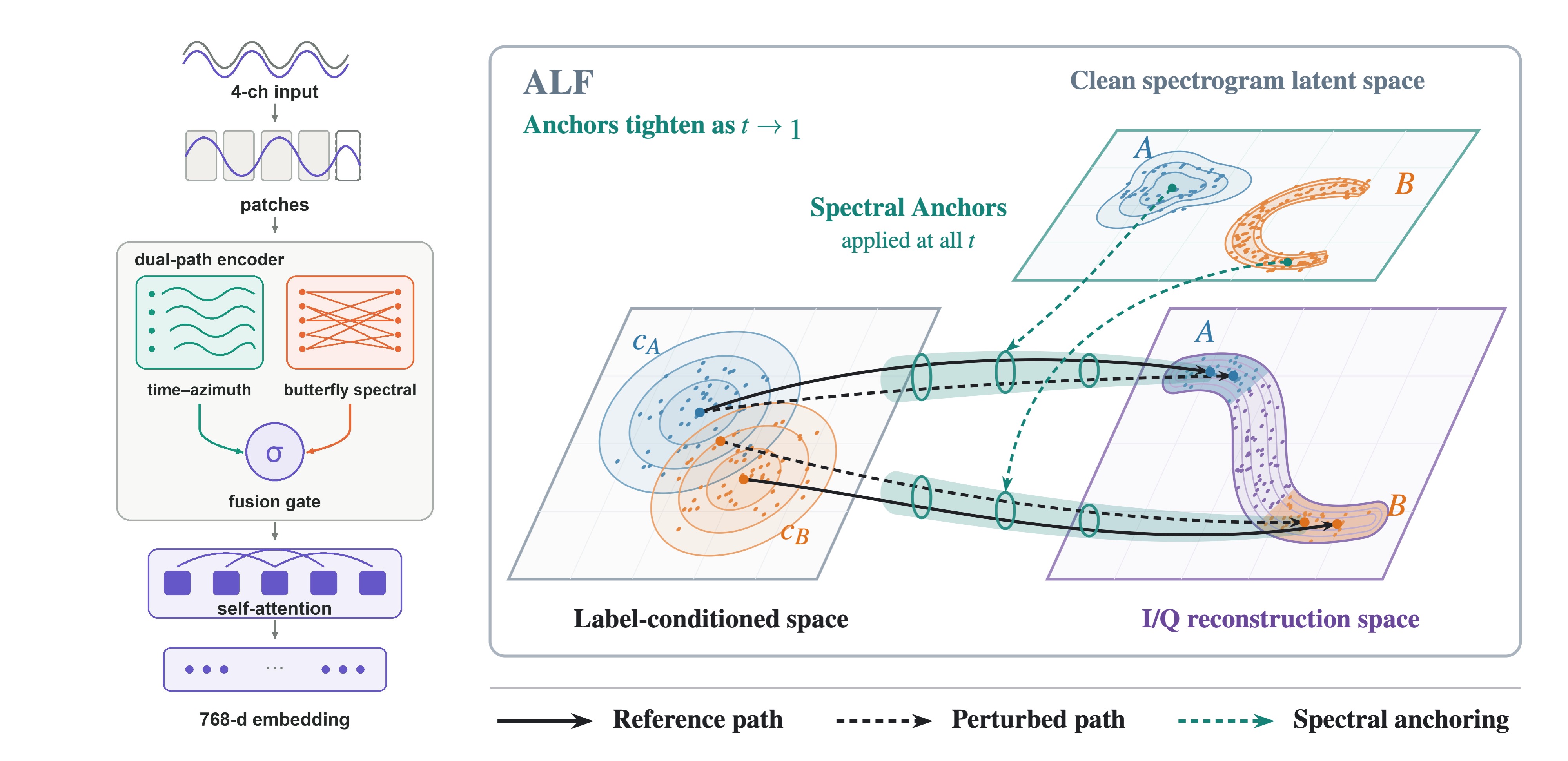}
    \caption{(Left) Fused dual-branch architecture with \textit{ButterflyNet} \citep{li2020butterflynetoptimalfunctionrepresentation, xu2020butterflynet2}. (Right) Spectral anchoring guides learned flow with frequency-domain latents.}
        \label{fig:alf}
    \end{figure}

\subsection{Stage 1: Label-Aligned Reconstruction-Aware Encoding}
\label{subsec:stage1}

Rather than performing generation directly in IQ space, we operate in a compact \textit{latent space representation}. More specifically, we learn a concise sequence of patch-level IQ latents that can be decoded back to the original waveform. For these patch-level latents, we use cross-patch attention to capture record-level context while preserving local waveform structure, and a SoftCLIP-style objective~\citep{gao2024softclip} to align a record-level summary with the conditioning labels.

Given a receiver-side baseband IQ waveform as described in Section \ref{sec:dataset}, \textit{i.e.,} $\mathbf{X}_{\mathrm{IQ}}\in\mathbb{C}^{4\times N_{\mathrm{samples}}}$, we expose its real and imaginary components and divide it into $P=\lceil N_{\mathrm{samples}}/N_{\mathrm{patch}}\rceil$ non-overlapping patches, where $N_{\mathrm{patch}}$ is the patch length. Each patch is processed by a convolutional raw-IQ branch and a ButterflyNet-inspired \citep{li2020butterflynetoptimalfunctionrepresentation, xu2020butterflynet2} spectral branch. A learned gate fuses the
two representations, and after attention across the patches (global patch mixing), we get $\mathbf{Z}_{\mathrm{IQ}}\in\mathbb{R}^{P\times768}$. At $N_{\mathrm{patch}}=1024$, this corresponds to $\sim11 \times$ compression per patch. Concisely, we represent this as $\mathbf{Z}_{\mathrm{IQ}} = \theta_{\mathrm{enc-IQ}}(\mathbf{X}_{\mathrm{IQ}})$. 
An IQ decoder parameterized by 
$\theta_{\mathrm{dec\text{-}IQ}}$ mixes the latent sequence across patches and
reconstructs the four-antenna waveform as
$\widehat{\mathbf{X}}_{\mathrm{IQ}}
=\theta_{\mathrm{dec\text{-}IQ}}(\mathbf{Z}_{\mathrm{IQ}})$, while a
\texttt{[CLS]} aggregate token further converts $\mathbf{Z}_{\mathrm{IQ}}$ into a
single record-level embedding $\mathbf{Z}_{\mathrm{IQ\text{-}CLS}}\in\mathbb{R}^{768}$. Meanwhile, a label encoder $\theta_{\mathrm{lbl}}$ maps the 6 labels of $\mathbf{X}_{\mathrm{IQ}}$ to a corresponding 768-dimensional embedding
$\mathbf{Z}_{\mathrm{lbl}}=\theta_{\mathrm{lbl}}(\mathbf{Y})$. 

The contrastive objective therefore operates on the normalized IQ embedding
$\mathbf z_i=\mathbf{Z}^{i}_{\mathrm{IQ\text{-}CLS}}/
\|\mathbf{Z}^{i}_{\mathrm{IQ\text{-}CLS}}\|_2$
and the normalized label embedding
$\mathbf c_j=\mathbf{Z}^{j}_{\mathrm{lbl}}/
\|\mathbf{Z}^{j}_{\mathrm{lbl}}\|_2$.
Their similarity is measured by the cosine logit
$\mathbf z_i^\top\mathbf c_j/\tau$. The symmetric soft-target CLIP loss
$\mathcal{L}_{\mathrm{CLIP}}$
\citep{radford2021clip,gao2024softclip} assigns greater weight to pairs with matching categorical labels and nearby continuous attributes, allowing semantically similar records to act as partial positives rather than treating every non-paired record as an equally negative example. The reconstruction objective $\mathcal{L}_{\mathrm{rec}}$, on the other hand, compares reconstructed
$\widehat{\mathbf{X}}^{i}_{\mathrm{IQ}}$ with
$\mathbf{X}^{i}_{\mathrm{IQ}}$ across complementary signal properties. Finally, a structural objective $\mathcal{L}_{\mathrm{struct}}$ is used to avoid learned embedding from collapsing. 

Overall, we jointly train \(\mathcal{L}_{\mathrm{S1}}
=\mathcal{L}_{\mathrm{rec}}
+\lambda_{\mathrm{CLIP}}(e)\mathcal{L}_{\mathrm{CLIP}}
+\lambda_{\mathrm{struct}}(e)\mathcal{L}_{\mathrm{struct}}\). During training, $\mathcal{L}_{\mathrm{rec}}$ and $\mathcal{L}_{\mathrm{struct}}$ are active throughout training. After an initial
reconstruction-focused warm-up, $\mathcal{L}_{\mathrm{CLIP}}$ is introduced, and $\lambda_{\mathrm{CLIP}}(e)$, $\lambda_{\mathrm{struct}}(e)$ are smoothly increased over subsequent epochs. Further implementation details and evaluations are provided in Appendix~\ref{app:encoder_architecture}. At the end of this stage, $\theta_{\mathrm{enc-IQ}}$, $\theta_{\mathrm{dec-IQ}}$, and $\theta_{\mathrm{lbl}}$ are frozen, fixing the latent space for stage 2 (~\ref{subsec:anchoredflow}).

\subsection{Stage 2: Spectrally Anchored Latent Flow Matching}
\label{subsec:anchoredflow}
To explicitly incorporate RF time-frequency structure into generation, we introduce \textit{spectral anchoring}. During flow matching (see Section~\ref{sec:background}), an intermediate representation of the velocity network is supervised to predict the corresponding spectrogram latent. Although the flow generates in the IQ latent space, our key insight is to constrain its intermediate representations, as in Figure~\ref{fig:alf}. This encourages the generated waveform to follow the target's global spectral structure, such as occupied bandwidth, at sampled flow times. In Appendix~\ref{app:theory}, we present a stylized model that shows spectral anchoring exposes the low-dimensional structure induced by a bandwidth constraint, reducing the parameter complexity of the feasible signal distribution relative to an unstructured time-domain parameterization.

Let $\mathbf{Z}_1$ denote the standardized IQ latent corresponding to the IQ waveform $\mathbf{X}_{\mathrm{IQ}}$. Since our objective is to learn the label-conditioned data distribution $p_{\mathrm{data}}(\cdot\mid\mathbf{Y})$, rather than initializing the flow from the same standard normal distribution for every label, we use the frozen probabilistic label encoder from Stage 1, $(\boldsymbol{\mu}_{\mathrm{lbl}}(\mathbf{Y}),\log\boldsymbol{\sigma}_{\mathrm{lbl}}^2(\mathbf{Y}))=\theta_{\mathrm{lbl}}(\mathbf{Y})$, to define the label-conditioned source distribution (implementation in \ref{sec:gaussian_prior}). For a recording containing $P$ patches, we independently sample $\mathbf{Z}_0^{(p)}\sim\mathcal{N}\!\left(\boldsymbol{\mu}_{\mathrm{lbl}}(\mathbf{Y}),\operatorname{diag}(\boldsymbol{\sigma}_{\mathrm{lbl}}^2(\mathbf{Y}))\right)$ for $p=1,\ldots,P$. These samples form $\mathbf{Z}_0$, and the conditional velocity network $v_{\theta}^{\mathrm{cond}}(\mathbf{Z}_t,t,\mathbf{Y})$ learns to transport them to $\mathbf{Z}_1$.

Importantly, to anchor the flow with frequency-domain information, we separately train a spectral autoencoder on the level-carrying log-power spectrogram associated with each IQ waveform (implementation in \ref{sec:spectral_anchor}). This autoencoder is distinct from the ButterflyNet-inspired branch of the Stage-1 IQ encoder. Its post-mixing latent sequence, denoted $\mathbf{Z}_{\mathrm{spec}}$, retains both patch-level spectral content and recording-level context. We then freeze the spectral autoencoder and independently standardize its output to obtain $\widetilde{\mathbf{Z}}_{\mathrm{spec}}$, which serves as the target for the anchoring objective.

\begin{table}[t]
\centering
\caption{TSTR/TRTS balanced accuracies for downstream objectives, by length (k = num classes)}
\label{tab:tstr-headline}
\resizebox{\textwidth}{!}{%
\setlength{\tabcolsep}{3pt}%
\begin{tabular}{@{}ll|cc|cc|cc|cc@{}}
\toprule
& & \multicolumn{2}{c|}{Waveform (k=32)} & \multicolumn{2}{c|}{Bandwidth (k=9)} & \multicolumn{2}{c|}{Sample Rate (k=13)} & \multicolumn{2}{c}{Carrier Freq (k=30)} \\
\cmidrule(lr){3-4}\cmidrule(lr){5-6}\cmidrule(lr){7-8}\cmidrule(lr){9-10}
Length & Method & TSTR & TRTS & TSTR & TRTS & TSTR & TRTS & TSTR & TRTS \\
\midrule
\multirow[c]{4}{*}{2,048}
& WaveStitch & 0.02 & 0.04 & 0.11 & 0.11 & 0.08 & 0.09 & 0.04 & 0.04 \\
& RF Diffusion$^\dagger$ & 0.03 & 0.03 & 0.11 & 0.11 & 0.08 & 0.08 & 0.04 & 0.04 \\
& Time Weaver & 0.59 & 0.33 & 0.39 & 0.26 & 0.39 & 0.21 & 0.24 & 0.16 \\
& \textbf{ALF (ours)} & \textbf{0.84} & \textbf{0.83} & \textbf{0.63} & \textbf{0.56} & \textbf{0.70} & \textbf{0.68} & \textbf{0.27} & \textbf{0.30} \\
\midrule
\multirow[c]{3}{*}{4,096}
& WaveStitch & 0.03 & 0.03 & 0.12 & 0.11 & 0.07 & 0.08 & 0.03 & 0.03 \\
& Time Weaver & 0.60 & 0.37 & 0.44 & 0.29 & 0.41 & 0.17 & 0.24 & 0.16 \\
& \textbf{ALF (ours)} & \textbf{0.85} & \textbf{0.82} & \textbf{0.62} & \textbf{0.58} & \textbf{0.67} & \textbf{0.65} & \textbf{0.29} & \textbf{0.32} \\
\midrule
\multirow[c]{3}{*}{8,192}
& WaveStitch & 0.03 & 0.03 & 0.12 & 0.11 & 0.08 & 0.08 & 0.03 & 0.04 \\
& Time Weaver & 0.61 & 0.39 & 0.46 & 0.29 & 0.42 & 0.19 & 0.23 & 0.18 \\
& \textbf{ALF (ours)} & \textbf{0.84} & \textbf{0.81} & \textbf{0.61} & \textbf{0.57} & \textbf{0.70} & \textbf{0.66} & \textbf{0.28} & \textbf{0.30} \\
\midrule
\multirow[c]{3}{*}{16,384}
& WaveStitch & 0.03 & 0.03 & 0.12 & 0.11 & 0.07 & 0.08 & 0.03 & 0.03 \\
& Time Weaver & 0.60 & 0.43 & 0.45 & 0.31 & 0.43 & 0.20 & 0.22 & 0.19 \\
& \textbf{ALF (ours)} & \textbf{0.83} & \textbf{0.81} & \textbf{0.61} & \textbf{0.58} & \textbf{0.70} & \textbf{0.66} & \textbf{0.28} & \textbf{0.29} \\
\midrule
\multirow[c]{3}{*}{32,768}
& WaveStitch & 0.03 & 0.03 & 0.12 & 0.11 & 0.07 & 0.08 & 0.03 & 0.03 \\
& Time Weaver & 0.62 & 0.44 & 0.52 & 0.37 & 0.44 & 0.24 & 0.23 & 0.20 \\
& \textbf{ALF (ours)} & \textbf{0.83} & \textbf{0.79} & \textbf{0.70} & \textbf{0.64} & \textbf{0.72} & \textbf{0.68} & \textbf{0.28} & \textbf{0.30} \\
\midrule
\multirow[c]{3}{*}{\textbf{Avg.}}
& WaveStitch & 0.03 & 0.03 & 0.12 & 0.11 & 0.07 & 0.08 & 0.03 & 0.03 \\
& Time Weaver & 0.61 & 0.40 & 0.46 & 0.31 & 0.42 & 0.20 & 0.23 & 0.18 \\
& \textbf{ALF (ours)} & \textbf{0.84} & \textbf{0.81} & \textbf{0.64} & \textbf{0.59} & \textbf{0.70} & \textbf{0.66} & \textbf{0.28} & \textbf{0.30} \\
\midrule
\multirow[c]{1}{*}{\shortstack{\textbf{ALF} \textbf{Advantage}}}
& \textbf{vs.\ Time Weaver}
& \textbf{38.02\%} & \textbf{102.88\%} & \textbf{38.29\%} & \textbf{91.51\%} & \textbf{64.09\%} & \textbf{228.97\%} & \textbf{21.47\%} & \textbf{66.69\%} \\
\bottomrule
\end{tabular}%
}
\parbox{\textwidth}{\footnotesize\emph{$\dagger$ RF Diffusion could not be trained for recordings of length greater than 2,048 due to memory limitations; longer recordings were not attempted.}}
\end{table}
To connect this spectral target to the flow, we attach a lightweight prediction head $\theta_{\mathrm{anchor}}$ to the hidden representation after the sixth Transformer block (Figure~\ref{fig:intro-figure}). At each sampled flow time $t$, the head maps $\mathbf{H}_t^{(6)}$ to an estimate of the standardized clean spectrogram latent $\widetilde{\mathbf{Z}}_{\mathrm{spec}}$. The target remains fixed across flow times, while its loss contribution is weighted by $t/\bar{t}$ to emphasize states closer to the target IQ distribution. We sample $t\sim p_t$ by drawing $\epsilon_t\sim\mathcal{N}(0,1)$ and setting $t=\operatorname{sigmoid}(\epsilon_t)$, with $\bar{t}=\mathbb{E}_{t\sim p_t}[t]$. The anchor head and velocity network are trained jointly with the following loss:
\begin{equation}
\resizebox{\dimexpr\linewidth-3em\relax}{!}{$\displaystyle
\mathcal{L}_{\mathrm{ALF}}
=
\mathbb{E}_{\substack{
(\mathbf{Z}_1,\mathbf{Y},\widetilde{\mathbf{Z}}_{\mathrm{spec}})
\sim p_{\mathrm{data}};\,
\mathbf{Z}_0\sim p_{\mathrm{src}}(\cdot\mid\mathbf{Y});\,
t\sim p_t
}}
\left[
\underbrace{
\left\|
v_{\theta}^{\mathrm{cond}}(\mathbf{Z}_t,t,\mathbf{Y})
-
(\mathbf{Z}_1-\mathbf{Z}_0)
\right\|_2^2
}_{\text{\rm Flow-matching loss}}
+
\underbrace{
\lambda_{\mathrm{anchor}}(e)\,
\frac{t}{\bar{t}}\,
\left\|
\theta_{\mathrm{anchor}}\!\left(\mathbf{H}_t^{(6)}\right)
-
\widetilde{\mathbf{Z}}_{\mathrm{spec}}
\right\|_2^2
}_{\text{\rm Spectral anchoring loss}}
\right].
$}
\label{eq:alf_objective}
\end{equation}

We ramp $\lambda_{\mathrm{anchor}}(e)$ to $0.25$ over the first two training epochs. The first term in Equation~\ref{eq:alf_objective} learns the label-conditioned transport from $\mathbf{Z}_0$ to $\mathbf{Z}_1$, while the second shapes the sixth-block representation using the target spectrogram latent. At inference, the generated $\mathbf{Z}_1$ is reconstructed using the frozen decoder $\theta_{\mathrm{dec\text{-}IQ}}$; the spectral autoencoder and anchor head are omitted.

\section{Results}
\label{sec:results}

Due to the highly structured nature of RF signals, the differences between valid and undecipherable sequences are hard to differentiate using standard metrics. Instead, it has been noted in the wireless community that demonstrating downstream performance is a better judge of generation quality \citep{baur2025evaluation}. To this end, we can show utility to waveform, channel, and receiver classification using our chosen labels (Appendix~\ref{ap:dataset-labels}). Although distributional metrics are insufficient measures of quality in this domain, comparisons across methods using PSD and Spectrogram MSEs can be found in Appendix~\ref{ap:distributional-metrics}. Table~\ref{tab:tstr-headline} compares ALF against related time series methods trained on the same 6 label conditions. 

\textbf{Procedures:} Classifications use lightweight, length-agnostic, 1-D CNN classifiers trained for 8,000 steps across generated corpora of 100k recordings and evaluated on held-out test sets of size 35,856. Permuting generated (synthetic) and reference (real) data as training and evaluation sets produces various metrics that characterize a model's capability in a downstream classification job.

\textbf{Baselines:} We compare conditional time-series generative models from \citet{narasimhan2024time}, \citet{chi2024rfdiffusion}, and \citet{shankar2025wavestitch}. These were selected for their RF focus or state-of-the-art metadata-conditioned generation. Table~\ref{tab:tstr-headline} shows how WaveStitch and RF Diffusion experience a total collapse when evaluated on downstream tasks. Time Weaver and ALF are the only models capable of achieving any downstream utility \citep{narasimhan2024time}.

\textbf{Metrics:} Train-on-Synthetic, Test-on-Real \textbf{(TSTR)} attempts to classify unseen, \textit{real} data after learning to recognize generated data, alluding to a corpus' indistinguishability from the truth. This requires a diverse range of representative generations that span all "tastes" of a signal---corpus generality. Where TSTR examines the entire dataset, Train-on-Real, Test-on-Synthetic \textbf{(TRTS)} evaluates if a synthetic dataset can be categorized correctly using an understanding of real signal features. These accuracies serve as certificates of individual recording-level fidelity.

\textbf{ALF shows high utility over baselines on frequency-dependent tasks.} With respect to waveform classification, ALF shows significant improvements of 38.02\% TSTR and 102.88\% TRTS performances (Table~\ref{tab:tstr-headline}). TSTR confirms that generations have generality, and TRTS confirms recording-level specificity. Bandwidth and sample rate classification improvements demonstrate how \textit{spectrogram reasoning} learns representations of derived labels (Appendix~\ref{ap:dataset-labels}). Furthermore, utility is shown for unseen labels in carrier frequency estimation. Compared to other models, ALF exploits the domain-specific biases necessary for downstream evaluations. More seen and unseen classifications (Indoor/Outdoor, Carrier Frequency, Delay Spread) can be found in Appendix~\ref{ap:add-quality-metrics}.

\begin{table}[t]
\centering
\setlength{\tabcolsep}{4pt}
\renewcommand{\arraystretch}{1.0}
\setlength{\aboverulesep}{0.3ex}
\setlength{\belowrulesep}{0.3ex}
\begin{minipage}{\textwidth}
\centering
\caption{(a) Anchored vs. unanchored ablations across RF metrics.
(b) Reclassification under low-density samples $\mathcal{N}(\mu,(k\sigma)^2)$. 50 steps per generation.}
\label{tab:combined-rf-robustness}
\small
\begin{tabular*}{\linewidth}{@{\extracolsep{\fill}}lrrrr@{\hspace{12pt}}rrrr@{}}
\toprule
\textbf{(a) Ablation}
& \multicolumn{4}{c}{Performance}
& \multicolumn{4}{c}{Gain vs.\ base LF (\%)} \\
\cmidrule(lr){2-5}\cmidrule(lr){6-9}
Method & TSTR$^{\dagger}$ $\uparrow$ & PSD $\downarrow$ & PAPR $\downarrow$ & CV $\downarrow$
& TSTR & PSD & PAPR & CV \\
\midrule
LF-N & 81.93 & 17.87 & 4.99 & 0.139 & --- & --- & --- & --- \\
LF-M & 83.49 & 17.48 & 4.87 & 0.133 & +1.9 & +2.2 & +2.4 & +3.9 \\
ALF & \textbf{84.49} & \textbf{16.86} & \textbf{4.70} & \textbf{0.126}
& \textbf{+3.1} & \textbf{+5.6} & \textbf{+5.7} & \textbf{+9.3} \\
\bottomrule
\end{tabular*}
\par\vspace{3pt}
\begin{tabular*}{\linewidth}{@{\extracolsep{\fill}}lrrr@{\hspace{12pt}}rrr@{}}
\toprule
\textbf{(b) Robustness} & \multicolumn{3}{c}{Accuracy (\%) $\uparrow$}
& \multicolumn{3}{c}{ALF gain over $X$ (\%)} \\
\cmidrule(lr){2-4}\cmidrule(lr){5-7}
Method $X$ & $k=1$ & $k=3$ & $k=5$ & $k=1$ & $k=3$ & $k=5$ \\
\midrule
ALF & \textbf{96.5} & \textbf{53.6} & \textbf{11.7} & --- & --- & --- \\
LF-M & 92.6 & 45.0 & 7.8 & \textbf{+4.2} & \textbf{+19.2} & \textbf{+50.0} \\
LF-N & 90.1 & 12.5 & 3.9 & \textbf{+7.1} & \textbf{+329.5} & \textbf{+200.0} \\
Time Weaver & 39.0 & 4.6 & 3.3 & \textbf{+147.3} & \textbf{+1075.3} & \textbf{+257.1} \\
\bottomrule
\end{tabular*}
\par\vspace{2pt}
\begin{minipage}{\linewidth}
\footnotesize
Note: LF-M/LF-N: unanchored latent flow with metadata/standard-normal priors; (a) $^{\dagger}$Seed 100. PSD: median PSD MSE (dB$^2$);
PAPR: error (dB); CV: envelope-CV error. (b) classifiers pretrained on ground truth. All ablations use a refiner for SNR over 30db (Appendix~\ref{ap:ref-unref})
\end{minipage}
\end{minipage}
\end{table}

\textbf{Anchoring improves RF-specific metrics, sampling robustness, and parameter complexity.}
Before presenting numerical ablations, we note the following. With an ``informative'' (support-preserving) spectral cross-domain anchor, we theoretically show that bandwidth constraints in the spectrum reduces model parameter complexity (in effect, a smaller model suffices for high-fidelity generation). A detailed exposition is in Appendix~\ref{app:theory}.

Next, Table~\ref{tab:combined-rf-robustness}~(a) shows that spectral anchoring provides gains beyond metadata priors alone for TSTR and specific RF amplitude properties. Together, PSD, PAPR, and envelope CV assess spectral power distribution, relative peak power, and overall amplitude variability. ALF outperforms both label-source initialization and base LF, with consistent gains in spectral and I/Q amplitude fidelity. Table~\ref{tab:combined-rf-robustness}~(b) demonstrates how anchoring slows the rate of generation collapse when sampling from low-density areas ($k\sigma$ from mean $\mu$). 

\textbf{Anchoring reasons through spectrotemporal structure.} To inspect the impact of latent space anchoring, we compare decoded intermediate representations of ALF generation trajectories to an unanchored latent flow (LF) ablation using the pre-trained spectrogram autoencoder.

Figure~\ref{fig:anchors-and-coherence} shows that ALF reasons an identifiable Costas hopping pattern, whereas the baseline shape remains ambiguous. As lengths scale, sharper frequency patterns better constrain time-domain generation, resulting in compounding sample coherence advantages. 
\begin{figure}[h]
    \centering
    \includegraphics[width=\linewidth]{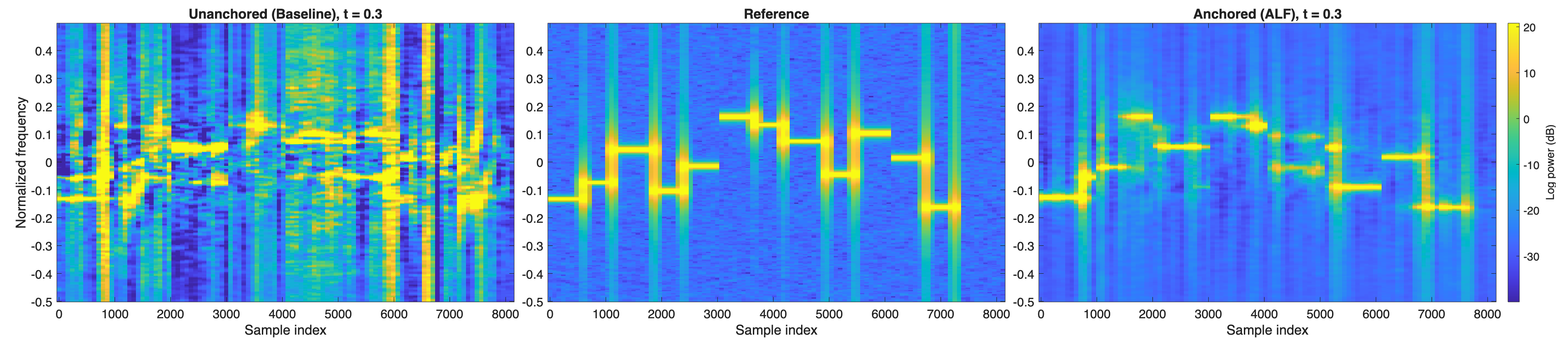}
    \vspace{-10pt}
    \caption{Visualization of ALF vs. LF (unanchored) latent space as decoded spectrograms. LF borrows the lightweight projection head from ALF.}
    \label{fig:anchors-and-coherence}
\end{figure}

\begin{figure}[!t]
    \centering
    \includegraphics[width=\linewidth]{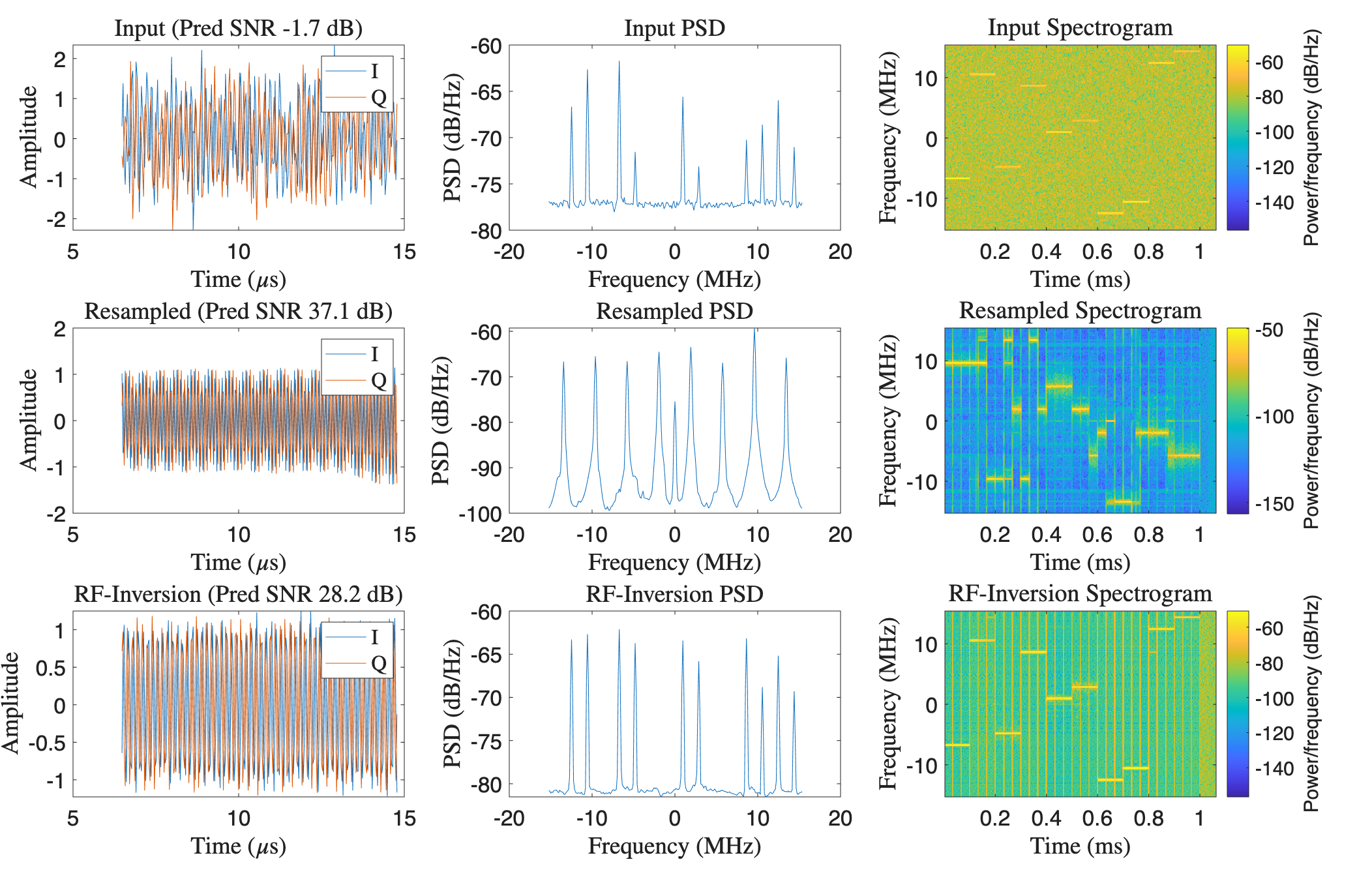}
    \vspace{-20pt}
    \caption{Resampling vs. RF-Inversion style transfer}
    \label{fig:style-transfer}
    \vspace{-10pt}
\end{figure}

\textbf{ALF reduces the inference bottleneck.} This is a major barrier to meaningful wireless signal generation at-length: it is time-prohibitive to generate long sequences. Although inference complexity remains bound by attention scaling laws \citep{vaswani2017attention}, ALF's latent reasoning architecture and per-patch decoding manifest as two orders of magnitude in inference time reduction to Time Weaver, along with its generative quality boost (see Tables~\ref{tab:inf-speed} and \ref{tab:tstr-headline}). Other models shorten this gap, but at the cost of unusable data. These gains make lengths of 100k+ feasible in fractions of a second.

\textbf{Our two-stage architecture enables native and efficient style transfer.}
The field \modelname\space learns flows from label-conditioned source latents $\mathbf{Z}_0$ to signal latents $\mathbf{Z}_1$ ($\S$~\ref{subsec:anchoredflow}). The learned field can also be integrated backward from $t=1$ to $t=0$, without a stochastic reverse process. Thus, \modelname\space has a clean analysis/synthesis pipeline, making it a model native to style transfer. We edit held-out recordings by encoding each input signal and reversing the learned field to obtain its source-space representation. Using a Gaussian log-likelihood search, we find the most likely label for that representation.

Style transfer, including signal denoising (see Figure~\ref{fig:style-transfer}), is executed in one of two ways. \textbf{Fresh Prior Resample}: Given the estimated label $\hat{c}$, increase SNR and regenerate by sampling from the higher SNR mode. \textbf{RF-Inversion}: Given the observed record and a higher SNR target, use a controller at inference to guide it to the corrected region of the reconstruction space \citep{rout2025semantic}. Figure~\ref{fig:style-transfer} showcases \modelname's capability of denoising recordings that are near or below the noise floor; likewise, this demonstrates learned signal representations in harsher SNR regimes. Notably, a frequency-hopped frequency modulation (FHFM) example shows the difference in output similarity between naive resampling and a controlled approach \citep{rout2025semantic}. Notably, although RF-Inversion marginally decreases the generated SNR, this approach maintains the original signal information in the hopping sequence while still effectively denoising. We also show how latent anchor guidance helps prevent occasional spectral blurring (\ref{app:anchored-vs-unanchored-ST}).

\begin{table*}[h]
\centering
\small
\caption{Latency and allocated-memory comparison at B=16 on one NVIDIA H100 GPU.}
\label{tab:inf-speed}
\resizebox{\textwidth}{!}{%
\begin{tabular}{cc|ccc|ccc}
\toprule
\multirow{2}{*}{Batch} &
\multirow{2}{*}{$N_{samples}$} &
\multicolumn{3}{c|}{Latency (ms/rec)} &
\multicolumn{3}{c}{Allocated Memory (MiB)} \\
\cmidrule(lr){3-5} \cmidrule(lr){6-8}
&
& \textbf{ALF} & TimeWeaver & ALF vs.\ TimeWeaver
& \textbf{ALF} & TimeWeaver & ALF vs.\ TimeWeaver \\
\midrule
16 & 2048
& \textbf{25.5} & 447.1  & $18\times$
& 1045.1 & \textbf{816.0}  & $0.8\times$ \\

16 & 4096
& \textbf{25.9} & 833.1  & $32\times$
& \textbf{1173.4} & 1272.0 & $1.1\times$ \\

16 & 8192
& \textbf{25.8} & 1718.0 & $67\times$
& \textbf{1431.9} & 2184.0 & $1.5\times$ \\

16 & 16384
& \textbf{25.7} & 3514.1 & $137\times$
& \textbf{1947.1} & 4006.0 & $2.1\times$ \\

16 & 32768
& \textbf{34.4} & 7242.1 & $211\times$
& \textbf{2976.8} & 7652.0 & $2.6\times$ \\

\midrule
\textbf{16} & \textbf{Avg}
& \textbf{27.5} & \textbf{2750.9} & $\mathbf{100\times}$
& \textbf{1714.9} & \textbf{3186.0} & $\mathbf{1.9\times}$ \\
\bottomrule
\end{tabular}%
}
\end{table*}
\textbf{Autoregressive generation unlocks arbitrary-length sequences and distribution shifts.} We demonstrate this adaptation of ALF through \textit{suffix generation}. This post-training learns to flow new patches alongside fixed previously generated tokens. Thus, new sections now learn two conditions: their target label, and a context window of the signal. Once in the reconstruction space, patches are decoded without their prefix tokens. We provide additional information in Appendix~\ref{ap:ar-gen}.
\vspace{5pt}
\begin{figure}[h]
    \centering
    \includegraphics[width=1.0\linewidth]{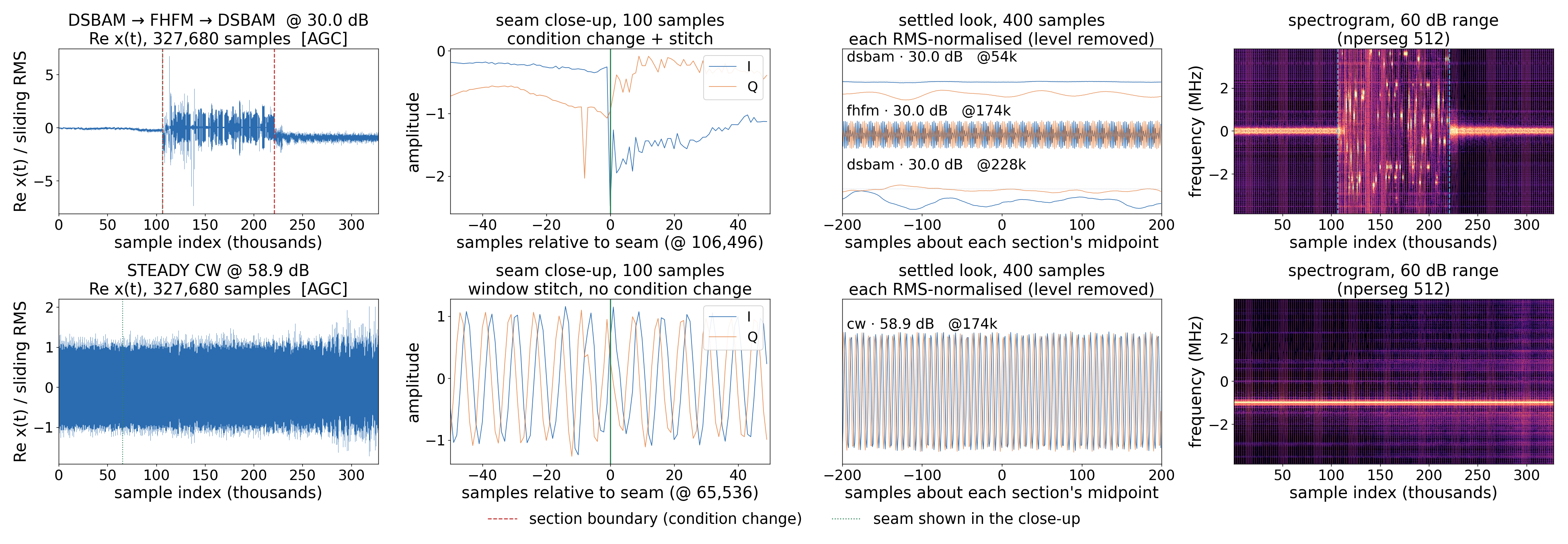}
    \vspace{-20pt}
    \caption{327k-sample-long AR generations show waveform shifts and single-condition continuity}
    \label{fig:ar-singlerow}
\end{figure}

\textbf{Limitations:} It is notable that ALF does not show improvement on scene- or propagation-centered tasks. These binary classifications are characterized by energy fluctuations which are normalized away, resulting in noise readings ($\S$~\ref{ap:add-fig-tex}, Table~\ref{tab:norm-vs-rescaled}). However, TRTS metrics show ALF has equal, if not better, features across individual time series. This suggests that although our model's corpus is less general than Time Weaver in these tasks, records maintain scene-specific features.

Our results suggest that spectral anchoring shapes the flow's intermediate representations and improves generation without adding inference-time computation. Nevertheless, several limitations remain. First, \datasetname{} has limited payload diversity and uses simulated channels that remain static within each recording, omitting mobility and Doppler evolution, hardware impairments, interference, and other variability present in real transmissions. Second, the current anchor is derived from a log-power spectrogram and therefore emphasizes time-frequency energy without explicitly encoding spectral phase or cross-antenna relationships. Finally, the present evaluation focuses primarily on distributional similarity and downstream classification; broader evaluation of reversible-flow transformations, receiver-level performance, and over-the-air recordings remains necessary.

\vspace{-5pt}
\section{Conclusion}
\label{sec:conclusion}

We introduced \modelname{}, a latent flow-matching framework for conditional generation of long, multi-antenna IQ recordings, together with \datasetname{}, a dataset spanning diverse waveform classes, propagation conditions, and record lengths. By combining metadata-conditioned initialization with training-time spectral anchoring, \modelname{} improves spectral fidelity and downstream utility without adding inference-time computation. Across held-out channels and all evaluated lengths, it reaches an average TSTR accuracy of 84.5\%, a 38.02\% relative improvement over SOTA conditional time-series generator Time Weaver while generating records $100\times$ faster on average.

\noindent\textbf{Future work.} Important next steps include training and evaluating on over-the-air recordings, including receiver-level validation under hardware and environmental impairments. Alternative flow constraints and anchor representations could incorporate phase, cross-antenna structure, multiple time-frequency resolutions, or supervision at different flow layers and timesteps. Extending the model to substantially longer recordings and stateful streaming generation would further test whether the architecture can maintain coherence beyond the sequence lengths considered here.

\section*{Acknowledgments}

This work was supported in part by NSF Grants 2112471 (NSF AI EDGE) and 2505865 (NSF IFML), and the Wireless Networking and Communications Group (WNCG) Industrial Affiliates Program. The work of S. S. Narasimhan was supported by Lockheed Martin Corporation under the Causal Inference To Assess/Deny Emitters via Learning (CITADEL) and the Sensor Digital RF Data Tokenization for use in Generative AI initiatives. We are grateful for computing support on the Vista GPU Cluster through the Center for Generative AI (CGAI) and the Texas Advanced Computing Center (TACC) at the University of Texas at Austin.

\bibliography{iclr2027_conference}
\bibliographystyle{iclr2027_conference}

\clearpage
\appendix
\section*{Appendix}

\newtheorem{assumption}{Assumption}
\theoremstyle{remark}
\theoremstyle{definition}
\newtheorem{example}{Example}

\newcommand{\cS}{\mathcal{S}}
\newcommand{\cC}{\mathcal{C}}
\newcommand{\cX}{\mathcal{X}}
\newcommand{\cK}{\mathcal{K}}
\newcommand{\cD}{\mathcal{D}}
\newcommand{\supp}{\operatorname{supp}}

\section{Spectral Anchoring Parameter Complexity}
\label{app:theory}

\subsection{Overview}

In this section, we argue that spectral anchoring is particularly effective for wireless signal generation. Specifically, anchoring by the spectrogram explicitly exposes the low-dimensional structure induced by band limit constraints, thereby reducing the parameter complexity of an unstructured time-domain representation. Consider the signal, 

\[
x = (x_0, \ldots, x_{n-1}) \in \mathcal{D}^{n}.
\]

Generally, such a signal has $n$ degrees of freedom. Let $d=|\mathcal{D}|$. Since probabilities must sum to 1, there are $d^n-1$ free parameters\footnote{A generative model, in essence, learns a joint distribution. With discrete/finite-alphabet, equivalently, it learns to fill-in the entries of a joint/conditional distribution table (as example, see the table for $p_{Y|C}(y \mid c)$ in Appendix~\ref{app:pf-setup} below), where each entry in this table can be thought of as a ``parameter'' that is to be learned. Dependencies across entries in the table can be exploited to reduce the number of parameters to be learned, leading to the ``free parameter'' count. A smaller number of free parameters implies that a smaller capacity model suffices to learn a distribution with high fidelity. The free parameter count can be translated to a training sample complexity bound using standard concentration inequalities. Concretely, suppose that an $\epsilon$-accurate estimate is needed for each free parameter entry, with probability at least $1-\delta$. A standard concentration bound (Hoeffding/Chernoff) implies that each entry requires order of $\epsilon^{-2}\ln(\nicefrac{1}{\delta})$ samples. More generally, learning a distribution with $D$ free parameters (i.e., supported on $D+1$ outcomes) to within total variation distance $\epsilon$ requires order of $(D+\ln(\nicefrac{1}{\delta}))/\epsilon^{2}$ samples, so the sample complexity grows linearly with the free parameter count.
Thus in this section, we focus directly on the free parameter analysis for spectral anchoring.}. 
If the signal is band-limited to $k$ frequency bins, however, the DFT of $x$ exposes a useful structure (see below in Appendix~\ref{app:pf-setup} for formal definitions). 

\[
\hat{x}
=
(\underbrace{0,\ldots,0}_{\text{known}},
\hat{x}_{j_1},\ldots,\hat{x}_{j_k},
\underbrace{0,\ldots,0}_{\text{known}}).
\]

We can observe that the signal has $k$ degrees of freedom under the Fourier basis. The key insight is that, because the Fourier transform is invertible, this reduction in effective degrees of freedom also constrains the time-domain signal. While band limitation is explicit in the frequency domain through the restriction to \(k\) active coefficients, the corresponding structure is implicit in the time domain, where it manifests as dependencies across the \(n\) samples. Spectral anchoring reveals these constraints directly to the generative model, reducing the parameter complexity required relative to an unstructured time-domain parameterization. In the subsequent exposition, using a stylized simple model, we formalize this intuition. While this model differs from the details of our implementation, we believe that it captures some of the core intuition motivating our approach.

\subsection{Setup}
\label{app:pf-setup}

\paragraph{Signals and spectra.}
Let $\mathcal{D} \subset \mathbb{C}$ be a finite set of $d = |\mathcal{D}|$ possible I/Q values, and let
\[
x = (x_0, \ldots, x_{n-1}) \in \mathcal{D}^n
\]
denote a length-$n$ I/Q sequence, where each sample $x_i$ takes a value in $\mathcal{D}$.

The DFT of $x$ is
\begin{equation}
  \hat{x}_j
  =
  \sum_{t=0}^{n-1}
  x_t e^{-2\pi i jt/n},
  \qquad j=0,\dots,n-1.
  \label{eq:dft}
\end{equation}
Now we will define a set of band-limited signals. For a set of permissible frequency indices
$K\subseteq\{0,\dots,n-1\}$, let
\begin{equation}
  \cX_K
  =
  \bigl\{\,x\in\cD^n :
  \hat{x}_j=0 \text{ for all } j\notin K\,\bigr\}
  \label{eq:XK}
\end{equation}
be the signals whose spectrum is supported on $K$. 


\subsection{Unanchored Parameter Count}

\paragraph{Counting Strategy.}

To count the free parameters required to represent a conditional distribution
$p_{Y|C}(y \mid c)$, suppose that $C$ can take $L$ possible values
$c_1,\ldots,c_L$ and that $Y$ can take $d$ possible values
$y_1,\ldots,y_d$. We represent the distribution as a conditional probability
table,
\[
\begin{array}{c|cccc}
& Y=y_1 & Y=y_2 & \cdots & Y=y_d \\
\hline
C=c_1
& p(y_1\mid c_1) & p(y_2\mid c_1) & \cdots & p(y_d\mid c_1) \\
C=c_2
& p(y_1\mid c_2) & p(y_2\mid c_2) & \cdots & p(y_d\mid c_2) \\
\vdots
& \vdots & \vdots & \ddots & \vdots \\
C=c_L
& p(y_1\mid c_L) & p(y_2\mid c_L) & \cdots & p(y_d\mid c_L)
\end{array}
\]
where each row corresponds to a possible conditioning context and each column
corresponds to a possible value of the predicted variable. Since the
probabilities in each row sum to one, each row contains $d-1$ free parameters (since probabilities must sum to 1). Therefore, representing the full conditional distribution requires
\[
L(d-1)
\]
free parameters.

\paragraph{Applying chain rule.}

First consider the case where the model has no information about the spectrum of the
recording (i.e. no anchor). The unanchored joint is,

\begin{equation}
  p(x) \;=\; \prod_{i=1}^{n} p\bigl(x_i \mid x_{<i}\bigr).
  \label{eq:unanchored_chain}
\end{equation}

It must therefore be able to represent any distribution on
$\cD^n$, so it allocates a row in the counting table to every prefix $x_{<i}$. That is, $L=d^{i-1}$ and the number of parameters is 
\begin{equation}
  D_{\mathrm{un}} \;=\; (d-1)\sum_{i=1}^{n} d^{\,i-1} \;=\; d^{\,n}-1 .
  \label{eq:D-un}
\end{equation}

\subsection{Anchored Parameter Count}

Recall the set of band-limited signals $\cX_K$ from \eqref{eq:XK}. Let $\mathcal{K}$ be a family of frequency support patterns, each of size at most $k$. Let $\mathcal{X}_{\mathcal{K}} = \bigcup_{K \in \mathcal{K}} \mathcal{X}_K$, denoting the set of signals satisfying any allowed $\mathcal{K}$. We will argue that a band-limited signal may have \(n\) time-domain samples, but if it contains only \(|K|\) active frequency components, it has only \(|K|\) independent degrees of freedom.

\begin{lemma}[Band-limited signal identifiability]
\label{thm:vandermonde}
For every $K\subseteq\{0,\dots,n-1\}$, the first $|K|$ samples of any
$x\in\cX_K$ determine $x$.
\label{lem:signa-id}
\end{lemma}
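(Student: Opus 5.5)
The plan is to reduce the statement to the invertibility of a square Vandermonde matrix, working over $\mathbb{C}^n$ rather than $\cD^n$ so that we can use linearity freely. Write $k=|K|$, $\omega=e^{2\pi i/n}$ and $K=\{j_1,\dots,j_k\}$. For $x\in\cX_K$, the inverse DFT corresponding to \eqref{eq:dft} together with the support condition in \eqref{eq:XK} gives
\[
x_t=\frac1n\sum_{m=1}^{k}\hat x_{j_m}\,\omega^{j_m t},\qquad t=0,\dots,n-1 .
\]
So every sample of $x$ is a fixed linear function of the $k$ coefficient values $c_m:=\hat x_{j_m}$. It is therefore enough to show that the first $k$ samples $x_0,\dots,x_{k-1}$ uniquely determine $c=(c_1,\dots,c_k)$.

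\textbf{Uniqueness via differences.} Rather than solving for $c$ explicitly, I would argue as follows. Let $x,y\in\cX_K$ agree on $t=0,\dots,k-1$, and set $w=x-y\in\mathbb{C}^n$. The vector $w$ need not lie in $\cD^n$, but by linearity of the DFT its spectrum is still supported on $K$, and $w_t=0$ for $t<k$. Evaluating the inverse-DFT formula at $t=0,\dots,k-1$ gives the homogeneous system $Vc=0$, where
\[
V_{t,m}=z_m^{\,t},\qquad z_m=\omega^{j_m},\qquad 0\le t\le k-1,\ 1\le m\le k,
\]
and $c_m=\hat w_{j_m}$.

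\textbf{Invertibility of $V$.} The matrix $V$ is a square Vandermonde matrix, so
\[
\det V=\prod_{m<m'}\bigl(z_{m'}-z_m\bigr).
\]
The nodes $z_m=\omega^{j_m}$ are distinct because the indices $j_m$ are distinct elements of $\{0,\dots,n-1\}$ and $\omega$ is a primitive $n$-th root of unity. Hence $\det V\neq0$, which forces $c=0$. Since $\hat w$ vanishes off $K$ by assumption, the whole spectrum $\hat w$ is zero, so by invertibility of the DFT $w=0$, i.e.\ $x=y$. The same computation read constructively gives the reconstruction map: $c=V^{-1}(x_0,\dots,x_{k-1})^\top$, followed by the inverse DFT.

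\textbf{Where the care is needed.} The only step requiring real attention is the distinctness of the Vandermonde nodes, that is, that $j\mapsto\omega^j$ is injective on $\{0,\dots,n-1\}$. This is immediate from $\omega$ having order exactly $n$. The case $K=\emptyset$ is trivial: then $\cX_K=\{0\}$ (or is empty if $0\notin\cD$), and the statement holds vacuously.
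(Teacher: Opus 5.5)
Your proof is correct and takes the same route as the paper: restrict the inverse DFT to $t=0,\dots,|K|-1$, invoke invertibility of the resulting square Vandermonde matrix, and substitute the recovered coefficients back. Your difference-vector formulation and your explicit check that the nodes $e^{2\pi i j_m/n}$ are distinct (which the paper leaves implicit when it asserts invertibility) are minor refinements of the same argument.
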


\begin{proof}
Consider any $x\in\cX_K$. By definition of $\cX_K$, the Fourier
coefficients outside $K$ are zero:
\[
\hat{x}_j=0
\qquad\text{for all }j\notin K.
\]
Thus, although the DFT contains $n$ coefficients in total, only the
$|K|$ coefficients $(\hat{x}_j)_{j\in K}$ are unknown. Since $\hat{x}_j=0$ for $j\notin K$, the inverse DFT reduces to

\begin{equation}
    x_t
    =
    \frac{1}{n}
    \sum_{j\in K}
    \hat{x}_j e^{2\pi ijt/n},
    \qquad t=0,\dots,n-1.
\label{inverse-dft}
\end{equation}

Now restrict this equation to the first $|K|$ time samples,
$t=0,\dots,|K|-1$. This gives $|K|$ linear equations in the $|K|$
unknown Fourier coefficients $(\hat{x}_j)_{j\in K}$. Writing
$K=\{j_1,\dots,j_{|K|}\}$, these equations can be expressed as
\[
\begin{bmatrix}
x_0\\
x_1\\
\vdots\\
x_{|K|-1}
\end{bmatrix}
=
\frac{1}{n}
\begin{bmatrix}
1 & 1 & \cdots & 1\\
e^{2\pi i j_1/n} & e^{2\pi i j_2/n} & \cdots & e^{2\pi i j_{|K|}/n}\\
(e^{2\pi i j_1/n})^2 & (e^{2\pi i j_2/n})^2 & \cdots &
(e^{2\pi i j_{|K|}/n})^2\\
\vdots & \vdots & \ddots & \vdots\\
(e^{2\pi i j_1/n})^{|K|-1}
&
(e^{2\pi i j_2/n})^{|K|-1}
&
\cdots
&
(e^{2\pi i j_{|K|}/n})^{|K|-1}
\end{bmatrix}
\begin{bmatrix}
\hat{x}_{j_1}\\
\hat{x}_{j_2}\\
\vdots\\
\hat{x}_{j_{|K|}}
\end{bmatrix}.
\]
The middle matrix is a Vandermonde matrix (i.e. geometric progression in each column), so it is invertible. Hence the
first $|K|$ samples uniquely determine all active Fourier coefficients
$(\hat{x}_j)_{j\in K}$. Substituting the coefficients
back into the inverse DFT in equation~\ref{inverse-dft} determines $x_t$ for every $t=0,\dots,n-1$. Therefore, the first $|K|$ time samples
uniquely determine the signal.
\end{proof}

\begin{corollary}[Counting band-limited grid signals]
\[
|\cX_K|\le d^{|K|}
\qquad\text{and}\qquad
|\cX_\cK|\le|\cK|d^k
\]
\label{corr:counting-band}
\end{corollary}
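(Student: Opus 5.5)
The first bound follows from Lemma~\ref{lem:signa-id} by an injectivity argument, and the second then follows by a union bound. Fix $K$ and consider the truncation map
\[
\pi_K:\cX_K\to\cD^{|K|},\qquad \pi_K(x)=(x_0,\dots,x_{|K|-1}).
\]
Lemma~\ref{lem:signa-id} says precisely that the first $|K|$ samples of any $x\in\cX_K$ determine $x$. So if $\pi_K(x)=\pi_K(x')$ for $x,x'\in\cX_K$, then $x=x'$, and $\pi_K$ is injective. Each coordinate of $\pi_K(x)$ lies in $\cD$, which has $d$ elements, so the codomain has $d^{|K|}$ elements. Injectivity therefore gives $|\cX_K|\le d^{|K|}$.

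For the second bound, write $\cX_\cK=\bigcup_{K\in\cK}\cX_K$. Subadditivity of cardinality over a finite union gives
\[
|\cX_\cK|\le\sum_{K\in\cK}|\cX_K|\le\sum_{K\in\cK}d^{|K|}.
\]
Every $K\in\cK$ has $|K|\le k$ and $d\ge1$, so $d^{|K|}\le d^k$. Summing over the family yields $|\cX_\cK|\le|\cK|\,d^k$.

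One edge case needs care: $K=\emptyset$. Here $\cX_\emptyset$ contains only the zero signal (if $0\in\cD$) or is empty, so $|\cX_\emptyset|\le 1=d^0$ and the bound holds directly. The only other check is that $\cD\neq\emptyset$, so that $d\ge1$ and the inequality $d^{|K|}\le d^k$ is valid. With these two points settled, the rest is routine.
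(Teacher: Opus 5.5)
Your proof is correct and matches the paper's argument: the first $|K|$ samples determine $x\in\cX_K$ (Lemma~\ref{lem:signa-id}), so truncation is injective and $|\cX_K|\le d^{|K|}$, and a union bound with $d^{|K|}\le d^k$ gives $|\cX_\cK|\le|\cK|d^k$. Your explicit handling of $K=\emptyset$ and of $d\ge 1$ adds care that the paper leaves implicit.
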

\begin{proof}
We first count the number of distinct length-\(n\) time-domain signals that are possible under a particular bandwidth constraint. Since each time-domain sample
takes a value in $\cD$ and $|\cD|=d$, there are at most $d^{|K|}$ possible tuples $(x_0,\dots,x_{|K|-1})$. Because each such tuple
determines at most one signal in $\cX_K$ by Lemma~\ref{lem:signa-id}, it follows that
\[
|\cX_K|\le d^{|K|}.
\]

Now count the number of distinct signals that satisfy any bandwidth constraint in $\cK$. Recall,
\[
\cX_\cK
=
\bigcup_{K\in\cK}\cX_K.
\]
Since every $K\in\cK$ satisfies $|K|\le k$,
\[
|\cX_K|
\le
d^{|K|}
\le
d^k.
\]
Therefore we can applied a union bound,
\[
|\cX_\cK|
=
\left|\bigcup_{K\in\cK}\cX_K\right|
\le
\sum_{K\in\cK}|\cX_K|
\le
\sum_{K\in\cK}d^k
=
|\cK|d^k.
\]
\end{proof}

We formally define an anchor. While this work uses the spectrogram operator, we can consider the broader class of anchors given by deterministic functions of a signal that preserve its spectral support.

\begin{assumption}[Band-limited spectral anchor]
\label{ass:anchor}
The anchor is $s=A(x)$ for a map $A$ with
$\supp A(x)=\supp\hat{x}$. The anchor alphabet contains only band-limited
spectra, i.e., $\supp s\subseteq K$ for some $K\in\cK$ whenever
$s\in\cS$. The joint is supported on consistent pairs (i.e. a signal cannot be paired with two different anchors):
$p(x,s)>0 \Rightarrow A(x)=s$.
\end{assumption}

Assumption~\ref{ass:anchor} requires the anchor to be computed from the
signal itself, so it adds no randomness of its own. Thus, once $x$ is known, so is $s$. Because $s_j=0$ exactly when $\hat{x}_j=0$, the anchor always
reveals which frequency components are active, even when it leaves out
details such as phase, and the band-limit constraint can be read directly
from $s$. Every anchor corresponds to a signal satisfying one of the allowed bandwidth constraints in \(\cK\), and each signal is uniquely paired with its corresponding anchor. Conditioning on \(s\) therefore identifies which of the \(n\) frequency bins are active, while the anchor may contain varying amounts of information about their values. Both anchors we study below satisfy
the assumption. The full spectrum $s_j=\hat{x}_j$ reveals the active bins
together with their exact values, and the magnitude spectrum
$s_j=|\hat{x}_j|$ reveals the active bins together with their energy only.

Under Assumption~\ref{ass:anchor}, the sequences compatible with an anchor
value $s$ form the set
\begin{equation}
  \cX(s) \;=\; \{\,x\in\cD^n : A(x)=s\,\}\;\subseteq\;\cX_{\supp s}.
  \label{eq:Xs}
\end{equation}
The bandwidth constraint is therefore imposed through $s$, restricting the joint contexts $(s,x_{<i})$. We note that 
Assumption~\ref{ass:anchor} holds for anchoring either with the full spectrum, $s_j=\hat{x}_j$ or the magnitude spectrum, $s_j=|\hat{x}_j|$, (and possibly quantized).

\begin{lemma}[Complexity of band-limited anchoring]
\label{prop:main}
Under Assumption~\ref{ass:anchor}: The sets $\cX(s)$, $s\in\cS$, are disjoint subsets of $\cX_\cK$. Consequently,
    \[
    \sum_{s\in\cS}|\cX(s)|
    \le
    |\cK|d^k.
    \]
\end{lemma}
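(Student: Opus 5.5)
The plan is to prove the two claims in order: first disjointness, then the counting bound, which follows from disjointness together with Corollary~\ref{corr:counting-band}.

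\emph{Containment.} Fix $s\in\cS$. By Assumption~\ref{ass:anchor} there is some $K\in\cK$ with $\supp s\subseteq K$. For any $x\in\cX(s)$ we have $A(x)=s$, so the support property gives
\[
\supp\hat{x}=\supp A(x)=\supp s\subseteq K .
\]
Hence $\hat{x}_j=0$ for every $j\notin K$, which means $x\in\cX_K\subseteq\cX_\cK$. This is the refinement of the inclusion $\cX(s)\subseteq\cX_{\supp s}$ noted in \eqref{eq:Xs}, now placed inside a set indexed by the family $\cK$ rather than by $\supp s$ itself.

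\emph{Disjointness.} Take $s\neq s'$ in $\cS$ and suppose some $x$ lies in $\cX(s)\cap\cX(s')$. Then $s=A(x)=s'$, because $A$ is a function. This contradicts $s\neq s'$, so the sets are pairwise disjoint. Only the determinism of the anchor map is used here. The consistency clause $p(x,s)>0\Rightarrow A(x)=s$ matters only for the interpretation: it guarantees that $\cX(s)$ is exactly the support of $p(\cdot\mid s)$.

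\emph{Counting.} The sets $\cX(s)$ are pairwise disjoint and all lie in $\cX_\cK$, so
\[
\sum_{s\in\cS}|\cX(s)|=\Bigl|\bigcup_{s\in\cS}\cX(s)\Bigr|\le|\cX_\cK|\le|\cK|d^k ,
\]
where the last inequality is Corollary~\ref{corr:counting-band}.

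\emph{Main obstacle.} The only delicate point is containment. We must read the second clause of Assumption~\ref{ass:anchor} correctly, so that $\supp s$ sits inside an \emph{allowed} pattern $K\in\cK$ of size at most $k$, and not merely inside the set $\supp s$ itself. That reading is exactly what makes the bound $|\cK|d^k$ from Corollary~\ref{corr:counting-band} applicable. Everything else is immediate.
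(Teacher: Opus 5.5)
Your proof is correct and follows the paper's argument: disjointness comes from $A$ being a deterministic function, each $\cX(s)$ lies in some $\cX_K\subseteq\cX_\cK$ by the support clause of Assumption~\ref{ass:anchor}, and the disjoint-union count is then bounded by Corollary~\ref{corr:counting-band}. Your containment step is spelled out more carefully than the paper's, which asserts only that each anchor's support lies in some $K\in\cK$ and leaves the step $\supp\hat{x}=\supp A(x)=\supp s\subseteq K$ implicit.
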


\begin{proof}
In this proof, we are upper bounding the number of distinct possible anchors. Because the anchor is a deterministic function of the signal, each signal
$x$ has exactly one anchor $A(x)$. Therefore, for any two distinct anchors
$s\neq s'$, $\cX(s)\cap\cX(s')=\emptyset$. Moreover, by Assumption~\ref{ass:anchor}, every anchor $s\in\cS$ has
spectral support contained in some $K\in\cK$. By Corollary~\ref{corr:counting-band},
\[
\sum_{s\in\cS}|\cX(s)|
=
\left|
\bigcup_{s\in\cS}\cX(s)
\right|
\le
|\cX_\cK|\le
|\cK|d^k.
\]

\end{proof}

We previously showed that representing the unanchored distribution \(p(x)\) requires
$
D_{\mathrm{un}} = d^n - 1
$
free parameters. We now introduce the deterministic anchor \(s=A(x)\) and count the free parameters required to represent the joint distribution \(p(x,s)=p(s)p(x\mid s)\). To keep the notation simple, we assume that every recording compatible with an anchor occurs with positive probability, i.e., $p(x,s)>0$ for every $s\in\cS$ and $x\in\cX(s)$. (Otherwise, replace $\cX(s)$ by the recordings that actually occur with $s$; all upper bounds below are unchanged.)

\begin{proposition}[Support-aware anchored count]
\label{lem:supp}
Under Assumption~\ref{ass:anchor},
\begin{equation}
  D^{\mathrm{supp}}_{\mathrm{anch}}
  =
  \underbrace{\bigl(|\cS|-1\bigr)}_{\text{anchor }p(s)}
  +
  \underbrace{\sum_{s\in\cS}
  \bigl(|\cX(s)|-1\bigr)}_{\text{generator }p(x\mid s)}
  =
  \Bigl|\bigcup_{s\in\cS}\cX(s)\Bigr|-1
  \le
  |\cK|d^k-1.
  \label{eq:split}
\end{equation}
\end{proposition}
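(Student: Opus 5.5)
The proof is a direct application of the conditional-probability-table counting strategy to the factorization $p(x,s)=p(s)\,p(x\mid s)$, followed by a telescoping of the $-1$ terms and an appeal to Lemma~\ref{prop:main}.

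\textbf{First term.} The marginal $p(s)$ is a single distribution over the anchor alphabet $\cS$. It is a table with one row ($L=1$) and $|\cS|$ columns, so it contributes $|\cS|-1$ free parameters.

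\textbf{Second term.} For each fixed $s\in\cS$, the conditional $p(x\mid s)$ is supported exactly on $\cX(s)$ from \eqref{eq:Xs}. This uses Assumption~\ref{ass:anchor}: $p(x,s)>0$ implies $A(x)=s$. Combined with the positivity convention stated just before the proposition, the support is all of $\cX(s)$ and nothing else. Hence row $s$ of the conditional table has $|\cX(s)|$ live entries summing to one, giving $|\cX(s)|-1$ free parameters. Entries outside $\cX(s)$ are known zeros and cost nothing. Summing over $s$ yields the second underbraced term.

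\textbf{Equality.} Expanding the sum gives
\[
(|\cS|-1)+\sum_{s}|\cX(s)|-|\cS|=\sum_{s}|\cX(s)|-1.
\]
By the disjointness part of Lemma~\ref{prop:main}, $\sum_{s}|\cX(s)|=\bigl|\bigcup_{s}\cX(s)\bigr|$, which is the middle expression.

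\textbf{Inequality.} The bound $\le|\cK|d^k-1$ is then exactly the counting conclusion of Lemma~\ref{prop:main}, itself a consequence of Corollary~\ref{corr:counting-band}.

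\textbf{Main obstacle.} The one step needing care is justifying that the generator's free-parameter count for each $s$ is $|\cX(s)|-1$ rather than $d^n-1$. Equivalently, one must argue that the model "knows" the support $\cX(s)$ once $s$ is given. The plan is to argue that $\cX(s)$ is determined by $s$ alone through the deterministic map $A$. The zero entries are therefore structural, not learned.

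A small additional check: every $\cX(s)$ must be nonempty, so that each $-1$ is legitimate. This holds because each $s\in\cS$ occurs with positive probability and is paired with some $x$ satisfying $A(x)=s$. If instead $\cX(s)$ is replaced by the recordings that actually occur, as the paper allows, the same computation goes through with the union only shrinking, so the upper bound is unchanged.
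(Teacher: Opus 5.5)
Your proposal is correct and follows the paper's proof essentially step for step. You count $|\cS|-1$ parameters for $p(s)$ and $|\cX(s)|-1$ for each $p(x\mid s)$, telescope to $\sum_{s}|\cX(s)|-1$, use disjointness of the $\cX(s)$ (from the determinism of $A$, as in Lemma~\ref{prop:main}) to identify this with the size of the union, and then invoke Lemma~\ref{prop:main} for the bound $|\cK|d^k-1$; your extra remarks on nonemptiness and on the support being fixed by $s$ are sensible checks that the paper leaves implicit.
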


\begin{proof}
For each $s\in\cS$, the conditional distribution $p(x\mid s)$ is supported
on $\cX(s)$ and therefore requires $|\cX(s)|-1$ free parameters. The anchor
distribution $p(s)$ requires $|\cS|-1$ free parameters. Hence,
\[
D^{\mathrm{supp}}_{\mathrm{anch}}
=
\bigl(|\cS|-1\bigr)
+
\sum_{s\in\cS}\bigl(|\cX(s)|-1\bigr)
=
\sum_{s\in\cS}|\cX(s)|-1.
\]
Because the anchor map $A$ is deterministic, the sets $\cX(s)$ are disjoint.
Therefore,
\[
\sum_{s\in\cS}|\cX(s)|
=
\Bigl|\bigcup_{s\in\cS}\cX(s)\Bigr|.
\]
Finally, Lemma~\ref{prop:main} gives
\[
\Bigl|\bigcup_{s\in\cS}\cX(s)\Bigr|
\le
|\cK|d^k,
\]
which proves the result.
\end{proof}

In the unanchored case, we naively allocate parameters to all \(d^n\) possible time-domain sequences as in the unstructured representation of \(p(x)\). In Equation~\ref{eq:split}, we show that the anchored factorization \(p(x,s)=p(s)p(x\mid s)\) restricts the parameterization to the set of feasible signals, which is substantially smaller due to the bandwidth constraint. Consequently, the anchored factorization yields a lower parameter complexity than an unstructured time-domain parameterization. Note that the parameter complexity bound is determined by the bandwidth constraints rather than the choice of anchor. The choice of anchor instead determines how this complexity is allocated between the anchor model and the conditional generator, as described in the following remark.

\begin{remark}[Effect of anchor informativeness]
The choice of anchor determines how the support-aware parameter complexity is divided between the anchor model \(p(s)\) and the conditional generator \(p(x\mid s)\).

\begin{itemize}
    \item At one extreme, a full complex-spectrum anchor determines \(x\) through the inverse DFT, so \(|\cX(s)|=1\) for every \(s\) and the conditional generator contributes no free parameters. All parameter complexity is therefore allocated to \(p(s)\).
    
    \item At the other extreme, a support-only anchor reveals only which frequency bins are active, leaving the corresponding spectral values unresolved and more variability to \(p(x\mid s)\).
    
    \item A magnitude-based spectral anchor lies between these extremes, resolving spectral magnitudes while leaving information such as phase to the conditional generator.
\end{itemize}
\end{remark}

\subsection{Specializing to a Contiguous Bandwidth Model}
We now express these bounds in terms of physical quantities to illustrate the change in parameter complexity between unanchored and anchored representations. Suppose a signal of bandwidth $B$ is sampled at rate $f_s$ for $n$ samples, i.e., for a duration $T=n/f_s$. Each DFT bin then spans $f_s/n$ Hz, so the signal occupies about $(B/f_s)\,n$ of the $n$ frequency bins. Since the signal may sit anywhere in the band, we allow every contiguous block of this width.
 
\begin{corollary}[Gain under a contiguous bandwidth constraint]
\label{cor:bw}
We specialize the allowed-support family
$\cK$ to all circular windows of $k$ contiguous DFT bins. We assume $d\ge2$ and $n\ge2$. 

For any anchor satisfying Assumption~\ref{ass:anchor},
the support-aware parameter count in~\eqref{eq:split} satisfies
\begin{equation}
    \log\frac{D_{\mathrm{un}}}{D^{\mathrm{supp}}_{\mathrm{anch}}}
    \ge
    \left(1-\frac{B}{f_s}\right)n\log d-\log(2nd)
    =
    (f_s-B)T\log d-\log(2nd).
    \label{eq:gain}
\end{equation}
\end{corollary}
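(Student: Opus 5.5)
The plan is to combine the exact unanchored count \eqref{eq:D-un} with the upper bound of Proposition~\ref{lem:supp}. Then I would bound the two factors $|\cK|$ and $d^k$ in terms of $n$, $B$ and $f_s$. The slack term $\log(2nd)$ in \eqref{eq:gain} should absorb three losses: the ``$-1$'' in $d^n-1$, the $|\cK|\le n$ factor from the number of windows, and the integer rounding of the bandwidth to a whole number of bins.

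\textbf{Step 1 (counting windows and bins).} When $\cK$ is the family of circular windows of $k$ contiguous bins among $n$, there are at most $n$ such windows, one per starting index; if $k=n$ there is only one. So $|\cK|\le n$. Proposition~\ref{lem:supp} then gives $D^{\mathrm{supp}}_{\mathrm{anch}}\le n d^k-1< n d^k$. The signal occupies about $(B/f_s)n$ bins, so I would take $k=\lceil (B/f_s)n\rceil$, which gives $k\le (B/f_s)n+1$ and hence $d^k\le d\cdot d^{(B/f_s)n}$. This rounding step is the origin of the extra factor $d$ inside the logarithm.

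\textbf{Step 2 (lower-bounding the numerator).} Since $d\ge 2$ and $n\ge 1$, we have $d^n\ge 2$, so $D_{\mathrm{un}}=d^n-1\ge d^n/2$.

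\textbf{Step 3 (combining the bounds).} Dividing the bounds from Steps 1 and 2 gives
\[
\frac{D_{\mathrm{un}}}{D^{\mathrm{supp}}_{\mathrm{anch}}}\ \ge\ \frac{d^n/2}{n\,d\,d^{(B/f_s)n}}=\frac{d^{(1-B/f_s)n}}{2nd}.
\]
Taking logarithms yields the first form of \eqref{eq:gain}. Substituting $T=n/f_s$ gives $(1-B/f_s)n=(f_s-B)T$, which is the second form.

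\textbf{Main obstacle (degenerate and rounding cases).} The arithmetic is routine; the edge cases need care. If $D^{\mathrm{supp}}_{\mathrm{anch}}=0$ (a single feasible signal), the ratio is undefined or infinite. I would handle this by the convention that the bound holds trivially, or by noting that the strict inequality $D^{\mathrm{supp}}_{\mathrm{anch}}<nd^k$ is all that is used whenever the count is positive. I would also state explicitly that $k$ is the integer bin count $\lceil (B/f_s)n\rceil$, so that the $+1$ rounding is absorbed by the factor $d$.
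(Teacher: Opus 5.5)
Your proposal is correct and follows essentially the same route as the paper. Both arguments combine $D_{\mathrm{un}}=d^n-1\ge d^n/2$ with the bound $D^{\mathrm{supp}}_{\mathrm{anch}}\le|\mathcal{K}|\,d^k$ from Proposition~\ref{lem:supp}, use $|\mathcal{K}|\le n$ and $k=\lceil (B/f_s)n\rceil\le (B/f_s)n+1$, then take logarithms and substitute $n=f_sT$. Your two refinements, writing $|\mathcal{K}|\le n$ rather than $|\mathcal{K}|=n$ so that the case $k=n$ is covered and flagging the degenerate case $D^{\mathrm{supp}}_{\mathrm{anch}}=0$, are harmless additions that the paper omits.
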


\begin{proof}
We specialize Lemma~\ref{prop:main} and
Proposition~\ref{lem:supp} to the contiguous bandwidth family defined above.
Because the occupied band may begin at any of the $n$ DFT bins, $|\cK|=n$.  Moreover,
$
k
=
\left\lceil \frac{B}{f_s}n\right\rceil
\le
\frac{B}{f_s}n+1.
$

The unstructured time-domain parameterization is
\[
D_{\mathrm{un}}
=d^n-1
\ge
\frac{d^n}{2}.
\]

By Proposition~\ref{lem:supp}, restricting the parameterization to signals
consistent with the allowed bandwidth supports gives
\[
D^{\mathrm{supp}}_{\mathrm{anch}}
\le
|\cK|d^k.
\]
Substituting $|\cK|=n$ and the bound on $k$ yields
\[
D^{\mathrm{supp}}_{\mathrm{anch}}
\le
n d^{(B/f_s)n+1}.
\]

Therefore,
\[
\frac{D_{\mathrm{un}}}
     {D^{\mathrm{supp}}_{\mathrm{anch}}}
\ge
\frac{d^n/2}{n d^{(B/f_s)n+1}}
=
\frac{1}{2nd}
d^{(1-B/f_s)n}.
\]
Taking logarithms gives
\[
\log\frac{D_{\mathrm{un}}}
          {D^{\mathrm{supp}}_{\mathrm{anch}}}
\ge
\left(1-\frac{B}{f_s}\right)n\log d-\log(2nd).
\]
Finally, using $n=f_sT$ gives
\[
\left(1-\frac{B}{f_s}\right)n
=
(f_s-B)T,
\]
which yields the second form.
\end{proof}
 
\begin{remark}[Bandwidth Gain Regimes] The gain exponent in~\eqref{eq:gain} is the bandwidth that the anchor rules out, $f_s-B$, multiplied by the duration $T$. Equivalently, each of the $n-k\approx(f_s-B)\,T$ frequency bins that the anchor forces to zero divides the parameter count by a factor of $d$. Three regimes follow:
\begin{itemize}
  \item \emph{Narrowband} ($B\ll f_s$): the gain is nearly $d^{\,n}$. The extreme case is a DC signal, where the only allowed support is $K=\{0\}$, so $|\cK|=1$ and $k=1$. Then $\cX_{\{0\}}$ consists of the $d$ constant sequences $x=c\mathbf{1}$ with $c\in\cD$, and, given the anchor, $x_0$ fixes every later sample. Proposition~\ref{lem:supp} gives $D^{\mathrm{supp}}_{\mathrm{anch}}\le d-1$, which does not grow with $n$ at all, compared with $D_{\mathrm{un}}=d^{\,n}-1$.
  \item \emph{Fixed fraction} ($B/f_s=\beta<1$): the gain grows exponentially in the duration, as $d^{\,(1-\beta)n}$ up to polynomial factors.
  \item \emph{Full band} ($B=f_s$): here $k=n$, and the bound guarantees no gain. This is expected because, without a band limit, every sequence in $\cD^n$ is possible, and the anchor rules nothing out.
\end{itemize}

 \label{rem:2}
 \end{remark}
 
\section{Extended Related Works}
\label{ap:extended-work}

\paragraph{Multimodal IQ Embeddings.}
IQFM uses self-supervised pretraining on IQ streams and LoRA adaptation for downstream tasks including modulation classification and RF fingerprinting \citep{mashaal2026iqfm}. Other approaches exploit complementary wireless representations: IQFormer fuses IQ and time-frequency features for modulation recognition, while multimodal wireless foundation models learn across IQ, spectrograms, channel state information, and channel impulse responses \citep{shao2025iqformer,aboulfotouh2025multimodal}. However, these methods are optimized for downstream discrimination and do not natively support conditioning variables for controlled signal generation across modalities. In particular, we use CLIP-style contrastive alignment \citep{radford2021clip} between signal embeddings and semantic metadata alongside reconstruction, producing a decodable, label-aware latent space for conditional waveform generation as shown in Section~\ref{subsec:stage1}.

\paragraph{Conditional Time Series and Audio Generation}
Time Weaver incorporates heterogeneous metadata, including categorical, continuous, and time-varying variables, into a conditional diffusion model \citep{narasimhan2024time}. WaveStitch additionally uses partially observed signals at inference and reconciles overlapping, parallel-generated windows through a stitching objective \citep{shankar2025wavestitch}. RF-Diffusion modifies the signal diffusion process itself to operate across time and frequency \citep{chi2024rfdiffusion}. In adjacent waveform domains, DiffWave conditions audio synthesis on a mel spectrogram, whereas AudioLDM generates in a compressed mel-spectrogram latent space and decodes the result to audio \citep{kong2021diffwave,liu2023audioldm}. These approaches use frequency information as a generation condition or within the generative process. ALF instead uses a separately learned spectrogram latent to supervise an intermediate flow representation during training; the target spectrogram is not supplied at inference.

\paragraph{Intermediate Representation Supervision.}
Anchored Diffusion Language Models predict distributions over important tokens and use those predictions to guide subsequent denoising \citep{rout2025adlm}. In image generation, REPA instead aligns intermediate denoiser representations with features from a pretrained encoder of the clean image \citep{yu2025repa}. ALF uses a complementary spectral representation to expose structure that may be difficult to learn directly in the IQ generation space. Specifically, a training-time spectral anchor encourages intermediate flow representations to capture global time-frequency structure associated with the target waveform. The spectral target is then removed during generation, avoiding extra computational cost at inference.

\paragraph{Generative Channel Reconstruction.}
Another branch of work models the wireless channel $\mathbf{H}$, which characterizes propagation in the simplified received-signal relation $\mathbf{Y}=\mathbf{H}*\mathbf{X}+\mathbf{N}$. Here, $\mathbf{X}$ is the transmitted waveform and $\mathbf{N}$ is receiver noise; in channel estimation, the transmitted pilot portion of $\mathbf{X}$ is known. Generative estimators reconstruct $\mathbf{H}$ from such measurements using priors learned by Wasserstein GANs, score-based models, or diffusion models \citep{balevi2020highdimensionalchannelestimation,arvinte2022score,zhou2026generativediffusionmodelshigh}. For example, the GAN prior was evaluated on $\mathbf{H}\in\mathbb{C}^{16\times64}$, while a subsequent score-based method evaluated matrices up to $\mathbb{C}^{64\times256}$ \citep{balevi2020highdimensionalchannelestimation,arvinte2023mimo}. For channel synthesis, location-conditioned DDIM generates $\mathbb{C}^{4\times32}$ beamspace channels from three-dimensional user positions, while a physics-informed VAE generates channels through a parametric geometric model \citep{lee2025generatinghighdimensionaluserspecific,wagle2025physicsinformedgenerativeapproacheswireless}. These approaches support applications such as beam alignment, channel compression, and data augmentation. Their channel-focused outputs do not, however, represent the additional variability of the transmitted waveform and receiver noise in complete received recordings $\mathbf{Y}\in\mathbb{C}^{4\times N}$ covered in ALF.

\section{Dataset Design and Validation}
\label{app:dataset}
Here we discuss the detailed implementation and properties of the WaveScene dataset.
\subsection{Physical Channel Model and Modeling Challenges}

For a clean complex-baseband waveform $x[n]$, the received signal at antenna
$a$ is
\begin{equation}
y_a[n] =
\sum_{\ell=1}^{L} h_{a,\ell}x[n-d_{\ell}] + w_a[n],
\qquad a\in\{1,\ldots,4\},
\end{equation}
where $h_{a,\ell}$ is the antenna-dependent complex coefficient of path
$\ell$, $d_{\ell}$ is its shared discrete delay, and $w_a[n]$ is independent
complex receiver noise. The four receive elements observe the same transmitted
waveform, path delays, and propagation geometry, while their complex path
coefficients vary according to the array response. Their spatial dependence can
be represented by
\begin{equation}
\mathbf{R}_{h,\ell}
=
\mathbb{E}\!\left[
\mathbf{h}_{\ell}\mathbf{h}_{\ell}^{H}
\right],
\qquad
\mathbf{h}_{\ell}
=
[h_{1,\ell},h_{2,\ell},h_{3,\ell},h_{4,\ell}]^{T}.
\end{equation}
Consequently, a recording cannot be reduced to four independent SISO
observations: relative phase, amplitude, covariance, and shared multipath
structure carry information about the propagation environment.

All recordings use a single transmit antenna and a 4-element horizontal uniform
linear receive array with half-wavelength spacing. The element spacing is
rescaled for each carrier frequency. The array elements are vertically polarized
and omnidirectional. UMi, UMa, and RMa channels are generated with Sionna
v1.2.1~\citep{hoydis2023sionna}, using its implementation of the geometry,
large-scale parameters, visibility conditions, path loss, and shadow fading
specified by 3GPP TR~38.901~\citep{3gpp_tr38901}. InH channels are generated
with QuaDRiGa v2.8.1~\citep{jaeckel2014quadriga}, using the
\texttt{3GPP\_38.901\_Indoor\_Open\_Office} scenario and a corresponding
four-element horizontal receive array.

The corpus presents two complementary modeling challenges. First, recordings
contain as many as 32,768 complex samples, requiring preservation of waveform
timing, spectral structure, and protocol- or radar-specific behavior over long
contexts. Note that we use a static channel impulse response (CIR), so temporal
evolution of a mobile channel is out of scope. Second, the four synchronized
receive streams must remain spatially consistent: a model must preserve
cross-antenna phase, amplitude, and covariance relationships in addition to
producing realistic marginal signals.

\subsection{Waveform Parameterization and Standards Alignment}

The pre-channel waveform library used in WaveScene is constructed around documented physical-layer timing, bandwidth, and carrier constraints instead of arbitrary signal perturbations that may conflict with technical documentations shown in Table \ref{tab:waveform-standards}. It contains 1,066 parameterized waveform variants from 32 classes spanning generic modulations, communications protocols, and radar emissions. Toolbox-
backed WLAN, Bluetooth, LTE, and NR classes use standards-oriented physical-layer implementations. Other classes preserve their documented timing, bandwidth, pulse, hopping, and signal-morphology assumptions, although we do not uniformly claim receiver-decodable bitstreams for these classes.

\begin{table}[H]
\centering
\footnotesize
\setlength{\tabcolsep}{3pt}
\renewcommand{\arraystretch}{1.10}
\caption{Representative physical assumptions used in waveform generation.}
\label{tab:waveform-standards}

\begin{tabularx}{\linewidth}{@{}p{0.16\linewidth}Yp{0.27\linewidth}@{}}
\toprule
Family & Representative choices & Physical basis \\
\midrule
Linear modulation &
RRC shaping with roll-off $\alpha\in\{0.20,0.25,0.35\}$; varied symbol and
capture rates &
Digital modulation; DVB-S2 \\

WLAN &
20-MHz PPDU configurations, MCS values, preambles, and guard intervals &
IEEE 802.11 PHYs \\

Bluetooth &
LE 1M/2M GFSK and BR/EDR 2-Mb/s $\pi/4$-DQPSK PHYs &
Bluetooth Core radio PHY \\

LTE, NR, and NB-IoT &
OFDM numerologies, resource grids, and physical-channel configurations &
3GPP LTE and NR PHYs \\

Zigbee &
2-Mchip/s O-QPSK with 32-chip DSSS symbol morphology &
IEEE 802.15.4 PHY \\

LoRa &
CSS with SF6--SF9 and 125/250/500-kHz bandwidths &
Semtech LoRa PHY \\

ATSC &
8-VSB segment timing, pilot, synchronization, and vestigial-sideband morphology &
ATSC A/53 Part 2 \\

Radar &
Pulse, PRI, chirp, hop, dwell, and phase-code parameterizations &
Classical radar waveform texts \\
\bottomrule
\end{tabularx}

\vspace{2pt}
\parbox{\linewidth}{\scriptsize
\emph{Sources.}
Linear modulation: \citep{proakis2008digital,etsi302307_1};
WLAN: \citep{ieee80211_2020,ieee80211ax_2021};
Bluetooth: \citep{bluetooth_core54};
LTE, NR, and NB-IoT: \citep{3gpp36211,3gpp38211};
Zigbee: \citep{ieee802154_2020};
LoRa: \citep{semtech_sx1276};
ATSC: \citep{atsc_a53_part2};
radar: \citep{skolnik2008radar,richards2010radar,levanon2004radar}.}
\end{table}
For waveform classes with a configured bandwidth $B_{\mathrm{cfg}}$, the
release checker enforces
\begin{equation}
B_{99} \leq \rho B_{\mathrm{cfg}},
\qquad
B_{\mathrm{cfg}} \leq f_s,
\end{equation}
where $B_{99}$ is the measured 99\%-power occupied bandwidth, $\rho$ is a
class-specific tolerance, and $f_s$ is the complex-baseband sampling rate.
The occupied band is obtained from a Welch power spectral density estimate \citep{welch1967fft}
using the interval between the 0.5\% and 99.5\% cumulative-power quantiles.
This definition accommodates asymmetric spectra such as SSB-AM and ATSC
8-VSB.

Sampling rate, bandwidth, and recording length are selected jointly. WLAN is
generated at its native 20-MHz rate and, where applicable, at valid oversampled
rates; LoRa uses capture rates appropriate for its 125/250/500-kHz bandwidths;
narrowband analog signals use correspondingly lower rates; and radar sampling
rates scale with chirp bandwidth or hop span. For pulsed and frequency-hopping
radar, occupied bandwidth is measured over the active pulse or hop sequence
rather than over silent receive intervals. This prevents waveform duty cycle
from being mistaken for narrow spectral occupancy.

\paragraph{Controlled payload content.}
Payload bits are deliberately controlled rather than independently randomized
across recordings. A deterministic canonical bit stream is generated once, and
the modulation and protocol waveforms draw length-appropriate prefixes from
this stream before class-specific framing, coding, scrambling, and modulation.
This choice prevents arbitrary payload content from becoming an additional
unlabeled source of variation, allowing the corpus to emphasize waveform
morphology and propagation effects. Consequently, the dataset does not assess
generalization to independently varying payload content; extending the corpus
with independently seeded payloads is a natural direction for future work.

\subsection{Propagation-State Construction of Shared-Channel Set}

For the 7.52M shared-channel set, Propagation states are intentionally not balanced independently of scene
because their frequencies form part of the assumed deployment model \citep{3gpp_tr38901}. For urban
microcell (UMi) and urban macrocell (UMa), 80\% of generated links are
initialized as outdoor-to-indoor (O2I) and 20\% as outdoor. O2I forces an indoor
receiver, whereas outdoor links are assigned line-of-sight (LOS) or
non-line-of-sight (NLOS) status by Sionna's scene and distance-dependent
visibility model. We use the LOS/NLOS proportions from the
realized geometries instead of forcing a balanced distribution. Indoor hotspot (InH) links contain no O2I state; their LOS or NLOS condition is generated by
the QuaDRiGa Indoor Open Office model.

Rural macrocell (RMa) requires an explicit approximation. The documented
deployment mixture \citep{3gpp_tr38901} contains 50\% indoor and 50\% in-car users, but Sionna does
not provide a vehicle-cabin penetration model. Due to this limitation, we approximate by retaining the indoor half as O2I
and representing the in-car half as outdoor without vehicle penetration loss. Sionna then divides this outdoor component into LOS and NLOS links according to the realized geometry. These leads to the empirical distributions shown in Table \ref{tab:scene-state} and Figure \ref{fig:scene-distributions}:

\begin{table}[H]
\centering
\footnotesize
\caption{Empirical propagation-state composition of the 7.52M shared-channel set.}
\label{tab:scene-state}
\begin{tabular}{@{}lrrr@{}}
\toprule
Scene & LOS (\%) & NLOS (\%) & O2I (\%) \\
\midrule
UMi & 8.0  & 11.9 & 80.0 \\
UMa & 5.1  & 14.9 & 80.0 \\
RMa & 19.2 & 30.0 & 50.8 \\
InH & 68.1 & 31.9 & --   \\
\bottomrule
\end{tabular}
\end{table}

\subsection{Label Definitions and Normalization}
\label{ap:dataset-labels}

The 6 learning factors are derived from the clean waveform manifests and
generated channel realizations rather than estimated from the final noisy IQ.
Waveform class, scene type, and realized propagation state are categorical.
The continuous targets are spectral occupancy, mean per-antenna pre-combining
SNR, and RMS delay spread.

Spectral occupancy is defined as
\begin{equation}
o = \frac{B_{99}}{f_s},
\end{equation}
where $B_{99}$ is the measured 99\%-power occupied bandwidth, and $f_s$ is the native complex-baseband sampling rate. Unlike an absolute
bandwidth label, $o$ describes the fraction of the sampled spectrum occupied by
the waveform and is therefore comparable across capture rates.

The reported SNR is the mean per-antenna pre-combining SNR:
\begin{equation}
\mathrm{SNR}_{\mathrm{dB}}
=
P_{\mathrm{tx,dBm}}
+
G_{\mathrm{path,dB}}
-
P_{\mathrm{noise,dBm}}.
\end{equation}
Here, $P_{\mathrm{tx,dBm}}$ is transmit power, $G_{\mathrm{path,dB}}$ is
channel path gain obtained by averaging CIR energy across the four receive
antennas, and $P_{\mathrm{noise,dBm}}$ is the receiver noise power. Noise power
is calculated as
\begin{equation}
P_{\mathrm{noise,dBm}}
=
-174
+10\log_{10}(B_{99})
+10\log_{10}(T/290)
+\mathrm{NF},
\end{equation}
with temperature $T=290$ K and noise figure $\mathrm{NF}=9$ dB. The four
streams contain independent AWGN, and no coherent combining gain is included
in the learning target.

RMS delay spread is calculated from the power-weighted discrete CIR:
\begin{equation}
\tau_{\mathrm{rms}}
=
\sqrt{
\sum_{\ell=1}^{L}
w_{\ell}(\tau_{\ell}-\bar{\tau})^2
},
\qquad
\bar{\tau}
=
\sum_{\ell=1}^{L}w_{\ell}\tau_{\ell},
\qquad
w_{\ell}
=
\frac{|h_{\ell}|^2}
{\sum_{j=1}^{L}|h_j|^2}.
\end{equation}
In this expression, $\tau_{\ell}$ is the delay of tap $\ell$, $h_{\ell}$ is
its complex coefficient after averaging tap energy across receive antennas,
and $w_{\ell}$ is its normalized power contribution.

\begin{figure}[t]
\centering
\begin{subfigure}[t]{0.49\linewidth}
    \centering
    \includegraphics[width=\linewidth]{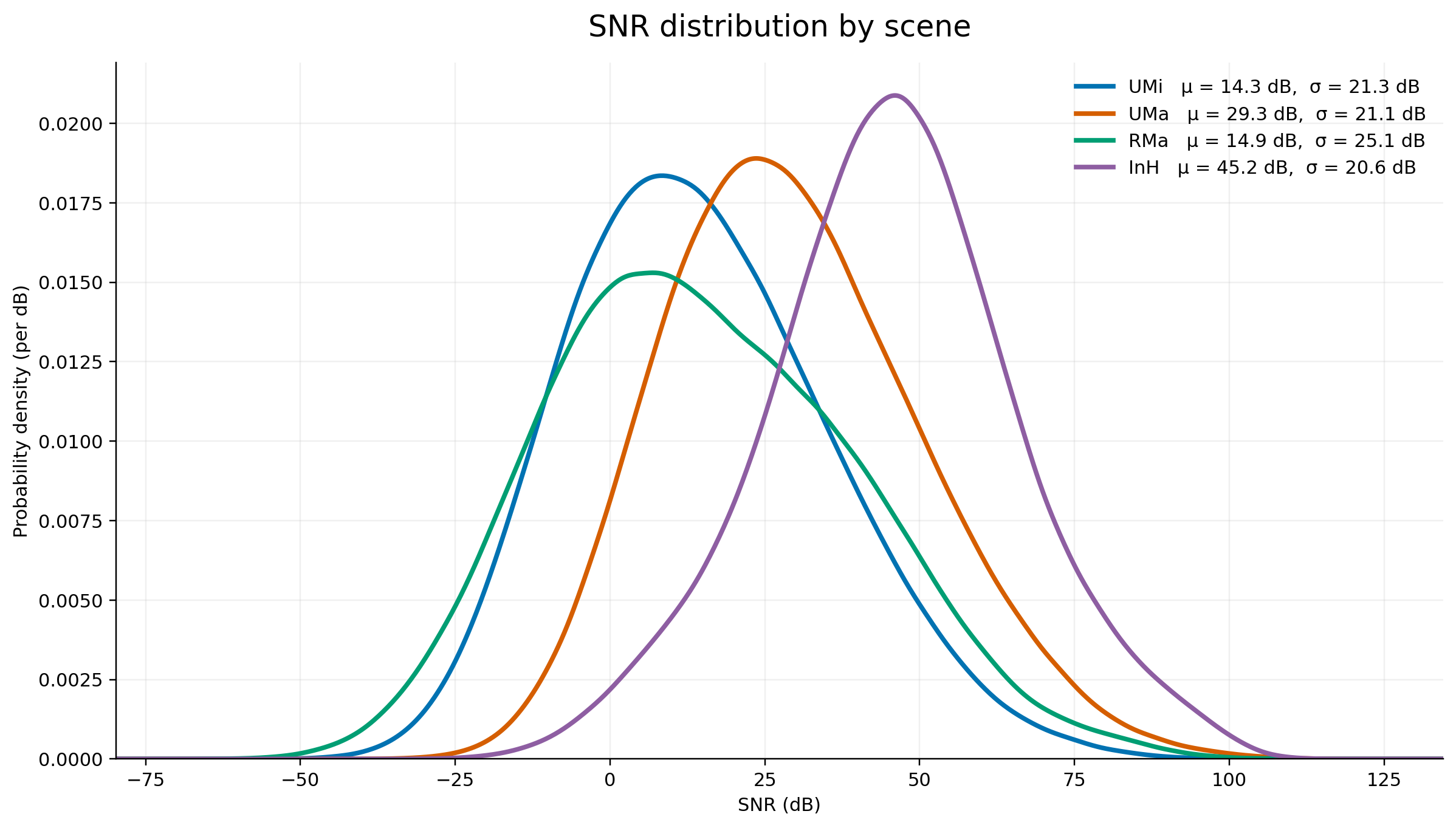}
    \caption{SNR}
    \label{fig:snr-scene}
\end{subfigure}\hfill
\begin{subfigure}[t]{0.49\linewidth}
    \centering
    \includegraphics[width=\linewidth]{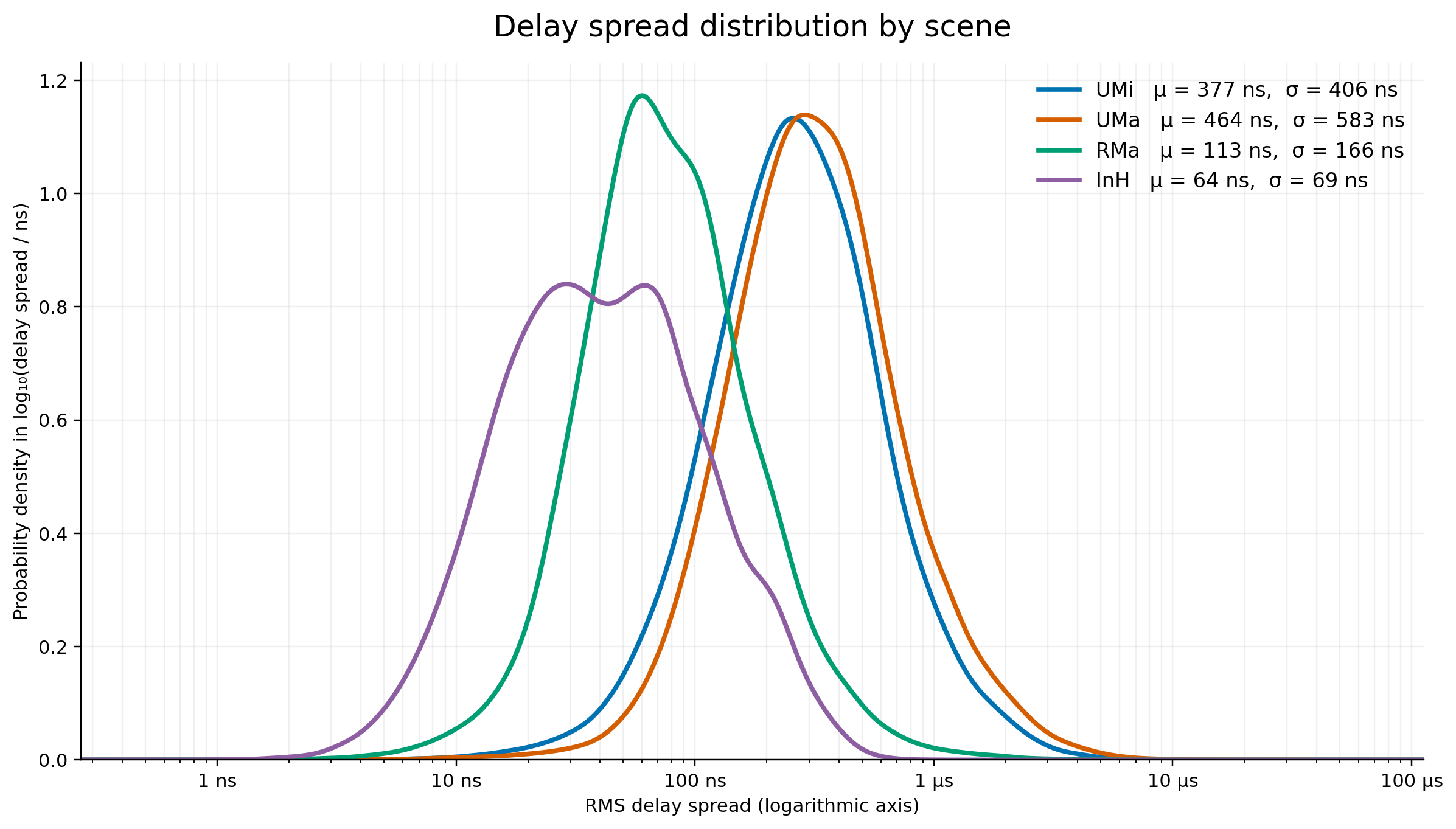}
    \caption{RMS delay spread}
    \label{fig:delay-spread-scene}
\end{subfigure}
\caption{Empirical SNR and RMS delay-spread distributions by propagation scene
of the shared-channel set.}
\label{fig:scene-distributions}
\end{figure}

The continuous targets are normalized as
\begin{align}
\widetilde{\mathrm{SNR}}
&=
\operatorname{clip}
\left(
\frac{\mathrm{SNR}_{\mathrm{dB}}-25}{25},
-3,3
\right),\\
\tilde{o}
&=
\operatorname{clip}
\left(
\frac{\log_{10}(o)+1}{0.75},
-3,3
\right),\\
\tilde{\tau}
&=
\operatorname{clip}
\left(
\frac{\log_{10}(\tau_{\mathrm{rms}}/\mathrm{ns})-2.4}{0.8},
-3,3
\right).
\end{align}
Clipping is rare in the 7.52M-recording corpus. For SNR, 0.018\% of recordings
fall below the lower threshold and 0.102\% exceed the upper threshold. No
occupancy values are clipped. For delay spread, 0.0014\% fall below the lower
threshold and fewer than 0.0001\% exceed the upper threshold.

Carrier frequency, transmit power, horizontal distance, path gain, channel
identity, and diagnostic channel quantities are retained as metadata despite
not being used as primary learning factors. These fields remain available for
auditing and stratified analysis when needed.

\subsection{Held-Out Channel Benchmark}

The held-out-channel benchmark contains 300,000 recordings whose channel
identities do not occur in the development corpus. Channel identity jointly
includes scene, topology, stochastic channel variation, propagation state, and
carrier frequency. Holding out this complete identity prevents the benchmark
from evaluating a previously observed CIR under only a different waveform or
noise realization.

The benchmark contains exactly 9,375 recordings from each of the 32 waveform
classes. It is also well distributed across propagation scenes and carrier
frequencies, while retaining the physically motivated state composition
described above. Its purpose is to measure generalization to unseen propagation realizations, but it does not provide a deliberately shifted waveform or label distribution. The deterministic list of selected recording and channel identifiers is distributed with the benchmark.

\subsection{Reproducibility and Limitations}

Dataset construction is controlled by versioned waveform manifests and
generation plans. Each waveform manifest specifies its physical parameters,
sampling rate, stored length, active region, padding, and measured occupied
bandwidth. A master seed is deterministically expanded into topology,
state-selection, waveform-selection, channel, and noise seeds using
SHA-256-derived identifiers. Each channel cache stores a fingerprint tied to
its waveform-slot plan, generation configuration, geometry rules,
channel-engine version, array configuration, and cache schema. The pipeline
rejects caches whose fingerprints do not match the active configuration.

The release checker validates all 1,066 waveform manifests and generated
pre-channel waveforms. It verifies identifier agreement, stored lengths, normalization,
padding constraints, occupied-bandwidth conditions, and SHA-256 waveform
uniqueness. We will release the dataset-generation code, configuration
manifests, seed policies, label definitions, and validation scripts upon paper
acceptance.

Although WaveScene covers diverse waveform families and propagation conditions,
it has several limitations. Channels remain static within each recording, so the current release does not include mobility, Doppler evolution,
or handover. We align waveforms to a normalized temporal origin; asynchronous
detection and crop localization therefore fall outside the benchmark. The
receiver model includes thermal noise and receiver noise figure, but excludes
carrier-frequency offset, sampling-clock offset, IQ imbalance, nonlinear
hardware distortion, co-channel interference, and external atmospheric or
man-made noise. For RMa, the nominal in-car condition uses a
Sionna-compatible approximation: half of these links are modeled as O2I and
half as outdoor links, with Sionna sampling LOS or NLOS for the outdoor portion.
All selected carriers lie within the 0.5--100~GHz scope of TR~38.901. Communications payloads use a deterministic reference source, as
described above, so the benchmark does not evaluate generalization to
independently varying message content. Finally, several manually implemented
protocol classes preserve documented RF timing and waveform morphology without
providing complete receiver-decodable protocol stacks. The radar classes model
emitted or intercepted waveforms and exclude two-way target propagation,
range-dependent return delay, radar cross section, target Doppler, and clutter.

\section{Encoder Architecture and Evaluation Details}
\label{app:encoder_details}

\subsection{Encoder Architecture}
\label{app:encoder_architecture}
This section provides implementation details ommited from Section~\ref{subsec:stage1}.
Stage~1 of ALF uses $N_{\mathrm{patch}}=1024$ and a 768-dimensional latent
for each patch. The raw-IQ and ButterflyNet-inspired spectral branches are
fused by a learned gate; the spectral branch is initialized with a radix-2
Fourier inductive bias. A 4-layer, 8-head Transformer mixes information across
patches to form per-patch embedding
$\mathbf{Z}_{\mathrm{IQ}}\in\mathbb{R}^{P\times768}$, and
a \texttt{[CLS]} token produces per-recording embedding
$\mathbf{Z}_{\mathrm{IQ\text{-}CLS}}\in\mathbb{R}^{768}$. A separate 4-layer, 8-head decoder Transformer first mixes the complete latent
sequence across patches. Each resulting token is then decoded by a dual-path
patch decoder: a transposed-convolutional raw-IQ path provides the primary
waveform reconstruction, while an inverse-Butterfly path provides a gated
spectral correction. A separate linear head predicts the patch log-RMS
amplitude, so received level does not dominate waveform-shape reconstruction.

\paragraph{Reconstruction objective.}
Let $\widehat{\mathbf X}_{\mathrm{IQ}}$ denote the decoded recording. The
reconstruction loss is
\begin{equation}
\begin{aligned}
\mathcal{L}_{\mathrm{rec}}
={}&
\mathcal{L}_{\mathrm{IQ}}
+\mathcal{L}_{\mathrm{corr}}
+0.5\mathcal{L}_{\mathrm{FFT}}
+\mathcal{L}_{\mathrm{STFT}} \\
&+0.5\mathcal{L}_{\mathrm{env}}
+0.25\mathcal{L}_{\mathrm{cov}}
+0.25\mathcal{L}_{\mathrm{level}} .
\end{aligned}
\label{eq:encoder_reconstruction}
\end{equation}
Here, $\mathcal{L}_{\mathrm{IQ}}$ is a sample-wise $\ell_1$ loss on the
real--imaginary waveform components, and
$\mathcal{L}_{\mathrm{corr}}$ is one minus the cosine similarity of the
flattened reconstructed and target IQ waveforms. The latter is scale-invariant
but phase-sensitive, preventing a low-amplitude output from always achieving a low loss.

The remaining terms compare complementary signal properties:
$\mathcal{L}_{\mathrm{FFT}}$ is an $\ell_1$ loss between full-patch
$\log(1+\lvert\mathrm{FFT}(\cdot)\rvert^2)$ spectra;
$\mathcal{L}_{\mathrm{STFT}}$ compares log-power spectrograms using a
256-point STFT with hop size 64;
and $\mathcal{L}_{\mathrm{env}}$ compares magnitude envelopes pooled into
32 temporal windows. To preserve SIMO spatial structure,
$\mathcal{L}_{\mathrm{cov}}$ is the mean absolute discrepancy between
trace-normalized complex antenna covariance matrices. Finally,
$\mathcal{L}_{\mathrm{level}}$ is an $\ell_1$ loss on the scalar amplitude
predicted by the decoder's level path.

\paragraph{Soft-target contrastive alignment.}
For a batch of $B$ recordings, let
$\mathbf z_i$ and $\mathbf c_j$ be the normalized IQ and label embeddings
defined in Section~\ref{subsec:stage1}, and let
\begin{equation}
\ell_{ij}=\frac{\mathbf z_i^\top\mathbf c_j}{\tau},
\qquad \tau=0.2.
\end{equation}
Rather than using a one-hot pairing target, we construct a metadata affinity
matrix. Let $w_i$, $s_i$, and $r_i$ denote waveform class, scene type, and
propagation state; let $a_i$, $\rho_i$, and $d_i$ denote spectral occupancy,
mean per-antenna pre-combining SNR in dB, and RMS delay spread in ns, respectively. We use
\begin{equation}
\begin{aligned}
m_{ij}
={}&
\mathbbm{1}[w_i=w_j]\,
\mathbbm{1}[s_i=s_j]\,
\mathbbm{1}[r_i=r_j] \\
&\times
\exp\left[
-\frac{1}{2}
\left(
\left(\frac{\rho_i-\rho_j}{3}\right)^2
+
\left(\frac{\log_{10}a_i-\log_{10}a_j}{0.15}\right)^2
+
\left(\frac{\log_{10}d_i-\log_{10}d_j}{0.25}\right)^2
\right)
\right].
\end{aligned}
\label{eq:softclip_affinity}
\end{equation}
The normalized row targets are
$q_{ij}=m_{ij}/\sum_{k=1}^{B}m_{ik}$. The symmetric SoftCLIP loss is
\begin{equation}
\mathcal{L}_{\mathrm{CLIP}}
=
-\frac{1}{2B}
\sum_{i=1}^{B}\sum_{j=1}^{B}
\left[
q_{ij}\log\operatorname{softmax}_{j}(\ell_{ij})
+
q_{ji}\log\operatorname{softmax}_{j}(\ell_{ji})
\right].
\label{eq:softclip_loss}
\end{equation}
Consequently, recordings must agree on the three discrete label fields to be
positive pairs, while nearby occupancy, SNR, and delay-spread values receive
larger relative weight. This avoids treating semantically similar recordings as
equally negative merely because they have varied labels.

\paragraph{Structural objective.}
The last loss applied to
$\mathbf Z_{\mathrm{IQ\text{-}CLS}}$ is the structure loss, $\mathcal{L}_{\mathrm{struct}}$.It encourages the representation to
retain each metadata factor, distribute factors across distinct embedding
groups, and preserve nonzero variation across embedding dimensions.

Specifically, the 768-dimensional embedding is partitioned into $G=8$ groups
of 96 dimensions. For each of the six metadata fields
\[
\mathcal{F}=
\{\text{waveform},\text{occupancy},\text{SNR},\text{scene},
\text{propagation state},\text{delay spread}\},
\]
a lightweight gated MLP probe predicts the corresponding target. Categorical
factors use cross-entropy and continuous factors use mean-squared error after
the training-set normalization. Denoting the sum of these six losses by
$\mathcal{L}_{\mathrm{probe}}$, the complete structural loss is
\begin{equation}
\begin{aligned}
\mathcal{L}_{\mathrm{struct}}
={}&
\alpha_{\mathrm{probe}}\mathcal{L}_{\mathrm{probe}}
+\alpha_{\mathrm{gate}}\mathcal{L}_{\mathrm{gate}}
+\alpha_{\mathrm{group}}\mathcal{L}_{\mathrm{group}} \\
&+
\alpha_{\mathrm{dim}}\mathcal{L}_{\mathrm{dim}}
+\alpha_{\mathrm{var}}\mathcal{L}_{\mathrm{var}}.
\end{aligned}
\label{eq:struct_loss}
\end{equation}
Each factor has a softmax-normalized gate over the eight groups.
$\mathcal{L}_{\mathrm{gate}}$ penalizes overlap between the average gates for
the waveform--occupancy factors and those for the propagation-related factors
(SNR, scene, propagation state, and delay spread). This encourages, but does
not force, the two types of information to use distinct portions of the
embedding.

For a minibatch embedding matrix $\mathbf Z\in\mathbb{R}^{B\times768}$,
$\mathcal{L}_{\mathrm{group}}$ is the mean squared cross-covariance between
distinct 96-dimensional groups, while
\begin{equation}
\mathcal{L}_{\mathrm{dim}}
=
\frac{1}{768}
\sum_{u\neq v}
\operatorname{Cov}(\mathbf Z)_{uv}^{2},
\qquad
\mathcal{L}_{\mathrm{var}}
=
\frac{1}{768}\sum_{u=1}^{768}
\left[\max\!\left(0,\gamma-\operatorname{Std}(\mathbf Z_{:u})\right)\right]^2,
\label{eq:decorrelation_variance}
\end{equation}
with $\gamma=0.02$. The group and dimension covariance terms discourage
redundant encoding, whereas the variance-floor term directly guards against
dimensional collapse. The gated factor probe is used only as a training-time
auxiliary and is discarded after Stage~1; it is distinct from the post-hoc MLP
probes used for evaluation.

\paragraph{Optimization schedule.}
We jointly optimize
$\mathcal{L}_{\mathrm{S1}}=
\mathcal{L}_{\mathrm{rec}}+
\lambda_{\mathrm{CLIP}}(e)\mathcal{L}_{\mathrm{CLIP}}+
\lambda_{\mathrm{struct}}(e)\mathcal{L}_{\mathrm{struct}}$
for 20 epochs using AdamW with learning rate $2\times10^{-4}$, weight decay
$0.05$, and gradient clipping at $1.0$. Reconstruction and the initial
structural terms are active from the outset. The CLIP coefficient is scheduled
from $0$ to $0.25$ and then $0.5$, while the structural coefficients are
increased from their initially weak settings over later epochs. This staging
allows the decoder to first establish a stable reconstructive latent before
strong alignment and decorrelation pressures are applied.

\paragraph{Training and probing protocol.}
Stage~1 of ALF is trained on the deterministic training partition of the
\datasetname{} 5.6M development set, comprising approximately $5.3$M recording--label pairs for training; the remaining $5\%$ of that development set ($0.3$M pairs) is used for validation. The seen and unseen-channel test sets later discussed are never used during training. We train for 20 epochs with AdamW,
a learning rate of $2\times10^{-4}$, weight decay of $0.05$, gradient clipping at $1.0$, and a global batch size of 288 across 4 GPUs. All Stage-1 components, including the IQ encoder, decoder, and label encoder, are subsequently frozen.

The representation probes reported in this work are post-hoc evaluations and
are distinct from the training-time factor probes in
$\mathcal{L}_{\mathrm{struct}}$. For each frozen encoder and prediction task,
we train an identical one-hidden-layer MLP on a fixed 5M subset of the
5.3M training partition, select the model using a disjoint 256K validation
set, and repeat training with 5 fixed random seeds. Each resulting probe is
evaluated unchanged on both a 300K-recording subset of the complete 1.9M seen-channel test set
and the 300K channel-held-out test set. The seen-channel subset is matched to
the channel-held-out set across waveform class, scene type, and propagation
state, isolating the effect of channel realization while preserving the
\datasetname{} label space.

\subsection{Adaptation of External RF Encoders}
\label{app:baseline_adaptation}

We compare ALF with WavesFM \citep{aboulfotouh2025wavesfm},
IQFM \citep{mashaal2026iqfm}, and IQFormer \citep{shao2025iqformer}. Our \datasetname{} recordings are longer and span a broader combination of
waveform families, channel environments, and propagation states than the data
used by the external encoders in their original publications. We therefore regard these as best reproducible adaptations of the available baseline
implementations: they preserve each method's native encoder and validated input recipe while matching the \datasetname{} training recordings and frozen-probe protocol wherever retraining is possible. We do not add ALF's cross-patch mixing, label encoder, reconstruction objective, or structural constraints to any external baseline.

\paragraph{WavesFM.}
We use the released pretrained checkpoint, retaining its IQ tokenizer and
8-block Transformer. The encoder receives four-antenna,
$N_{\mathrm{view}}=256$ IQ views with per-recording maximum-absolute
normalization, and its 256-dimensional \texttt{[CLS]} feature is used for
probing. For deterministic evaluation, we use the center view from the
fixed window bank. No \datasetname{} waveform, scene, or channel label is used
to train or adapt this encoder.

\paragraph{IQFM.}
We retain IQFM's native multi-antenna SSL-Joint ShuffleNetV2-x0.5 encoder and
its self-supervised objective. Each recording contributes one
$N_{\mathrm{view}}=256$ view per training visit, selected from up to 8 evenly
spaced views; the two contrastive augmentations are constructed from that
selected view. Evaluation uses the deterministic center view. Thus, the
training budget counts source recordings rather than treating multiple views from
one recording as independent recordings.

\paragraph{IQFormer.}
We retain IQFormer's published single-antenna IQ--STFT encoder and remove its
classification layer to expose the 64-dimensional pooled feature. Its
time--frequency preprocessing uses a Blackman-window STFT with window length
31, hop 1, FFT size 128, and the first 32 frequency bins. IQFormer is trained
with waveform-class supervision over the 32 \datasetname{} waveform classes before probe training; therefore, its waveform result should be interpreted with the favorable supervised pretraining setting explicitly in mind. We do not use its
unpublished auxiliary power feature.

\paragraph{ALF encoder.}
Also refered to as ALF's stage 1, ALF encoder processes the complete native recording as a sequence of
1,024-sample patches and uses the frozen
$\mathbf{Z}_{\mathrm{IQ\text{-}CLS}}$ embedding from Stage~1 training (\ref{subsec:stage1}). Its encoder
therefore has access to recording-level context unavailable to external baselines due to their different $N_{\mathrm{view}}=256$.

\paragraph{Training comparability.}
Stage~1 ALF training uses the established 5.36M-recording training partition.
The retrained IQFM and IQFormer encoders use a fixed 5M-recording training
subset with a disjoint 256K-recording validation subset. All downstream
frozen-probe and label-efficiency evaluations use the same fixed 5M-recording
fit subset, disjoint 256K-recording validation subset, and matched test sets
for every encoder. Runtime efficiency is measured at the same input
lengths for every model. Baseline-specific augmentation, supervision, and
optimization details are retained when they are intrinsic to the published
method. WavesFM instead uses its released pretrained checkpoint. The external
baselines' native 256-sample views are a limitation of this comparison; we
preserve them to avoid silently replacing their validated training regimes with
a new long-context architecture.

\paragraph{Encoder size and long-recording scaling.}
Tables~\ref{tab:encoder_size} and~\ref{tab:encoder_runtime} compare the
encoder-only model size and batch-1 inference cost. For ALF, we exclude all
reconstruction-only and generation-only components, including the decoder-side
cross-patch mixer, dual-path waveform decoder, amplitude head, label encoder,
spectral autoencoder, and latent flow. ALF therefore uses a 60.31M-parameter
encoder to produce its 768-dimensional recording-level embedding.

We measure FP32 CUDA-event latency and peak forward activation memory on an
NVIDIA H100 NVL using inputs of
$N_{\mathrm{samples}}\in\{2048,8192,32768\}$; data loading and downstream
MLP probes are excluded. ALF increases from 9.6~ms to 12.6~ms across this
length range, whereas IQFormer increases from 6.4~ms to 89.8~ms. The external
encoders are evaluated through their implemented variable-length forwards for
this scaling diagnostic, although their reported frozen-probe results use their
native 256-sample views. Consequently, Table~\ref{tab:encoder_runtime}
characterizes long-recording computational scaling rather than an equal-context representation comparison.

\begin{table}[H]
\centering
\small
\caption{Encoder-only model size.}
\label{tab:encoder_size}
\begin{tabular}{lrrr}
\toprule
Encoder & Parameters (M) & FP16 weights (MiB) & Embedding dim. \\
\midrule
\textbf{ALF} & \textbf{60.31} & \textbf{115.0} & \textbf{768} \\
WavesFM & 6.40 & 12.2 & 256 \\
IQFM & 0.34 & 0.7 & 1024 \\
IQFormer & 0.35 & 0.7 & 64 \\
\bottomrule
\end{tabular}
\end{table}

\begin{table}[H]
\centering
\scriptsize
\caption{Encoder-only batch-1 FP32 inference cost on one GPU. Each model uses its native antenna
convention; ALF, WavesFM, and IQFM receive 4 antennas, while IQFormer receives
antenna 0. Values exclude resident model weights, data loading and downstream heads.}
\label{tab:encoder_runtime}
\resizebox{\linewidth}{!}{%
\begin{tabular}{lrrrrrr}
\toprule
Encoder
& \multicolumn{2}{c}{$N_{\mathrm{samples}}=2{,}048$}
& \multicolumn{2}{c}{$N_{\mathrm{samples}}=8{,}192$}
& \multicolumn{2}{c}{$N_{\mathrm{samples}}=32{,}768$} \\
& Latency (ms) $\downarrow$ & Peak activation (MiB) $\downarrow$
& Latency (ms) $\downarrow$ & Peak activation (MiB) $\downarrow$
& Latency (ms) $\downarrow$ & Peak activation (MiB) $\downarrow$ \\
\midrule
\textbf{ALF} & \textbf{9.6} & \textbf{10.6} & \textbf{11.2} & \textbf{42.5} & \textbf{12.6} & \textbf{170.0} \\
WavesFM & 1.9 & 3.4 & 1.9 & 3.5 & 2.0 & 4.3 \\
IQFM & 2.8 & 1.1 & 2.8 & 4.4 & 2.7 & 9.8 \\
IQFormer & 6.4 & 41.1 & 23.9 & 68.3 & 89.8 & 173.4 \\
\bottomrule
\end{tabular}%
}
\end{table}

\subsection{Frozen-Encoder Probing Protocol}
\label{app:encoder_probes}

All encoders are frozen before downstream probing. For every encoder and
target, we train the same one-hidden-layer MLP,
\begin{equation}
\mathbf h_i=
\operatorname{GELU}(\mathbf W_1\overline{\mathbf z}_i+\mathbf b_1),
\qquad
\widehat{\mathbf y}_i=
\mathbf W_2\operatorname{Dropout}_{0.1}(\mathbf h_i)+\mathbf b_2,
\label{eq:probe_mlp}
\end{equation}
where $\overline{\mathbf z}_i$ is the feature standardized using statistics
computed only from the probe-training partition, and the hidden width is 512.

Each probe is trained on 5M recordings from the fixed training subset and selected
using the disjoint 256K-recording validation set. We use AdamW with learning rate
$10^{-3}$, weight decay $10^{-2}$, batch size 16,384, and 5 epochs. For every
encoder--task pair, we train 5 matched seeds
$(13,29,42,71,101)$ and retain the checkpoint with the best validation
macro-F1 for classification or lowest validation RMSE for regression. The same
saved probe is evaluated without adaptation on both the matched 300K-recording
seen-channel test subset and the 300K channel-held-out test set.

Waveform recognition covers all 32 classes and is summarized by macro-F1.
Mean per-antenna pre-combining SNR is evaluated as continuous regression in dB,
while RMS delay spread is regressed in $\log_{10}(\mathrm{ns})$. Accuracy,
macro recall, AUROC, MAE, and rank correlation are retained as secondary metrics.
Table~\ref{tab:frozen_encoder_probes} reports mean performance and two-sided
95\% Student-$t$ confidence intervals across the 5 matched seeds.

\begin{table}[H]
\centering
\scriptsize
\caption{Frozen-encoder probing results. Waveform scores are macro-F1 (\%);
SNR is regressed in dB; and delay spread is regressed in
$\log_{10}(\mathrm{ns})$. Values are mean $\pm$ two-sided 95\%
Student-$t$ confidence intervals over 5 matched probe seeds.}
\label{tab:frozen_encoder_probes}
\setlength{\tabcolsep}{3.2pt}
\resizebox{\linewidth}{!}{%
\begin{tabular}{lccc|ccc}
\toprule
& \multicolumn{3}{c}{Seen channels}
& \multicolumn{3}{c}{Channel-held-out channels} \\
Encoder
& Waveform F1 (\%) $\uparrow$
& SNR RMSE $\downarrow$
& Delay RMSE $\downarrow$
& Waveform F1 (\%) $\uparrow$
& SNR RMSE $\downarrow$
& Delay RMSE $\downarrow$ \\
\midrule
WavesFM
& $50.35\pm0.08$ & $15.937\pm0.024$ & $0.4822\pm0.0007$
& $51.08\pm0.08$ & $15.348\pm0.016$ & $0.4784\pm0.0009$ \\
IQFormer
& $66.18\pm0.09$ & $9.416\pm0.023$ & $0.4482\pm0.0003$
& $67.11\pm0.13$ & $9.062\pm0.014$ & $0.4419\pm0.0005$ \\
IQFM
& $60.00\pm0.55$ & $14.173\pm0.051$ & $0.4294\pm0.0013$
& $60.64\pm0.51$ & $13.608\pm0.036$ & $0.4243\pm0.0012$ \\
\textbf{ALF}
& $\mathbf{99.21\pm0.01}$ & $\mathbf{1.259\pm0.004}$
& $\mathbf{0.1752\pm0.0001}$
& $\mathbf{99.37\pm0.01}$ & $\mathbf{1.201\pm0.004}$
& $\mathbf{0.1678\pm0.0001}$ \\
\bottomrule
\end{tabular}%
}
\end{table}

\subsection{Label efficiency}
\label{app:probe_scaling}

To measure how readily task information can be extracted from each frozen
representation, we repeat waveform probing with nested deterministic prefixes
of \(1\)K, \(10\)K, \(100\)K, \(1\)M, and \(5\)M recordings from the same ordered
probe-training split. The validation set, probe architecture, optimizer,
training duration, five seeds, and 300K test sets remain unchanged. Even the
1K prefix contains all 32 waveform classes, with 22--40 recordings per class.

\begin{figure*}[t]
\centering
\includegraphics[width=0.98\textwidth]{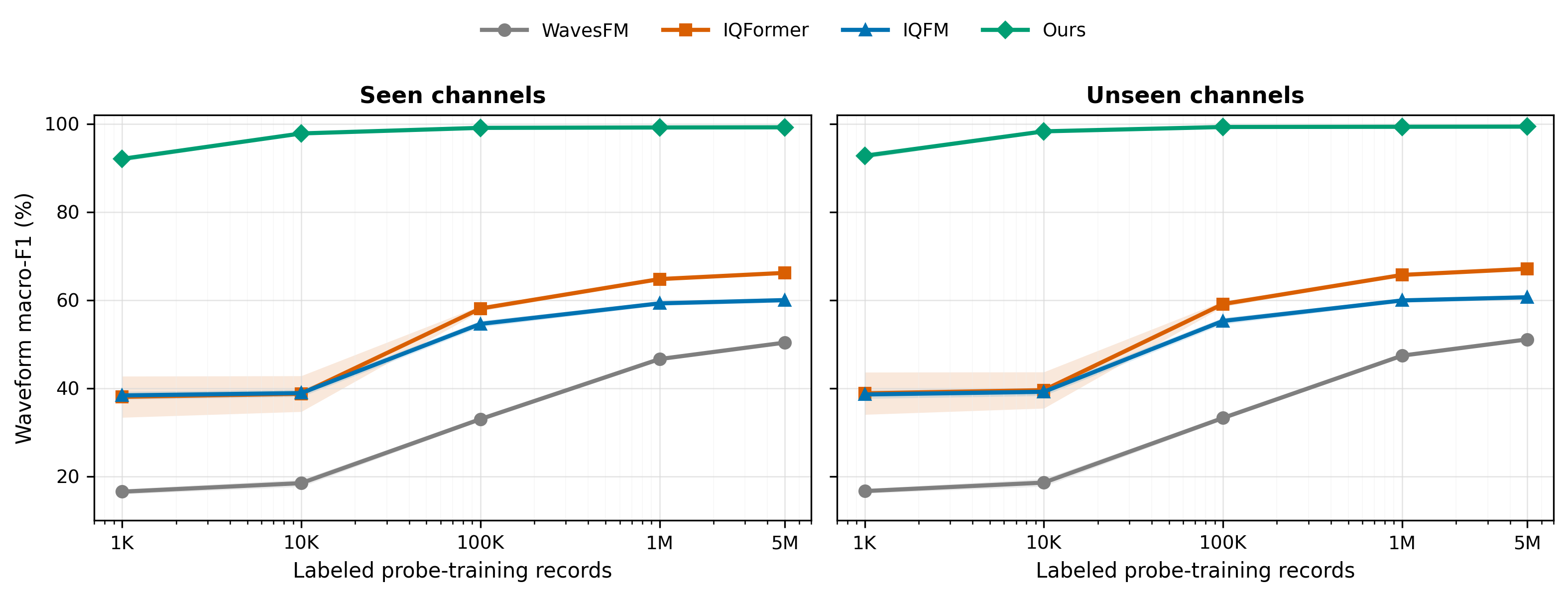}
\caption{Waveform classification as the number of labeled probe-training
recordings is varied. Curves show mean macro-F1 and shaded 95\% confidence
intervals across five matched seeds. Our frozen representation reaches
\(92.0\%\) and \(92.7\%\) macro-F1 with only 1K labeled recordings on seen and
unseen channels, respectively, and exceeds \(99\%\) by 100K recordings.}
\label{fig:probe_sample_efficiency}
\end{figure*}

The nearly coincident seen and channel-held-out curves show that ALF's label efficiency extends beyond the channel realizations used for training. Because \datasetname{} training data already span substantial channel variability, the channel-held-out set evaluates controlled channel generalization, not a severe distribution shift. Our encoder reaches \(97.8\%\) seen-channel and
\(98.3\%\) unseen-channel macro-F1 with 10K recordings and largely saturates by
100K. In comparison, the external encoders continue improving between 100K and
5M recordings but plateau at substantially lower performance. IQFormer's wider
confidence interval at 1K and 10K additionally indicates greater sensitivity
to the small probe-training set.

\begin{figure*}[t]
\centering
\includegraphics[width=0.72\textwidth]{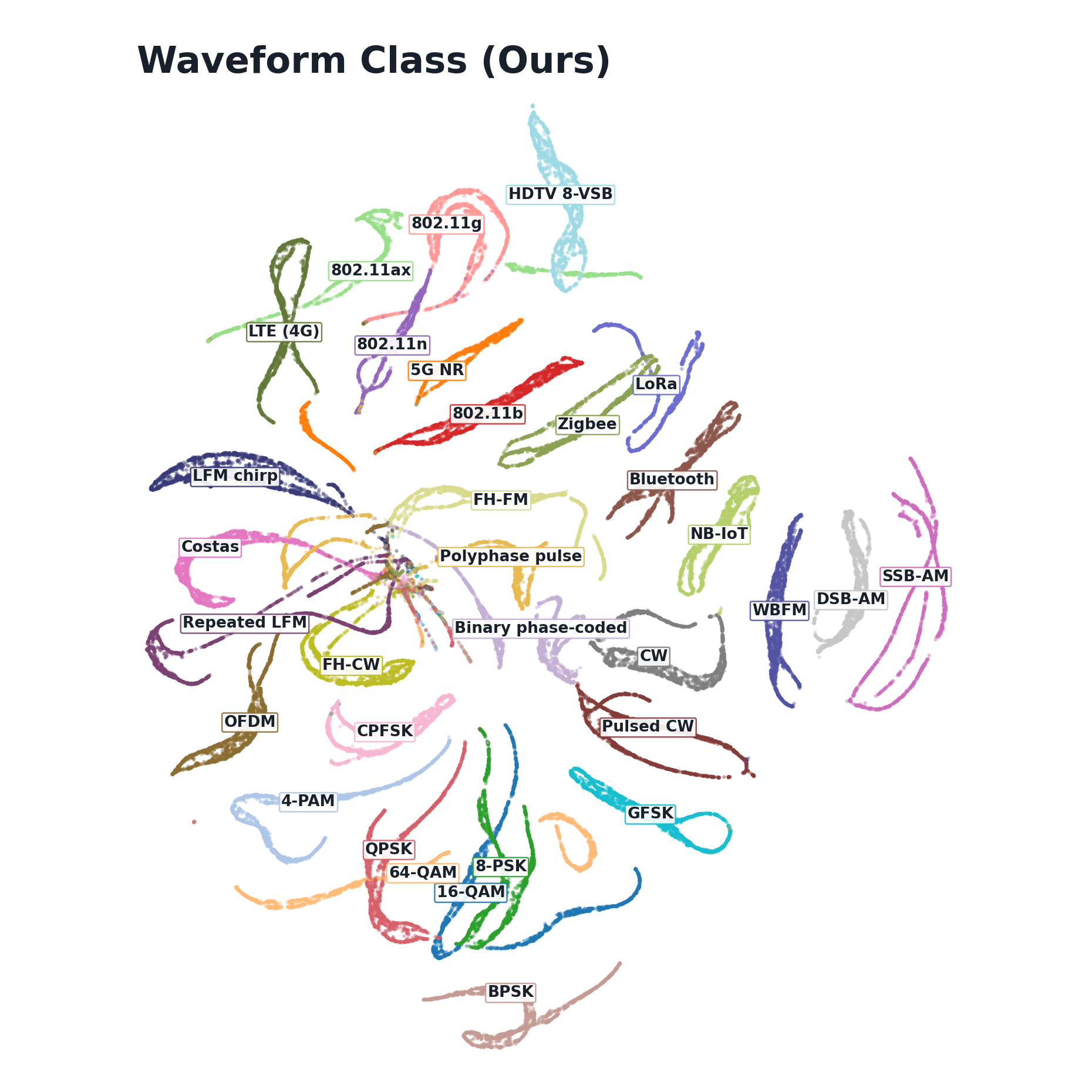}
\caption{UMAP of the ALF encoder's signal embedding over waveform class.}
\label{fig:umap_waveform}
\end{figure*}
\pagebreak

\newpage
\subsection{Embedding geometry}
\label{app:encoder_umap}

\begin{minipage}{0.96\textwidth}
\centering
\captionsetup{type=figure}

\begin{subfigure}[t]{0.49\textwidth}
    \centering
    \includegraphics[width=\linewidth]{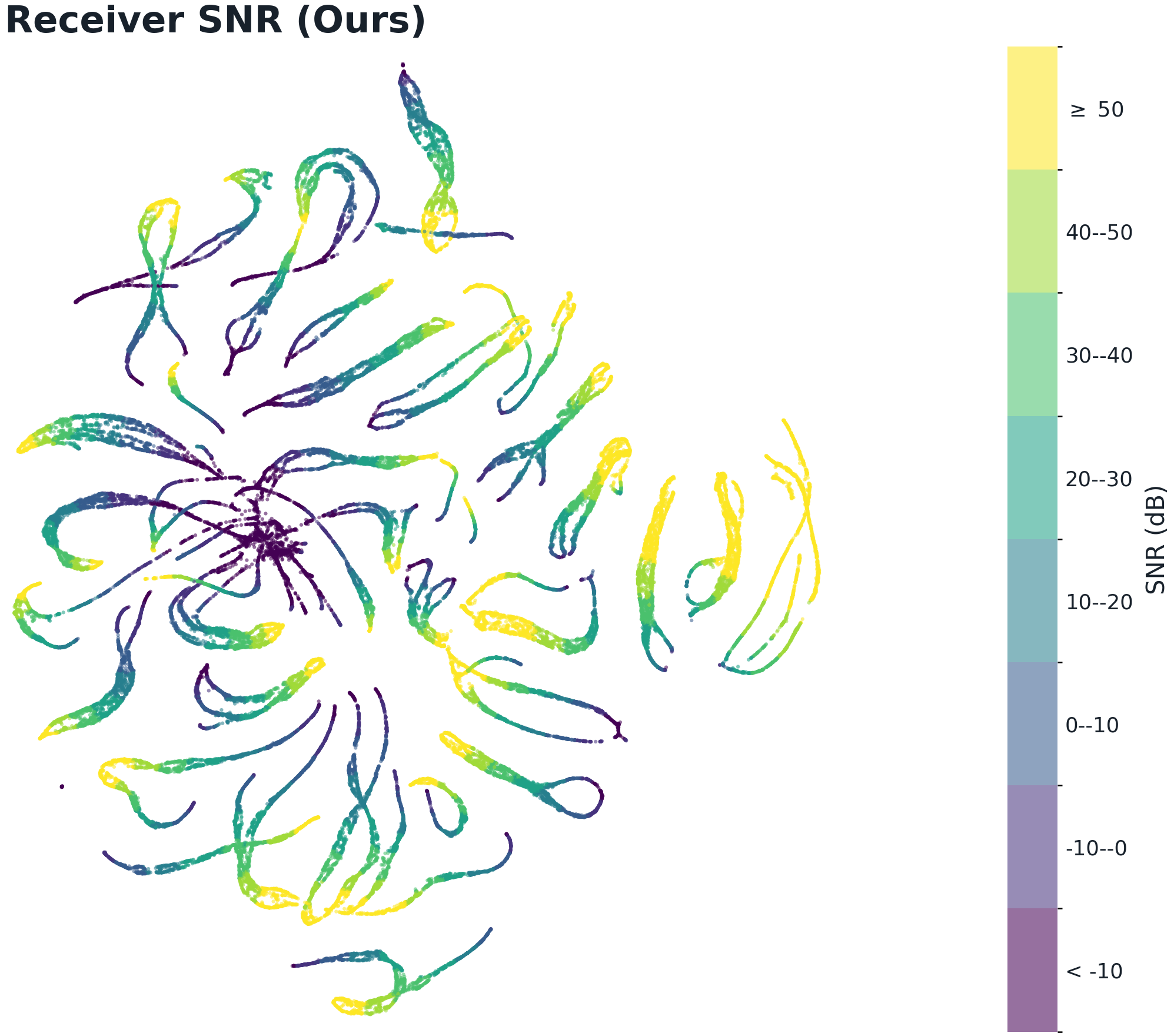}
    \caption{Receiver SNR.}
\end{subfigure}
\hfill
\begin{subfigure}[t]{0.49\textwidth}
    \centering
    \includegraphics[width=\linewidth]{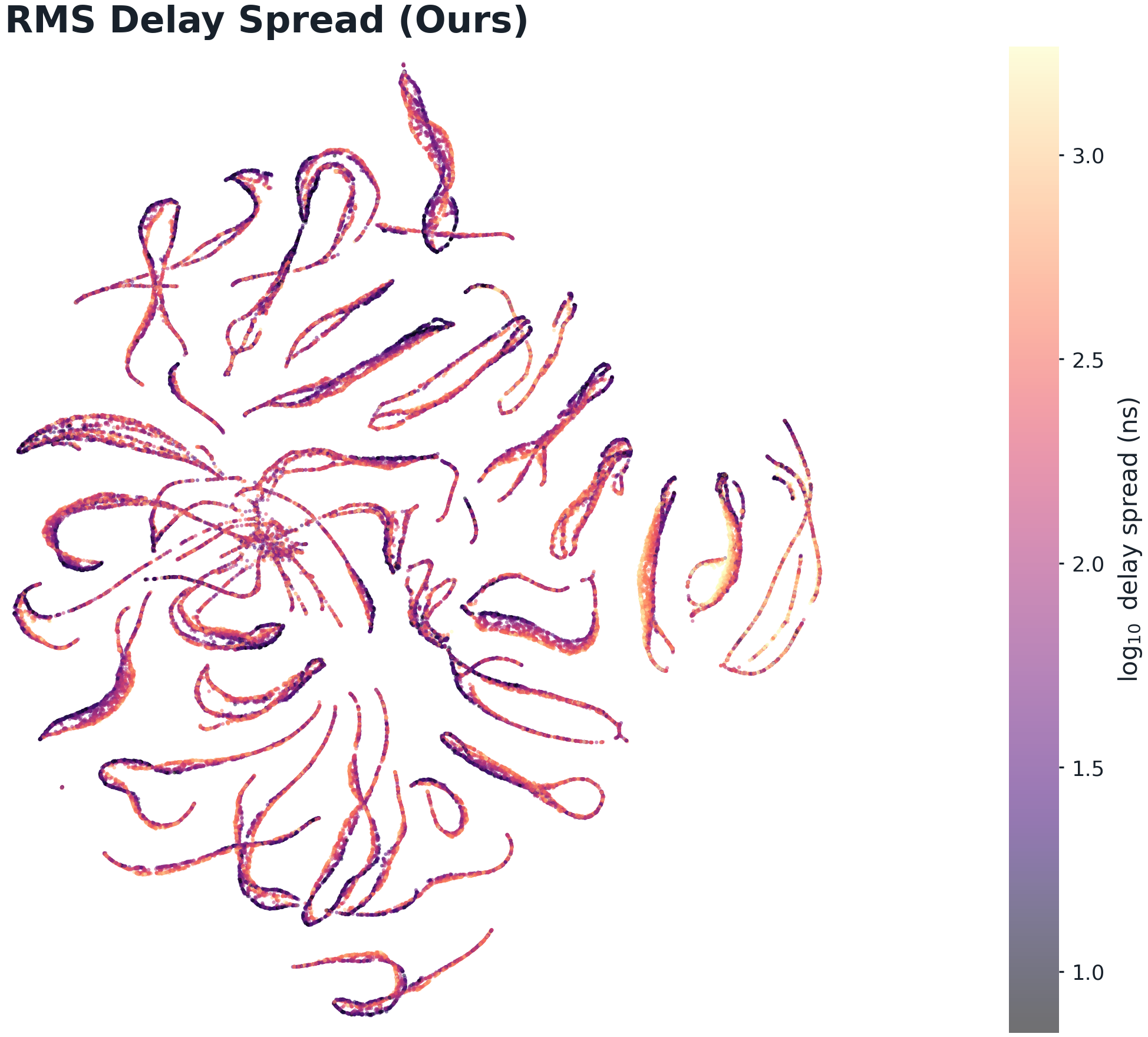}
    \caption{RMS delay spread.}
\end{subfigure}

\caption{ALF encoder's UMAP coordinates colored by receiver SNR and RMS delay
spread.}
\label{fig:umap_propagation}
\end{minipage}

\newpage

We visualize the final 768-D signal embedding on 64,000 deterministically
selected held-out seen-channel recordings using UMAP \citep{mcinnes2018umap}. Embeddings are first normalized to unit
length and reduced to 50 principal components, which retain 45.1\% of their
variance. UMAP is then fitted with 50 neighbors, minimum distance 0.15, cosine
distance, and random seed 42. All panels use the same recordings and coordinates, only the displayed labels change.

Figures~\ref{fig:umap_waveform} and~\ref{fig:umap_propagation} show that
ALF's final embedding organizes waveform class into distinct manifolds while
retaining smooth variation in mean per-antenna pre-combining SNR and RMS delay
spread. In contrast, the adapted external encoders exhibit more overlapping
qualitative geometry in Figure~\ref{fig:baseline_umap}. As with any UMAP
visualization, these plots are descriptive; the frozen-encoder probe results
(Table~\ref{tab:frozen_encoder_probes}) provide the quantitative comparison.

\subsection{Zero-shot alignment and exact retrieval}
\label{app:encoder_zeroshot}

Beyond reconstruction, we evaluate whether the frozen encoder is directly useful for downstream RF understanding. These experiments test its label-aligned embedding space without adapting the signal encoder: conditional zero-shot prediction measures attribute-level organization, while exact retrieval measures whether a signal can recover its complete metadata configuration.

For conditional zero-shot prediction, we construct a metadata prototype for each candidate target value. For waveform classification, the candidates are the 32 waveform classes. For SNR and occupancy, we partition the training-set values into 32 quantile bins and use one prototype per bin. A prototype is the average metadata embedding from 512 training recordings having
the corresponding waveform class or continuous-value bin; their remaining metadata attributes retain their natural training-set variation. At test time, we assign each signal to the prototype with the largest cosine similarity. For continuous targets, the prediction is the training-set median of the selected bin. Test labels are used only to score these predictions.

Exact paired retrieval tests a stricter form of alignment. For each signal
query, we form its positive candidate from the complete 6-label metadata
configuration of the same recording: waveform class, scene type, propagation
state, mean per-antenna pre-combining SNR, spectral occupancy, and RMS delay
spread. We rank this candidate against the metadata configurations of every
recording in the same test gallery using cosine similarity. Recall at $K$ (R@$K$)
is the fraction of queries for which the matched 6-factor configuration appears
among the $K$ highest-ranked candidates.

\begin{minipage}{0.9\textwidth}
\centering
\includegraphics[width=0.85\linewidth]{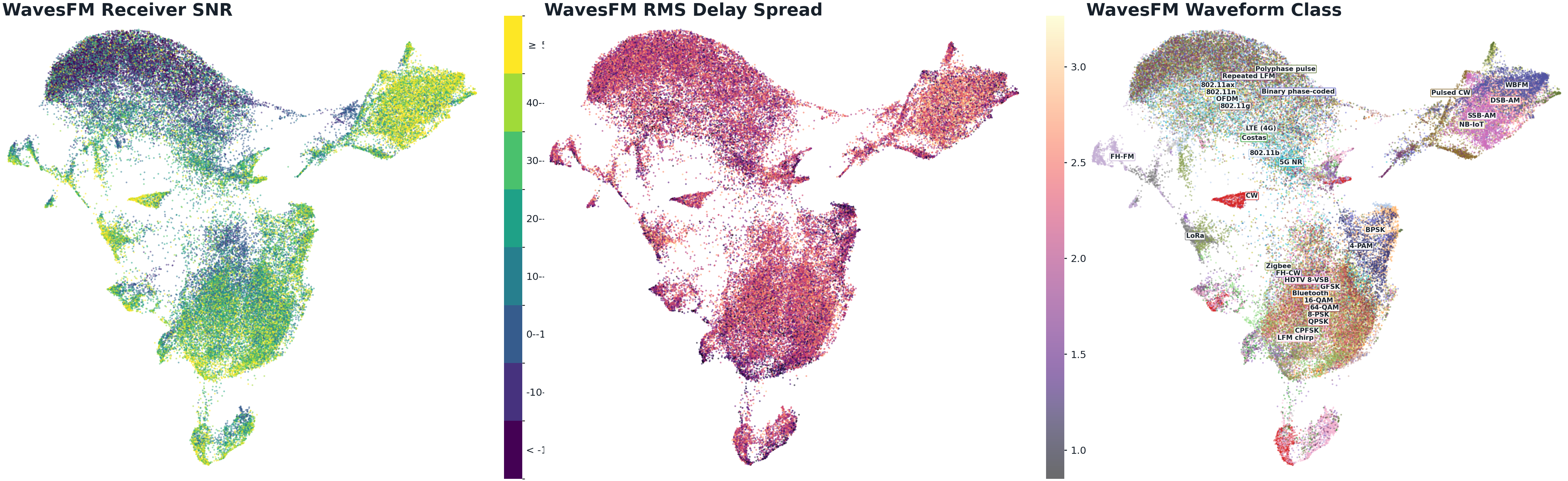}

\vspace{2pt}
\includegraphics[width=0.85\linewidth]{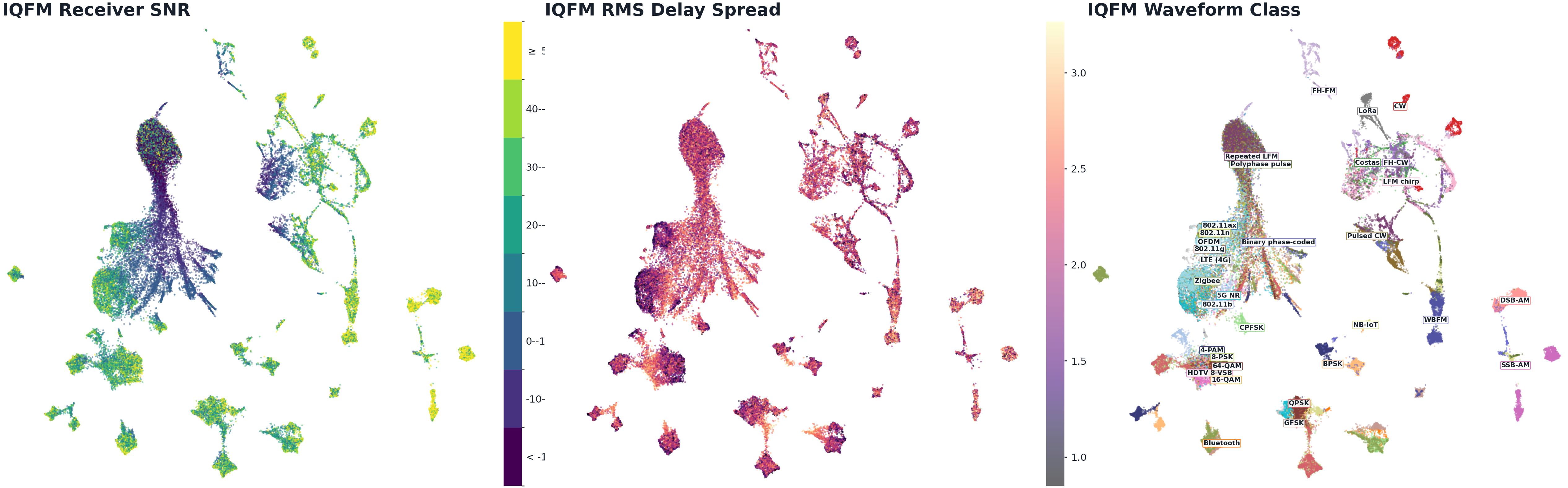}

\vspace{2pt}
\includegraphics[width=0.85\linewidth]{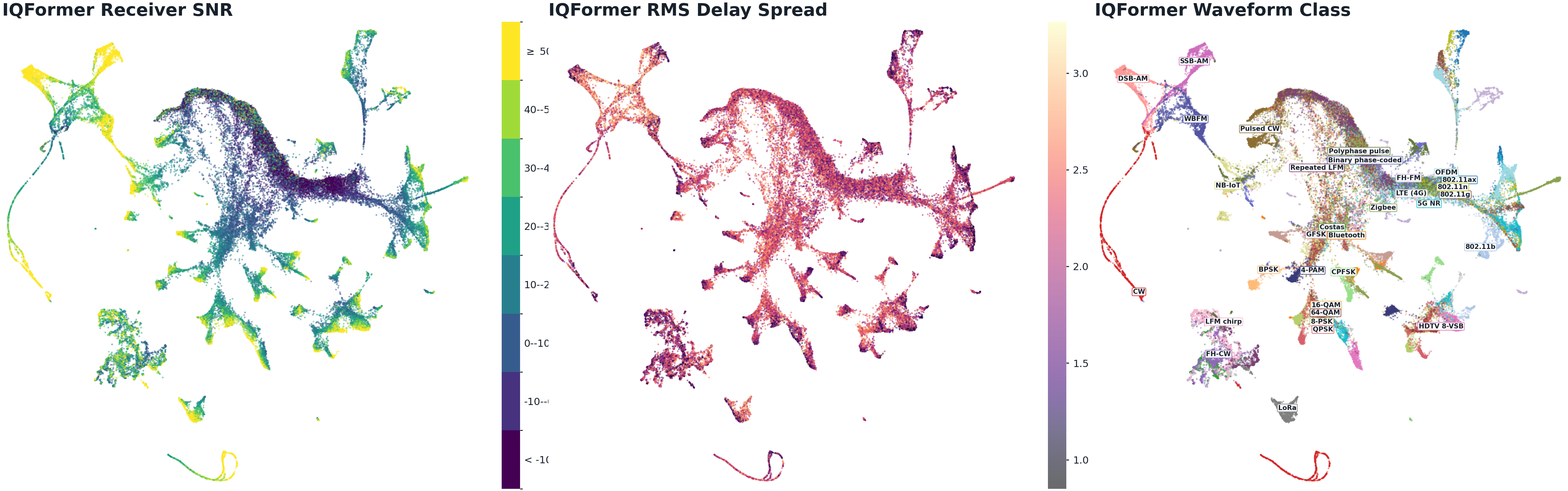}

\captionof{figure}{Frozen-encoder UMAPs for WavesFM, IQFM, and IQFormer.
Within each row, waveform, receiver-SNR, and delay-spread panels share the
same coordinates.}
\label{fig:baseline_umap}
\end{minipage}

\begin{table}[H]
\centering
\small
\caption{Zero-shot alignment and exact paired retrieval for the proposed
encoder. Conditional zero-shot values are mean \(\pm\) 95\% confidence interval
over five prototype-sampling seeds.}
\label{tab:encoder_zeroshot}
\begin{tabular}{lcc}
\toprule
Metric & Seen channels & Unseen channels \\
\midrule
Zero-shot waveform accuracy (\%) \(\uparrow\)
    & \(95.55\pm0.21\) & \(95.87\pm0.23\) \\
Zero-shot SNR RMSE (dB) \(\downarrow\)
    & \(5.204\pm0.213\) & \(4.811\pm0.176\) \\
Zero-shot occupancy RMSE (log fraction) \(\downarrow\)
    & \(0.151\pm0.009\) & \(0.134\pm0.010\) \\
\midrule
Exact retrieval R@1 (\%) \(\uparrow\)
    & \(28.17\) & \(27.72\) \\
Exact retrieval R@5 (\%) \(\uparrow\)
    & \(60.02\) & \(59.62\) \\
Exact retrieval R@10 (\%) \(\uparrow\)
    & \(72.84\) & \(72.66\) \\
\bottomrule
\end{tabular}
\end{table}

Each retrieval gallery contains 300K candidate metadata configurations, so every
query is ranked against a large set of closely related negative recordings that may
share waveform, channel attributes, or both. Despite this strict matched-gallery
setting, the encoder retrieves the complete 6-factor configuration within its
top 10 candidates for $72.84\%$ of seen-channel queries and $72.66\%$ of
channel-held-out queries, demonstrating that its aligned embedding preserves
substantially more than waveform identity alone.

\newpage
\section{ALF Details and Evaluations}

\subsection{Hyperparameters}
\label{appx:hyperparams}
\begin{longtable}{l l p{0.52\linewidth}}
\caption{\modelname{} hyperparameters. Values marked $\dagger$ are inherited
from the frozen autoencoder and are not free parameters of the flow.}
\label{tab:appx-hyperparams} \\
\toprule
Hyperparameter & Value & Role \\
\midrule
\endfirsthead
\toprule
Hyperparameter & Value & Role \\
\midrule
\endhead
\bottomrule
\endfoot
\multicolumn{3}{l}{\emph{Backbone}} \\
Depth                 & $12$    & Number of DiT blocks \citep{peebles2023scalable} \\
Width                 & $768$   & Residual and token width of every DiT block \\
Heads                 & $12$    & Attention heads, head dimension $64$\\
MLP expansion         & $4$     & Hidden width of each block's feed-forward layer \\
Max patch tokens      & $32$    & Longest record the model emits in one pass \\
Patch length          & $1024^{\dagger}$ & IQ samples per patch \\
Patch dimension       & $768^{\dagger}$  & Latent width per patch \\
Flow parameters       & $131{,}134{,}464$ & Size of the flow network, excluding the anchor head, the frozen autoencoder, and the conditioning encoder. \\
\midrule
\multicolumn{3}{l}{\emph{Conditioning}} \\
Conditioning dimension & $768$  & Width of the label vector \\
Conditioning factors   & $6$    & Waveform class, scene type, propagation state, estimated SNR, occupancy, RMS delay spread $\S$~\ref{ap:dataset-labels} \\
Label encoder          & frozen & The encoder is fixed \\
Prior                  & enabled & Sampling from label encoder mean and variance \\
\midrule
\multicolumn{3}{l}{\emph{Latent preprocessing and objective}} \\
Objective              & $v=Z_1-Z_0$ & Rectified-flow velocity target \\
Standardizer batches   & $60$    & Batches per rank used to fit the per-dimension latent mean and variance \\
Clip $\sigma$          & $8.0$   & Winsorising threshold applied after standardising \\
Time distribution      & logit-normal $(0,1)$ & Distribution the flow time $t$ is drawn from during training \\
Signal scale           & $10^{4}$ & Fixed input scaling the encoder was fitted under. \\
\midrule
\multicolumn{3}{l}{\emph{Optimization}} \\
Optimizer              & AdamW   & - \\
Learning rate          & $1\times10^{-4}$ & Peak rate. \\
Weight decay           & $0.02$  & Applied to trainable flow parameters only \\
Gradient clip          & $1.0$   & Global gradient-norm clip applied before each step. \\
Batch size             & $64\times4$ & Records per rank across four ranks; effective batch size $256$. \\
Epochs                 & $20$    & Full passes over the training split. \\
Schedule               & warmup $+$ cosine & Stepped once per epoch: one warmup epoch, then cosine decay. The rate is flat at peak for the first two epochs and at ${\sim}0.7\%$ of peak at the end. \\
Seed                   & $42$    & Initialization and data order. \\
\midrule
\multicolumn{3}{l}{\emph{Anchor}} \\
Anchor layer           & $6$ of $12$ & At the midpoint of DiT Depth \\
Anchor hidden width    & $768$   & Width of the head's MLP \\
Anchor weight          & $0.25$  & Weight of the auxiliary term relative to the flow loss\\
Hold / ramp epochs     & $0$ / $2$ & Weight ramps linearly \\
Time weighting         & $t / \bar{t}$ & Weights each record's anchor loss toward low $t$ \\
Head parameters        & $2{,}363{,}904$ & Discarded at inference, only used for anchoring\\
\midrule
\multicolumn{3}{l}{\emph{Sampling}} \\
ODE steps              & $50$    & Integration steps from noise to data. \\
Integrator             & Heun    & Second-order predictor--corrector solver. \\
Guidance scale         & $1.0$   & No classifier-free guidance; conditioning enters only through the modulation path. \\
Output normalisation   & RMS     & Rescales each record to unit mean power \\
Records per corpus     & $100{,}000$ & Generated under held-out test-split conditioning. \\
Sampling seed          & $0$     & Sampling noise draw. \\
Edge refiner           & $\geq30$\,dB & Refined variants only: a post-hoc pass over records whose conditioning SNR clears this gate. Unrefined corpora are otherwise identical. \\
\end{longtable}

\vspace{-20pt}

\subsection{Label-Conditioned Gaussian Prior}
\label{sec:gaussian_prior}
Rather than transporting isotropic white Gaussian noise $\mathbf{Z}_0 \sim \mathcal{N}(0, \mathbf{I})$ to clean latents, \modelname{} initializes its rectified-flow trajectories from the multimodal source distribution $p_{\mathrm{src}}(\cdot \mid \mathbf{Y})$ parameterized by the frozen probabilistic label tower $\theta_{\mathrm{lbl}}(\mathbf{Y})$. The label encoder consumes exactly the six physical RF metadata factors $\mathbf{Y}$---waveform class, SNR, channel occupancy, propagation scenario, channel state, and RMS delay spread (Appendix~\ref{ap:dataset-labels})---mapping them into a 768-dimensional diagonal Gaussian. For an arbitrary record of length $P$ patches (spanning 2{,}048 to 32{,}768 IQ samples at 1{,}024 samples per patch), the initial flow state at $t=0$, denoted $\mathbf{Z}_0 \in \mathbb{R}^{P \times 768}$, is constructed by taking $P$ independent draws from that record's conditioning cloud:
\begin{equation}
    \mathbf{Z}_0^{(p)} = \boldsymbol{\mu}_{\mathrm{lbl}}(\mathbf{Y}) + \boldsymbol{\sigma}_{\mathrm{lbl}}(\mathbf{Y}) \odot \mathbf{z}_p, \quad \mathbf{z}_p \sim \mathcal{N}(0, \mathbf{I}), \quad p \in \{1, \dots, P\}.
    \label{eq:label_source}
\end{equation}
Because the rectified flow interpolation $\mathbf{Z}_t = (1-t)\mathbf{Z}_0 + t\mathbf{Z}_1$ and velocity target $v = \mathbf{Z}_1 - \mathbf{Z}_0$ require only that training and sampling draw from the identical distribution, this formulation restricts the flow's starting domain to a class-structured hyperspherical shell ($\|\boldsymbol{\mu}_{\mathrm{lbl}}(\mathbf{Y})\| \approx 24.9$ versus $\sqrt{768} \approx 27.7$) retaining approximately $43\%$ of standard normal entropy per dimension ($\sigma_{\mathrm{lbl}} \approx 0.444$). Under this label-conditioned source, conditioning dropout is strictly set to $0.0$; an unconditional branch would otherwise default to standard normal noise, invalidating the velocity-mixing assumption of classifier-free guidance. Conditioning information is thus present twice: statically as the spatial boundary state $\mathbf{Z}_0$ of the flow trajectory, and dynamically across all flow times $t$ via adaptive LayerNorm (adaLN) modulation in the DiT backbone, which jointly injects the normalized direction of $\boldsymbol{\mu}_{\mathrm{lbl}}(\mathbf{Y})$ alongside a learned patch-length embedding of $P$.

\subsection{Separate Spectral Autoencoder and Mid-Depth Anchoring}
\label{sec:spectral_anchor}
To enforce physically consistent spectral properties without incurring test-time computational overhead, \modelname{} incorporates the auxiliary mid-depth supervisory anchor driven by the independently trained, frozen spectral autoencoder (38.6M parameters). Operating on per-patch STFT log-power spectrograms ($256$ frequency bins $\times$ $13$ time frames per patch, computed with an FFT size and window of $256$, hop size $64$, and antenna power averaging), this autoencoder is trained with absolute level preservation, ensuring the latent representation faithfully encodes both spectral shape and wide dynamic range (spanning $-15$\,dB to $+68$\,dB SNR). Its post-mixing latent sequence $\mathbf{Z}_{\mathrm{spec}}$ reflects global recording context via a 4-layer Transformer mixer while maintaining local invertibility back to the patch spectrum via a per-patch 2D convolutional decoder.

During flow training, the spectral autoencoder remains frozen. The auxiliary prediction head $\theta_{\mathrm{anchor}}$ ($768 \to 768 \to 768$ MLP, 2.36M parameters) is tapped off the intermediate hidden representation $\mathbf{H}_t^{(6)}$ after the $6^{th}$ Transformer block of the 12-block flow backbone. The head acts as a mid-depth denoising probe trained to predict the standardized clean spectrogram latent $\widetilde{\mathbf{Z}}_{\mathrm{spec}}$ via mean squared error. Velocity matching is deliberately avoided because coupling the IQ latent noise draw with an independently trained spectral latent space is ill-posed. As shown in Equation~\ref{eq:alf_objective}, the auxiliary spectral loss is weighted by $t / \bar{t}$:
\begin{equation}
    \mathcal{L}_{\mathrm{anchor}} = \frac{t}{\bar{t}} \, \big\| \theta_{\mathrm{anchor}}\!\left(\mathbf{H}_t^{(6)}\right) - \widetilde{\mathbf{Z}}_{\mathrm{spec}} \big\|_2^2,
    \label{eq:aux_spectral_loss}
\end{equation}
which concentrates gradient pressure toward the target IQ distribution ($t \to 1$) where intermediate activations carry recoverable spectral structure. The anchor term enters the total objective with weight $\lambda_{\mathrm{anchor}}(e)$ ramped linearly to $0.25$ over the first two epochs, with its output projection zero-initialized so as not to destabilize early flow optimization. Because $\theta_{\mathrm{anchor}}$ provides representational regularization exclusively during backpropagation, it is discarded at inference, adding zero parameters and zero FLOPs to generation.

\subsection{Training Consistency and Generation Protocol}
\label{sec:training_consistency}
To isolate the empirical gains of the inductive biases during ablation, all flow models were trained to a convergence under identical optimization protocols: AdamW with decoupled weight decay ($0.02$), gradient-norm clipping at $1.0$, a batch size of $256$ across 4 ranks, and a 20-epoch warmup-cosine learning rate schedule peaking at $10^{-4}$ (with checkpoints selected by lowest held-out validation loss; epoch 19 for \modelname{}). For downstream evaluation, all synthetic corpora were generated under an identical protocol: 100{,}000 records conditioned on the held-out test split under fixed sampling seed $0$, integrated from $t=0$ to $t=1$ using a second-order Heun ODE solver at $50$ steps, guidance scale $1.0$ (no classifier-free guidance), decoded to time-domain waveforms using the frozen decoder $\theta_{\mathrm{dec\text{-}IQ}}$, and scaled via per-record RMS power normalization to unit mean energy (see Table~\ref{tab:appx-hyperparams} for full hyperparameter specifications). The 5.6M development set with the same deterministic train/validation is used for training.

The three baselines differ in how far they support the corpus's 5
recording lengths ($N_{\mathrm{samples}}\in\{2048,4096,8192,16384,32768\}$). \textbf{TimeWeaver} \citep{narasimhan2024time} is natively variable-length: a single model is trained jointly on all five buckets because nothing in the denoiser is learned per length. Batches are length-homogeneous (bucket sampling, with no padding or attention mask avoids feeding masked zeros into the target), and the per-rank batch is scaled as $16\cdot 2048/L$ so that $B\!\cdot\!C\!\cdot\!L$ is held constant; together with the sum-reduced training loss, this makes the loss scale length-invariant, so the original learning rate transfers mostly unchanged. \textbf{WaveStitch} \citep{shankar2025wavestitch} covers
the same range only by tiling: its S4 backbone \citep{gu2022iclr} sees a fixed $256$-sample window,
so a length-$L$ record is cut into $L/256$ non-overlapping windows, the same
conditioning vector is broadcast to every window, each window is denoised
independently, and the results are concatenated. We generate in the
metadata-only mode, which uses neither stitching nor self-guidance, so the
windows are uncoupled: the model has no mechanism for structure longer than one
window, and a seam occurs every $256$ samples. \textbf{RF-Diffusion} \citep{chi2024rfdiffusion} is not
adapted at all---it applies full $O(N^{2})$ self-attention to raw samples with
no patching, which is tractable at $2048$ but not at $8192$ and above without an
architectural change, so it is both trained and evaluated at $L=2048$ only. Its
numbers therefore correspond to the $2048$ column and are not comparable to the
other arms' length-pooled figures.
Again, the 5.6M development set with the same deterministic train/validation is used for training, and we trained till a convergence is reached.

\newpage
\subsection{Inference}
\label{subsec:inference}
\begin{wrapfigure}{hr}{0.4\textwidth}
  \vspace{-0.5em}
  \centering
  \includegraphics[width=0.38\textwidth]{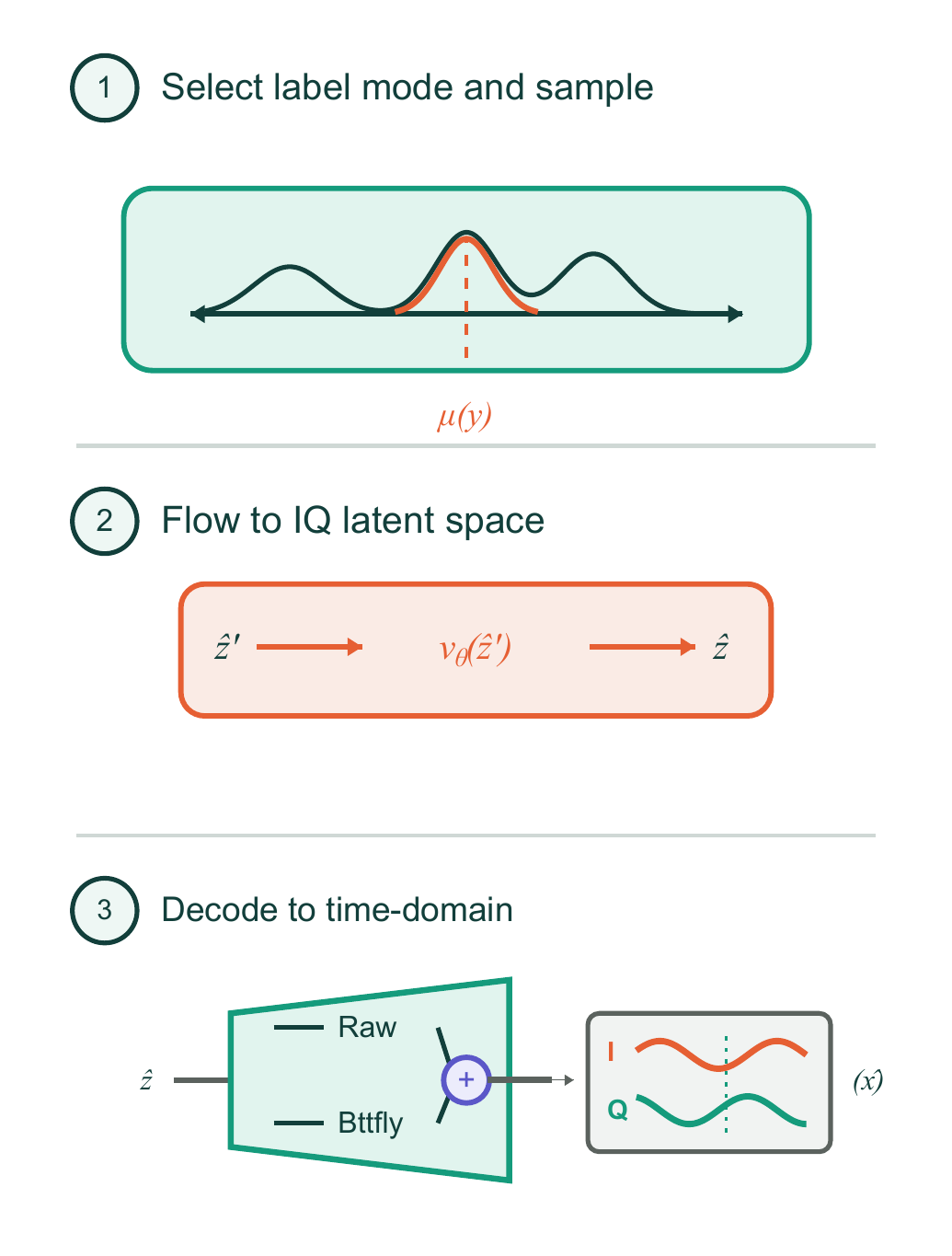}
  \caption{Inference Pipeline}
  \label{fig:inference}
  \vspace{-15pt}
\end{wrapfigure}

The cost of a multi-stage anchored flow training finally yields its benefits at inference. Pure data generation is divided into three pieces: Sample, Flow, Decode (Figure~\ref{fig:inference}). The only user-specified information needed is the desired features that characterize the target waveform. Additionally, our per-patch design ($\S$~\ref{subsec:stage1},\ref{subsec:anchoredflow}) makes inference length-agnostic, with potentially no limit on generation length (Appendix~\ref{ap:ar-gen}). Label specification conditions the flow model's prior, giving $p_{z(1)}(z|c)$. \modelname\space continues to pay dividends; its latent nature achieves significant speed advantages to prior works (Table~\ref{tab:inf-speed}, $\S$~\ref{sec:results}).

\subsection{Refined vs. Unrefined}
\label{ap:ref-unref}
Due to the nature of the per-patch AE structure, although trained for continuity, some results still show small fluctuations at patch seams. In the frequency domain, these high-frequency components translate to stripes in generations. Although signal integrity is preserved (i.e., overall SNR is the same), these artifacts resulted as increase in MSE metrics (PSD and Spectrogram). Thus, we built a refining mechanism for post-decode smoothing. 

The refiner is a small dilated 1-D residual CNN with per-antenna inputs $I$, $Q$, $|z|$, and the cosine and sine of the lag-one phase increment; the mixer tokens of the two patches adjoining each join, $(h_{p-1}, h_p)$, are linearly projected and added to the feature map within $\pm 32$ samples of that join. This network emits a per-antenna polar correction $(\Delta a, \Delta\phi)$, $\tanh$-bounded to $\pm 0.5$ in log-amplitude and $\pm \pi/2$ in phase, applied multiplicatively as $z \leftarrow z\, e^{m\Delta a} e^{j m \Delta\phi}$. We use a polar correction instead of a complex linear blend because blending phase-decorrelated signals can create amplitude notches and broadband artifacts. For 802.11ax ($\kappa \approx 0.15$), scaling and rotation avoid these issues. At inference, an SNR threshold is set to avoid tampering with intentionally noisy generations, and records above the determined level are passed through this refinement pipeline to smooth seam transitions.

\begin{table}[H]
\centering
\caption{MSE within 95\% Confidence over 100k Recordings}
\label{tab:mse-comparison}
\small
\begin{tabular}{lrrrr}
\toprule
& \multicolumn{2}{c}{Unrefined} & \multicolumn{2}{c}{Refined} \\
\cmidrule(lr){2-3}\cmidrule(lr){4-5}
Run & Med PSD MSE & Med Spec MSE & Med PSD MSE & Med Spec MSE \\
\midrule
Baseline w/o Prior
& 19.77 $\pm$ 0.20
& 127.85 $\pm$ 0.48
& 17.87 $\pm$ 0.16
& 120.29 $\pm$ 0.50 \\
Baseline w/ Prior
& 19.17 $\pm$ 0.18
& 128.71 $\pm$ 0.50
& 17.48 $\pm$ 0.15
& 121.09 $\pm$ 0.47 \\
\textbf{\modelname\space (ours)}
& \textbf{18.52 $\pm$ 0.19}
& \textbf{126.54 $\pm$ 0.46}
& \textbf{16.86 $\pm$ 0.14}
& \textbf{119.08 $\pm$ 0.45} \\
\midrule
\modelname\space vs. Base w/o Prior
& +6.3\% & +1.0\% & +5.6\% & +1.0\% \\
\modelname\space vs. Base w/ Prior
& +3.4\% & +1.7\% & +3.5\% & +1.7\% \\
\bottomrule
\end{tabular}
\end{table}
Table~\ref{tab:mse-comparison} shows the distributional improvements of roughly 2dB and 6.5dB for PSD and Spectrogram MSEs, respectively, across all ablations. A design choice was made to include this refiner for all latent flow models (ALF, LF-M, LF-N) compared in the main paper (Table~\ref{tab:combined-rf-robustness},$\S$~\ref{sec:results}).

\subsection{Robustness}
Table~\ref{tab:combined-rf-robustness} shows that ALF slowed the record collapse curve. Figure~\ref{fig:robust} visualizes the reclassification data in Table~\ref{tab:source-corruption-multiarm} and improvements, including a broader view across inference steps. Notably, the same trend continues through 5 steps; at a single inference step, ALF remains the best-performing. Finally, note that Time Weaver experiences the largest collapse in this region due to the naturally noisier process of diffusion models over flows. 
\begin{figure}[H]
    \centering
    \includegraphics[width=0.8\linewidth]{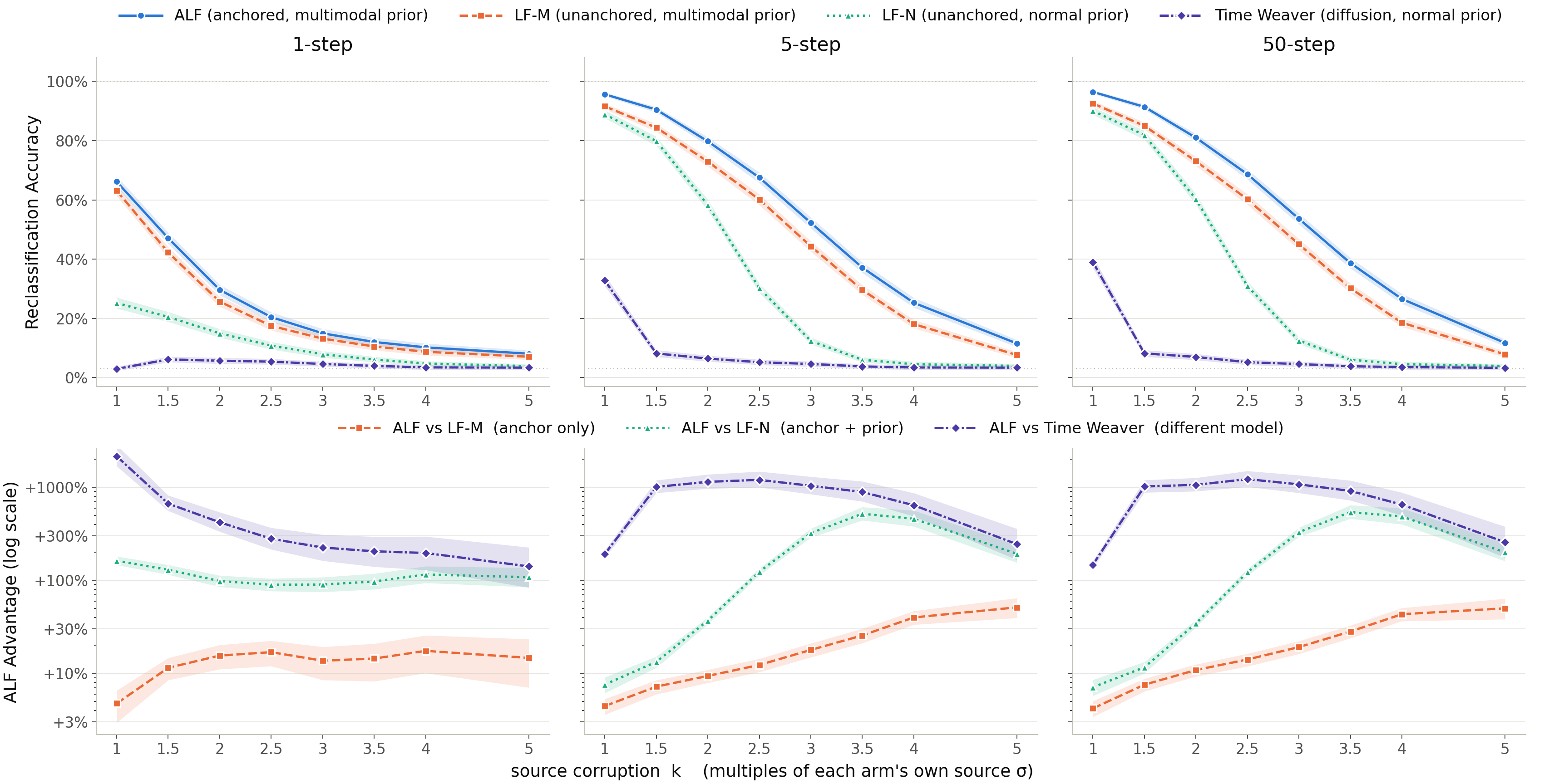}
    \caption{Signal reclassification accuracies for low-density sampling and various inference steps}
    \label{fig:robust}
\end{figure}
\vspace{-20pt}
\begin{table*}[h]
\centering
\scriptsize
\begin{threeparttable}

\caption{Recovery under source corruption at matched sigma multiples $\mathbf{k}\sigma$}
\label{tab:source-corruption-multiarm}

\setlength{\tabcolsep}{2.5pt}
\begin{tabular}{c r r c r r c c r r c r r c c}
\toprule
& \multicolumn{7}{c}{5 flow steps}
& \multicolumn{7}{c}{50 flow steps} \\
\cmidrule(lr){2-8} \cmidrule(lr){9-15}
& \multicolumn{3}{c}{Multimodal Gaussian}
& \multicolumn{4}{c}{Standard normal}
& \multicolumn{3}{c}{Multimodal Gaussian}
& \multicolumn{4}{c}{Standard normal} \\
\cmidrule(lr){2-4} \cmidrule(lr){5-8}
\cmidrule(lr){9-11} \cmidrule(lr){12-15}
$\mathbf{k}$
& \textbf{ALF} & LF & $\Delta_{\%}^{\mathrm{LF}}$
& LF & TW & $\Delta_{\%}^{\mathrm{LF}}$ & $\Delta_{\%}^{\mathrm{TW}}$
& \textbf{ALF} & LF & $\Delta_{\%}^{\mathrm{LF}}$
& LF & TW & $\Delta_{\%}^{\mathrm{LF}}$ & $\Delta_{\%}^{\mathrm{TW}}$ \\
\midrule
1.0
& \textbf{95.7} & 91.6 & \textbf{+4.5\%}
& 88.9 & 32.8 & \textbf{+7.6\%} & \textbf{+192.1\%}
& \textbf{96.5} & 92.6 & \textbf{+4.2\%}
& 90.1 & 39.0 & \textbf{+7.1\%} & \textbf{+147.3\%} \\
1.5
& \textbf{90.5} & 84.4 & \textbf{+7.2\%}
& 79.8 & 8.2 & \textbf{+13.3\%} & \textbf{+1009.2\%}
& \textbf{91.4} & 85.0 & \textbf{+7.6\%}
& 81.9 & 8.2 & \textbf{+11.6\%} & \textbf{+1020.7\%} \\
2.0
& \textbf{79.8} & 72.9 & \textbf{+9.4\%}
& 58.3 & 6.4 & \textbf{+36.9\%} & \textbf{+1145.4\%}
& \textbf{81.0} & 73.1 & \textbf{+10.9\%}
& 60.3 & 7.0 & \textbf{+34.4\%} & \textbf{+1062.3\%} \\
2.5
& \textbf{67.5} & 60.1 & \textbf{+12.3\%}
& 30.2 & 5.2 & \textbf{+123.8\%} & \textbf{+1202.1\%}
& \textbf{68.7} & 60.2 & \textbf{+14.1\%}
& 30.9 & 5.2 & \textbf{+122.2\%} & \textbf{+1223.8\%} \\
3.0
& \textbf{52.3} & 44.3 & \textbf{+17.9\%}
& 12.4 & 4.6 & \textbf{+321.8\%} & \textbf{+1037.8\%}
& \textbf{53.6} & 45.0 & \textbf{+19.2\%}
& 12.5 & 4.6 & \textbf{+329.5\%} & \textbf{+1075.3\%} \\
3.5
& \textbf{37.1} & 29.6 & \textbf{+25.6\%}
& 6.0 & 3.8 & \textbf{+519.0\%} & \textbf{+890.4\%}
& \textbf{38.7} & 30.2 & \textbf{+28.2\%}
& 6.0 & 3.8 & \textbf{+540.9\%} & \textbf{+913.9\%} \\
4.0
& \textbf{25.2} & 18.0 & \textbf{+39.9\%}
& 4.5 & 3.4 & \textbf{+459.2\%} & \textbf{+641.3\%}
& \textbf{26.5} & 18.5 & \textbf{+43.3\%}
& 4.5 & 3.5 & \textbf{+485.9\%} & \textbf{+651.8\%} \\
5.0
& \textbf{11.5} & 7.6 & \textbf{+51.2\%}
& 3.9 & 3.3 & \textbf{+192.9\%} & \textbf{+244.9\%}
& \textbf{11.7} & 7.8 & \textbf{+50.0\%}
& 3.9 & 3.3 & \textbf{+200.0\%} & \textbf{+257.1\%} \\
\bottomrule
\end{tabular}

\begin{tablenotes}[flushleft]
\scriptsize
\item[] \textit{Note.} LF-M: unanchored LF with a metadata prior; LF-N: unanchored LF with a standard-normal prior. TimeWeaver (TW) uses equal-NFE DDIM sampling with 10 and 100 steps, respectively.
\end{tablenotes}

\end{threeparttable}
\end{table*}

\subsection{Style Transfer Cont.}
\label{app:anchored-vs-unanchored-ST}
\begin{figure}[H]
    \centering
    \includegraphics[width=0.8\linewidth]{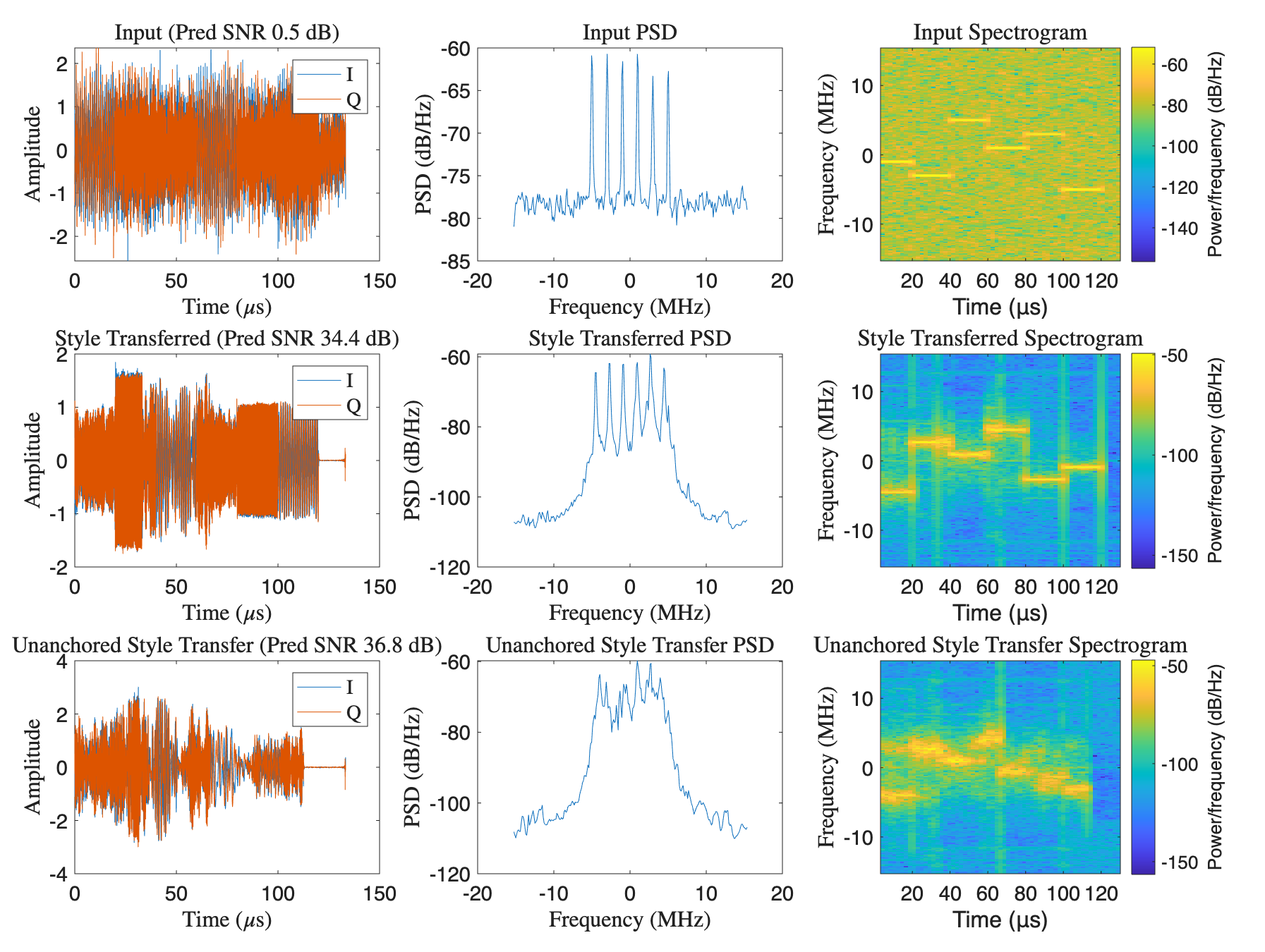}
    \caption{Unanchored style transfers are more prone to spectral blurring without anchored guidance}
\end{figure}
Signals are encoded, and their modes are decided through a max likelihood defined as:
\begin{equation}
\hat{c} \;=\; \arg\max_{c \in \mathcal{C}}
\frac{1}{P}\sum_{p=1}^{P}
\log \mathcal{N}\!\bigl(z_{0}^{(p)}\,\big|\,\mu_{c},\,\mathrm{diag}(\sigma_{c}^{2})\bigr)
\end{equation}
We also highlight that anchoring reduces the observed blurring effect in the unanchored result. The following uses naive sampling while ablating ALF and LF-M (unanchored metadata-prior) models. As seen below, the unanchored style transfer's spectrogram has far less sharp frequency detail. This is expected because anchoring constrains flowed latent trajectories during style-transfer regeneration (Appendix~\ref{app:theory}).

\subsection{Within-Recording Phase Consistency}
\label{sec:within-record-phase}
\begin{figure}[H]
    \centering
    \includegraphics[width=0.8\linewidth]{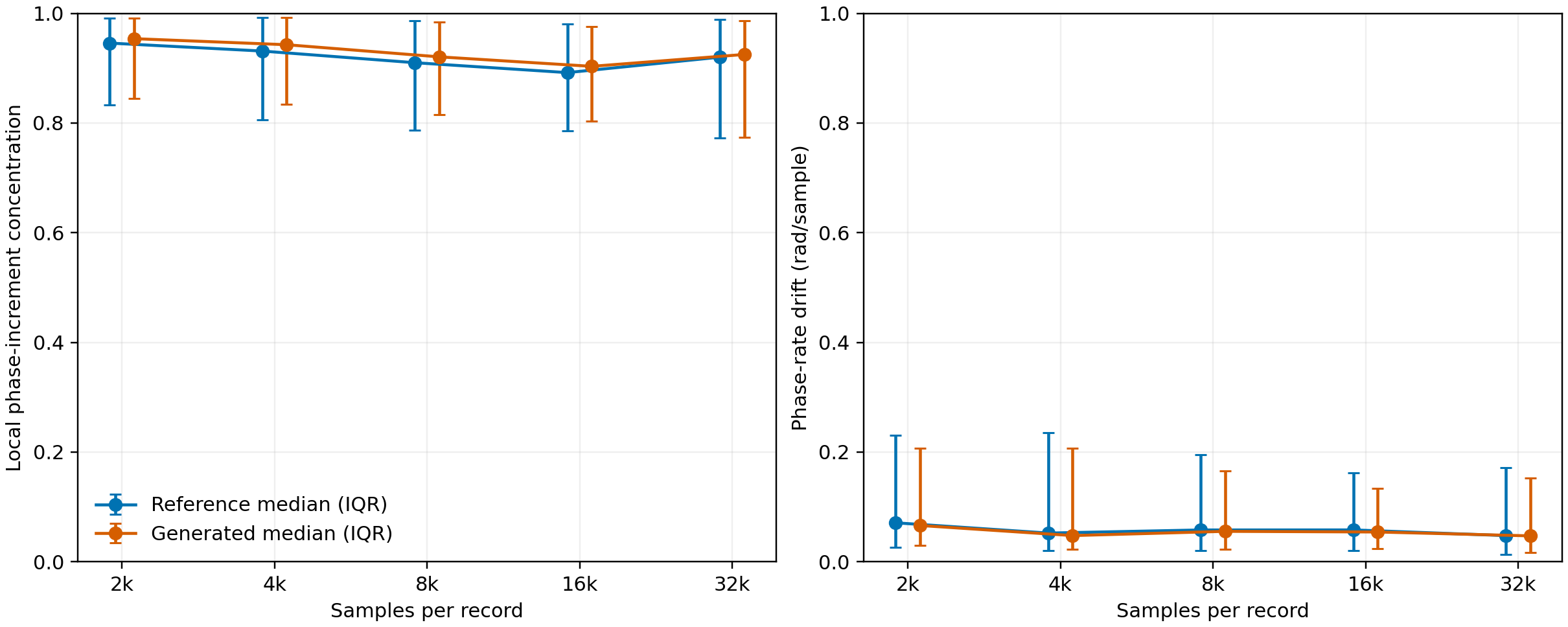}
    \caption{Within-record phase behavior by waveform length. Points show medians and bars show interquartile ranges across records. Left plot measures the consistency of adjacent-sample phase changes within normalized windows; higher is better. Right plot shows phase-rate drift, which measures variation of the window-level phase increment across each record; lower is better.}
    \label{fig:phase-length}
\end{figure}

\begin{figure}[t]
    \centering
    \includegraphics[width=0.95\linewidth]{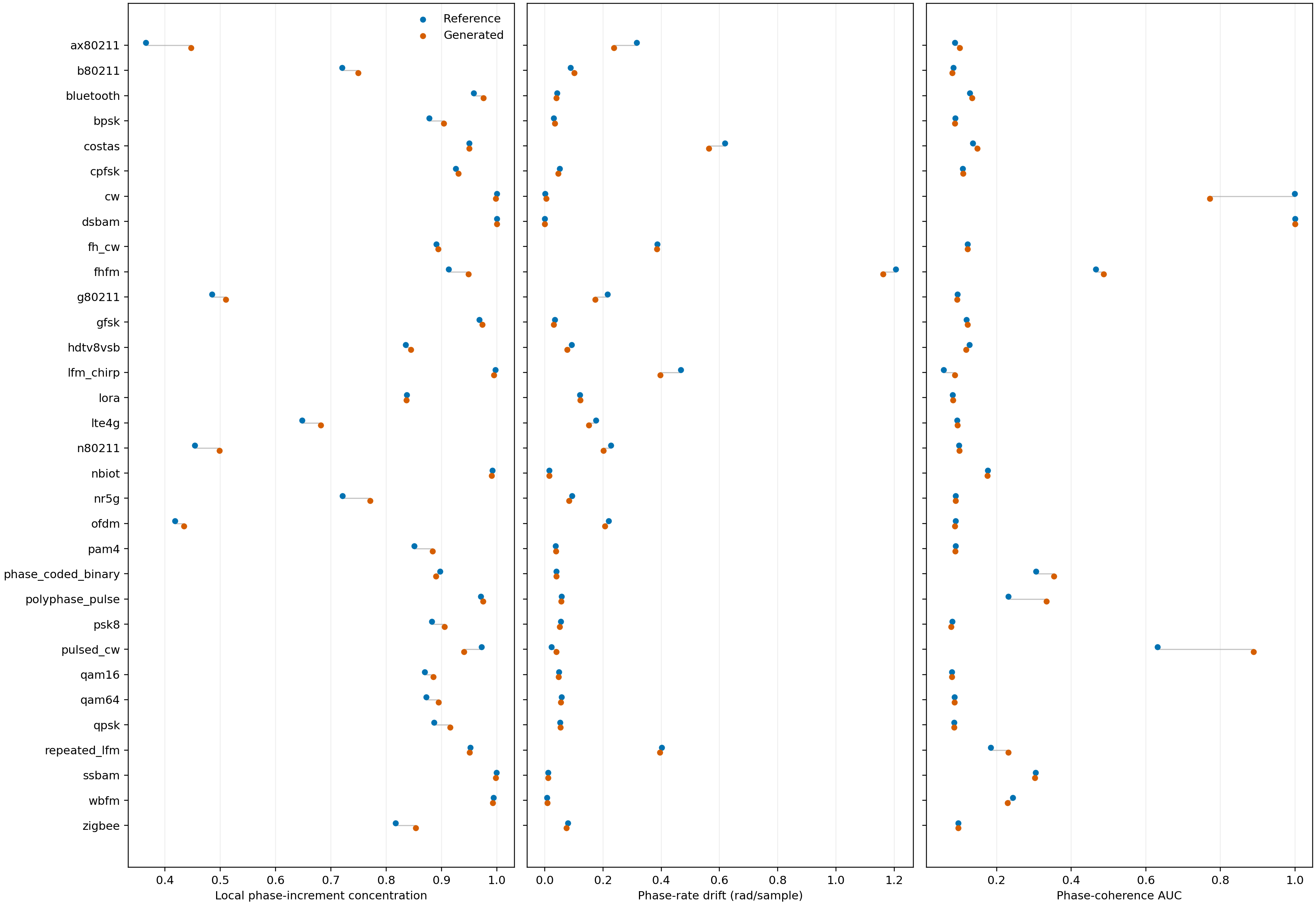}
    \caption{Within-record phase behavior by waveform family. Each point is the median across records; connecting lines pair generated and reference distributions for each waveform. The panels report local phase-increment concentration, phase-rate drift across normalized windows, and phase-only coherence averaged over normalized lags, where closer reference and generated dots means better performance.}
    \label{fig:phase-waveform}
\end{figure}

Phase encodes timing and carrier-related structure essential for coherent wireless reception; even small local phase errors can disrupt demodulation and cross-antenna coherence, making phase preservation a stringent test of long-recording generation. We evaluated whether ALF preserves within-recording phase behavior using 100,000
generated recordings conditioned on a subset of the channel-held-out test-set
conditions and compared them with the corresponding reference recordings. We
retained the 71,673 recordings with stored mean per-antenna pre-combining
$\mathrm{SNR}>10$~dB, as lower SNR substantially reduces phase
interpretability, and treated the 4 antennas within each recording as
repeated views rather than independent observations. For the complex signal at
antenna $a$, $z_a[n]$, we measured adjacent-sample phase increments,
\begin{equation}
\Delta\phi_a[n] = \angle\!\left(z_a[n+1]z_a[n]^*\right),
\end{equation}
excluding comparisons whose magnitude was below 10\% of that antenna's RMS
amplitude. We summarize local phase regularity by the circular concentration of
$\Delta\phi_a[n]$ within 16 normalized windows and summarize variation across
each recording by the circular spread of the window-level mean increments.
Because generated and reference recordings are independent draws, the
evaluation compares their distributions rather than requiring pointwise phase
agreement.

Figure~\ref{fig:phase-length} shows that ALF closely follows the reference
phase behavior for recording lengths from 2k through 32k samples. In
particular, the median phase-rate drift remains small and does not increase
with recording length: it is $0.066$~rad/sample at 2k samples and
$0.047$~rad/sample at 32k samples for generated recordings, compared with
$0.071$ and $0.048$~rad/sample for the reference recordings. Local
phase-increment concentration is similarly stable across recording lengths.
The waveform-class breakdown in Figure~\ref{fig:phase-waveform} shows close
agreement for most waveform families. CW and pulsed CW are the most stringent
cases because their definitions impose unusually regular phase evolution: CW
shows the clearest long-range gap, while pulsed CW has modest differences in
local concentration and drift. These exceptions do not dominate the aggregate
result. Overall, the length-stratified and waveform-class-stratified results
support that ALF produces phase-consistent recordings without systematic
degradation at longer recording lengths.

\subsection{Cross-Antenna Consistency}

We evaluated whether ALF preserves array-level structure using the same 100K
channel-held-out conditioning set as in
Section~\ref{sec:within-record-phase}. We again retained the 71,673 recordings with
stored mean per-antenna pre-combining $\mathrm{SNR}>10$~dB and compared each
generated 4-antenna recording with its corresponding reference
recording under the same metadata. This test probes a property often omitted
from RF-generation evaluations: realistic marginal antenna signals alone do
not establish that a model has learned the shared timing and relative-phase
structure of a multi-antenna recording.

For an adjacent antenna pair $(a,a+1)$, we estimate the relative delay from
the peak of the overlap-normalized, demeaned complex cross-correlation,
\begin{equation}
\hat{\tau}_{a,a+1}
=
\arg\max_{\ell\in[-16,16]}
\left|
\frac{\sum_{n\in\mathcal{I}_{\ell}}
\tilde z_a[n+\ell]\tilde z_{a+1}^{*}[n]}
{\sqrt{
\sum_{n\in\mathcal{I}_{\ell}}|\tilde z_a[n+\ell]|^2
\sum_{n\in\mathcal{I}_{\ell}}|\tilde z_{a+1}[n]|^2
}}
\right|,
\end{equation}
where $\tilde z_a$ is the mean-removed complex signal at antenna $a$ and
$\mathcal{I}_{\ell}$ contains the valid sample indices at lag $\ell$. We report
the within-recording adjacent-pair spread,
\begin{equation}
\Delta_\tau =
\max_{a\in\{1,2,3\}}\hat{\tau}_{a,a+1}
-
\min_{a\in\{1,2,3\}}\hat{\tau}_{a,a+1}.
\end{equation}
A small $\Delta_\tau$ indicates that all adjacent pairs share the same bulk
timing relationship. We restrict this measure to recordings with identifiable
correlation peaks, which includes approximately 89\% of both generated and
reference recordings.

To assess temporal relative-phase stability, we divide each recording into 16
normalized windows, estimate the phase of the complex adjacent-pair correlation
in each window, unwrap the resulting phase trajectory, and fit its weighted
slope $\hat{\beta}_{a,a+1}$. The reported recording-level phase drift is
\begin{equation}
D_\phi =
\max_{a\in\{1,2,3\}}
\frac{|\hat{\beta}_{a,a+1}|(N_{\mathrm{samples}}-1)}{2\pi},
\end{equation}
measured in cycles over the full recording. This metric tests for
antenna-specific frequency offsets or time-varying relative phase; it does not
require equal static phase across antennas, since spatial separation naturally
produces antenna-dependent path phases.

Figure~\ref{fig:cross-antenna-consistency} shows that ALF preserves tightly
shared adjacent-pair timing across all recording lengths: the median delay
spread is zero samples in every length bin, with generated interquartile ranges
no larger than one sample. Relative phase drift is also low and does not worsen
for longer recordings. ALF has modestly larger drift than the reference
distribution---from $0.023$ cycles at 2k samples to $0.0055$ cycles at 32k
samples, versus $0.0105$ and $0.0027$ cycles for the reference
recordings---but these values remain small over each complete recording. Thus,
ALF learns the principal shared-timing and stable relative-phase behavior of
the receive array.

The effective-rank results qualify this conclusion. ALF has a slightly lower
spatial-covariance effective rank, particularly in the upper tail, which
indicates modest compression of antenna-specific multipath diversity and is
consistent with increased pairwise correlation. Thus, ALF does not perfectly
reproduce every array statistic. Nevertheless, the timing and phase-drift
results show that generated antenna streams remain coherently related over long
recordings, which is the central array-level property evaluated here.

\begin{figure}[H]
    \centering
    \includegraphics[width=\linewidth]{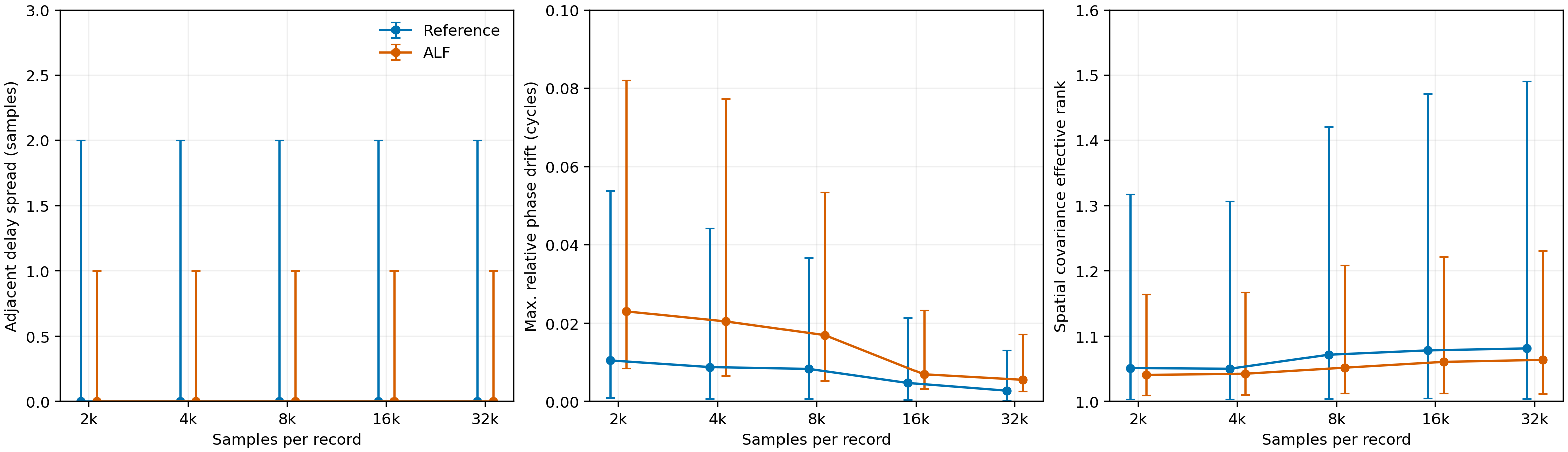}
    \caption{Cross-antenna consistency by recording length. Points show medians and bars show interquartile ranges. Adjacent delay spread measures agreement among the
    three adjacent-pair correlation-peak lags; relative phase drift measures
    the largest change in adjacent-pair correlation phase over a recording; and
    effective rank summarizes spatial-covariance diversity.}
    \label{fig:cross-antenna-consistency}
\end{figure}

\newpage
\section{Autoregressive Post-Training}
\label{ap:ar-gen}
ALF is trained to sample a recording's $P$ patch tokens jointly up to $\text{max}(P)=32$. This equates to a maximum generation length of 32,768 samples long. Although we have shown that these generation lengths are achieved with SOTA quality, additional downstream tasks like integrated sensing will require data that is orders of magnitude longer. Thus, we extend ALF to conditional suffix generation, $p(x_{k:P-1}\mid x_{0:k-1})$, through post-training a secondary mode. Not only can we demonstrate wireless signal generation of arbitrary length, but we can also exhibit custom label-driven distribution shifts (Figure~\ref{fig:ar-plot}). 

Other diffusion models similarly extend inference with an autoregressive loop. Google's DiffusionGemma denoises a canvas, writes a copy to a context window, and then generates the next block with added conditioning \citep{diffusiongemma}. ALF uses the same high-level windowed conditioning approach, but maintains continuity by explicitly using the last $k$ patches and their clean latent representations as anchors to generate the following $P-k$ patches based on their labels. This method preserves the RF signal structure, resulting in smoother transitions. 

\begin{figure}[H]
    \centering
    \includegraphics[width=1.0\linewidth]{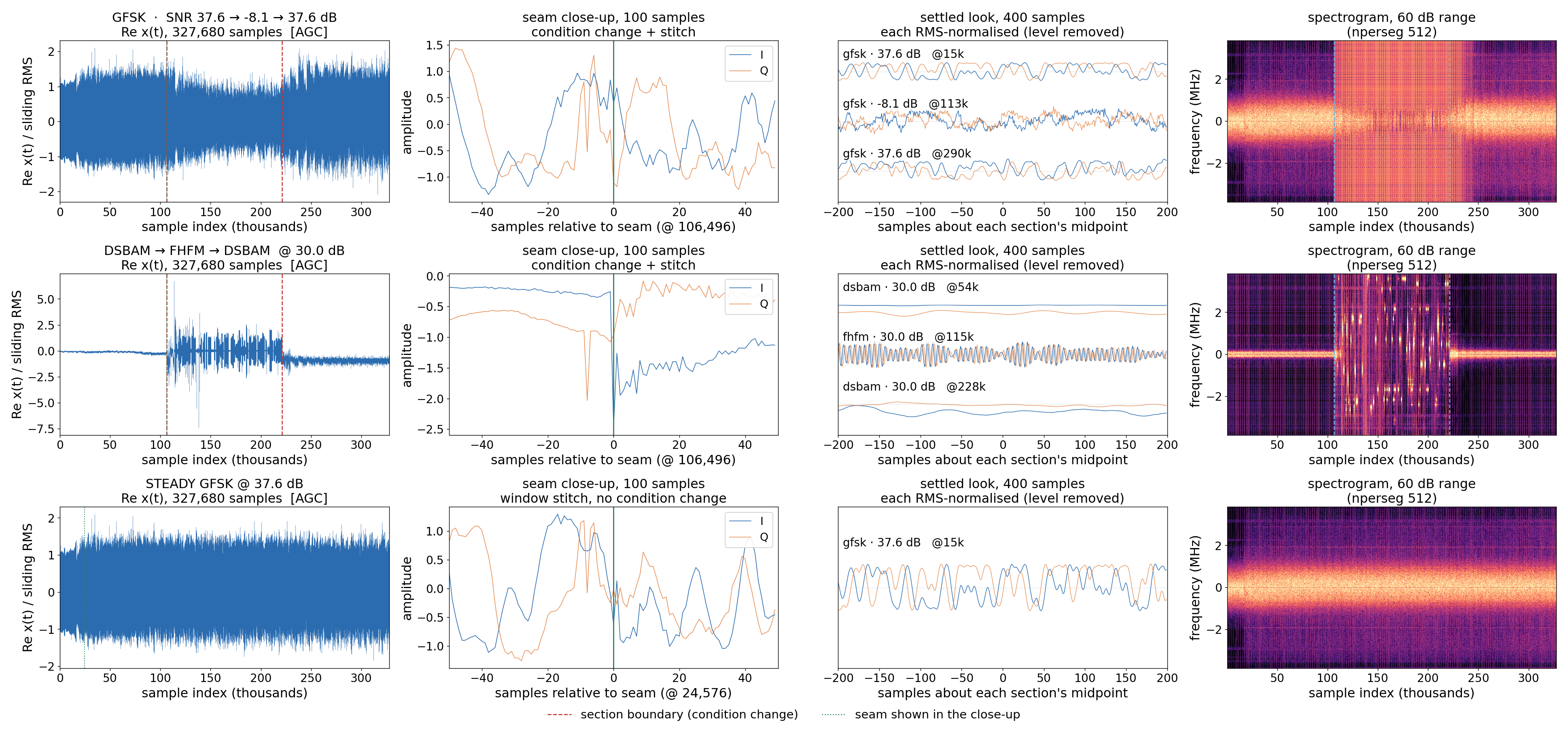}
    \vspace{-10pt}
    \caption{Autoregressive adaptation of AL generating 327,680-length samples. \textbf{Top row:} a pulsed CW signal transitions from $44.9$ to $-13.8$dB in SNR and back (noise shift). \textbf{Middle row:} Bluetooth signal transitions to repeated LFM and back (waveform shift). \textbf{Bottom row:} steady pulsed CW signal at $44.9$ dB provides a no-transition control.}
    \label{fig:ar-plot}
\end{figure}
\vspace{-10pt}
 We selected $P=16, k=8$ for post-trained ALF from a warm-start checkpoint seen in Figure~\ref{fig:ar-plot}. Training is split into two sections. \textbf{Mode~0} keeps the same anchored flow matching objective $\mathcal{L}_{\text{Stage 2}}$ described in section~\ref{subsec:anchoredflow}. \textbf{Mode~1} continuity training is applied to the first half of each length group, using independent noise times and prefix lengths. The sampled prefix is then assigned a zero velocity target, effectively masking its flow while still being used for mixing. The total objective combines flow matching over unmasked suffix tokens, a zero-velocity pin loss on prefix tokens, and an edge loss on the first token after the prefix. 
 \begin{align}
\mathcal{L}_{\mathrm{free}}
&=
\frac{
    \sum_{b,i:\,m_{b,i}=0}
    \left\lVert
        \hat{\mathbf{v}}_{b,i}
        - \mathbf{v}^{\ast c}_{b,i}
    \right\rVert_2^2
}{
    N_{\mathrm{free}}D
},
\qquad\mathcal{L}_{\mathrm{pin}}
=
\frac{
    \sum_{b,i:\,m_{b,i}=1}
    \left\lVert
        \hat{\mathbf{v}}_{b,i}
    \right\rVert_2^2
}{
    N_{\mathrm{given}}D
}, \\
\mathcal{L}_{\mathrm{edge}}
&=
\frac{1}{BD}
\sum_b
\left\lVert
    \hat{\mathbf{v}}_{b,k_b}
    - \mathbf{v}^{\ast c}_{b,k_b}
\right\rVert_2^2,
\qquad
\hat{\mathbf{v}}
=
v_\theta\!\left(\mathbf{x}^{c}_t,t^c,c,\mathbf{m}\right).
\end{align}
\begin{equation}
\mathcal{L}_{\mathrm{cont}}
=
\mathcal{L}_{\mathrm{free}}
+ \lambda_{\text{pin}}\mathcal{L}_{\mathrm{\text{pin}}}
+ \lambda_{\text{edge}}\mathcal{L}_{\mathrm{\text{edge}}}, \qquad \lambda_{\text{pin}}=0.1, \lambda_{\text{edge}}=0.3
\end{equation}
The total post-training loss combines this objective with the original generation loss with unit weight,
\begin{equation}
\mathcal{L}
=
\mathcal{L}_{\mathrm{\text{Stage 2}}}
+
\lambda_{\text{cont}}\mathcal{L}_{\mathrm{\text{cont}}},\qquad \lambda_{\text{cont}}=1
\end{equation}

\newpage
\section{Additional Figures and Tables}
\label{ap:add-fig-tex}


\subsection{Distributional Metrics}
\label{ap:distributional-metrics}
\begin{table*}[h]
\centering
\begin{threeparttable}

\caption{Comparison across PSD and spectral MSE metrics.}
\label{tab:dist-comparison}

\begin{tabular}{@{}llrr@{}}
\toprule
Length & Method & PSD MSE $\downarrow$ & Spec MSE $\downarrow$ \\
\midrule
\multirow[c]{4}{*}{2,048}
& WaveStitch & 158.08 & 379.50 \\
& RF Diffusion$^\dagger$ & 925.4 & 1094.4 \\
& Time Weaver & 44.57 & 122.39 \\
& \textbf{ALF (ours)} & \textbf{20.42} & \textbf{117.36} \\
\midrule
\multirow[c]{3}{*}{4,096}
& WaveStitch & 163.07 & 399.07 \\
& Time Weaver & 46.55 & 128.38 \\
& \textbf{ALF (ours)} & \textbf{17.17} & \textbf{117.35} \\
\midrule
\multirow[c]{3}{*}{8,192}
& WaveStitch & 164.21 & 403.27 \\
& Time Weaver & 42.98 & 120.78 \\
& \textbf{ALF (ours)} & \textbf{15.91} & \textbf{119.77} \\
\midrule
\multirow[c]{3}{*}{16,384}
& WaveStitch & 166.73 & 406.38 \\
& Time Weaver & 41.50 & \textbf{114.90} \\
& \textbf{ALF (ours)} & \textbf{15.79} & 119.00 \\
\midrule
\multirow[c]{3}{*}{32,768}
& WaveStitch & 169.61 & 421.28 \\
& Time Weaver & 44.00 & \textbf{112.91} \\
& \textbf{ALF (ours)} & \textbf{14.96} & 122.18 \\
\midrule
\multirow[c]{3}{*}{\textbf{Avg.}}
& WaveStitch & 164.3400 & 401.9000 \\
& Time Weaver & 43.9200 & 119.8720 \\
& \textbf{ALF (ours)} & \textbf{16.8500} & \textbf{119.1320} \\
\midrule
\multirow[c]{3}{*}{\textbf{ALF Advantage}}
& \textbf{vs.\ WaveStitch}
& \textbf{89.75\%} & \textbf{70.36\%} \\
& \textbf{vs.\ RF Diffusion}$^\dagger$
& \textbf{97.79\%} & \textbf{89.28\%} \\
& \textbf{vs.\ Time Weaver}
& \textbf{61.63\%} & \textbf{0.62\%} \\
\bottomrule
\end{tabular}

\begin{tablenotes}[flushleft]
\footnotesize
\item[] $\dagger$ RF Diffusion could not be trained for records of length greater than 2,048 due to memory limitations; longer generations were not considered. All other models are evaluated on 100k recordings.
\end{tablenotes}

\end{threeparttable}
\end{table*}
Table~\ref{tab:dist-comparison} compares ALF against its competitors using more distributional metrics. Although downstream tasks are a better judgment of quality, we include these to provide a full view of ALF's performance (\citet{baur2025evaluation}). The table shows we are significantly superior in PSD MSE, and overall better in Spectrogram MSE. However, this further validates our claim that distributional alignment is not fine enough of an evaluation metric for highly constrained representations of wireless signals.

\newpage
\subsection{Other Generation Quality Metrics}
\label{ap:add-quality-metrics}
\begin{table*}[h]
\centering
\caption{Effect on quality metrics due to rescaling on energy-linked objectives}
\label{tab:norm-vs-rescaled}

\resizebox{\textwidth}{!}{%
\setlength{\tabcolsep}{3pt}%
\begin{tabular}{@{}ll|cc|cc|cc|cc@{}}
\toprule
& & \multicolumn{2}{c|}{Delay Spread ($k=9$)}
& \multicolumn{2}{c|}{Indoor/Outdoor ($k=2$)}
& \multicolumn{2}{c|}{Urban/Rural ($k=2$)}
& \multicolumn{2}{c}{LOS/NLOS ($k=2$)} \\
\cmidrule(lr){3-4}
\cmidrule(lr){5-6}
\cmidrule(lr){7-8}
\cmidrule(lr){9-10}
Power mode & Method
& TSTR & TRTS
& TSTR & TRTS
& TSTR & TRTS
& TSTR & TRTS \\
\midrule
\multirow[c]{3}{*}{Normalized}
& WaveStitch
& 0.11 & 0.11 & 0.50 & 0.49 & 0.50 & 0.51 & 0.49 & 0.53 \\
& TimeWeaver
& 0.17 & 0.15 & 0.54 & \textbf{0.64} & \textbf{0.57} & 0.54 & 0.56 & 0.61 \\
& ALF
& \textbf{0.19} & \textbf{0.18} & \textbf{0.56} & 0.64 & 0.54 & \textbf{0.56} & \textbf{0.57} & \textbf{0.71} \\
\midrule
\multirow[c]{3}{*}{Rescaled}
& WaveStitch
& 0.11 & 0.12 & 0.50 & 0.74 & 0.50 & 0.55 & 0.56 & 0.75 \\
& TimeWeaver
& 0.20 & 0.16 & 0.74 & 0.73 & \textbf{0.57} & 0.56 & 0.84 & 0.76 \\
& ALF
& \textbf{0.22} & \textbf{0.20} & \textbf{0.75} & \textbf{0.74} & 0.56 & \textbf{0.57} & \textbf{0.85} & \textbf{0.79} \\
\midrule
\multirow[c]{3}{*}{\shortstack{Rescaling\\Gain ($\Delta$)}}
& WaveStitch
& 0.00 & 0.01 & $-0.01$ & 0.24 & 0.00 & 0.04 & 0.06 & 0.22 \\
& TimeWeaver
& 0.04 & 0.01 & 0.20 & 0.09 & $-0.00$ & 0.02 & 0.28 & 0.15 \\
& ALF
& 0.04 & 0.02 & 0.19 & 0.10 & 0.02 & 0.00 & 0.28 & 0.08 \\
\midrule
\multirow[c]{3}{*}{\shortstack{Above Chance\\($\Delta_{\mathrm{chance}}$)}}
& WaveStitch
& 0.00 & 0.01 & 0.00 & \textbf{0.24} & 0.00 & 0.05 & 0.06 & 0.25 \\
& TimeWeaver
& 0.09 & 0.05 & 0.24 & 0.23 & \textbf{0.07} & 0.06 & 0.34 & 0.26 \\
& ALF
& \textbf{0.11} & \textbf{0.09} & \textbf{0.25} & \textbf{0.24}
& 0.06 & \textbf{0.07} & \textbf{0.35} & \textbf{0.29} \\
\bottomrule
\end{tabular}%
}

\vspace{2pt}
\parbox{\textwidth}{\footnotesize\emph{Note:} $k$ = number of classes; pooled = mean over lengths 2,048--32,768. $\Delta$ = Rescaled minus Normalized. $\Delta_{\mathrm{chance}}$ = Rescaled minus chance accuracy, where chance is $1/k$. LOS = Line-of-Sight; NLOS = No-Line-of-Sight.}
\end{table*}
Table~\ref{tab:norm-vs-rescaled} shows the rest of the classification objectives unable to fit into the main paper (Table~\ref{tab:tstr-headline}, $\S$~\ref{sec:results}). These tasks show lower utility performance than Time Weaver; however, this is due to their energy-based nature. Both models output normalized data that lacks attenuation, which makes scene classifications (Indoor/Outdoor, LOS/NLOS) far easier to detect than shape alone. After rescaling each recording to match its ground-truth reference from each respective class, we see classification performance equalize across WaveStitch, Time Weaver, and ALF for Indoor/Outdoor and LOS/NLOS tasks. The other two objectives have accuracies barely over their chance probabilities. These noise readings indicate these as more difficult representations.

\end{document}